\newif\ifshortappendix
\definecolor{Dark Ruby Red}{HTML}{6a211f}
\definecolor{Dark Blue Sapphire}{HTML}{164a59}
\definecolor{Dark Gamboge}{HTML}{be7c00}
\definecolor{Desire}{HTML}{eb3b5a} %
\definecolor{Boyzone}{HTML}{2d98da} %
\definecolor{Royal Blue}{HTML}{3867d6} %
\definecolor{NYC Taxi}{HTML}{f7b731} %
\definecolor{Algal Fuel}{HTML}{20bf6b} %
\definecolor{Innuendo}{HTML}{a5b1c2} %
\definecolor{Twinkle Blue}{HTML}{d1d8e0} %
\definecolor{Gloomy Purple}{HTML}{8854d0} %
\colorlet{cBlue}{Royal Blue}
\colorlet{cYellow}{NYC Taxi}
\colorlet{cGreen}{Algal Fuel}
\colorlet{cRed}{Desire}
\colorlet{cGrey}{Innuendo}
\colorlet{cLightGrey}{Twinkle Blue}
\colorlet{cPurple}{Gloomy Purple}
\newrobustcmd\decisionproblem[3]{
	\begin{center}\AP
	\fbox{\begin{tabular}{rl}
	\multicolumn{2}{l}{#1:} \\
	{\emph{Input}}: & \parbox[t]{.8\linewidth}{#2} \\   
	{\emph{Question}}: & \parbox[t]{.8\linewidth}{#3}
	\end{tabular}} 
	\end{center}
}
\definecolor{green}{RGB}{0,120,0}
\definecolor{hlyellow}{RGB}{250, 250, 190}
\definecolor{diegoeditcolor}{RGB}{210,210,255}
\definecolor{remieditcolor}{RGB}{210,255,210}
\definecolor{migueleditcolor}{RGB}{255,204,153}
\newcommand{\sidediego}[1]{}
\newcommand{\sideremi}[1]{}
\newcommand{\remi}[1]{}
\newcommand{\diego}[1]{}
\newcommand{\miguel}[1]{}
\newcommand{\sidemiguel}[1]{}
\definecolor{light-gray}{gray}{0.9}
\knowledgenewrobustcmd{\eqseg}[1][\gamma]{\mathrel{\cmdkl{\sim_{#1}}}}
\knowledgenewrobustcmd{\neqseg}[1][\gamma]{\mathrel{\cmdkl{\not\sim_{#1}}}}
\newcounter{sarrow}
\newcommand\xrsquigarrow[1]{%
\stepcounter{sarrow}%
\mathrel{\begin{tikzpicture}[baseline= {( $ (current bounding box.south)$ )}]
\node[inner sep=.5ex] (\thesarrow) {$\scriptstyle #1$};
\path[draw,<-,decorate,
  decoration={zigzag,amplitude=0.7pt,segment length=1.2mm,pre=lineto,pre length=4pt}] 
    (\thesarrow.south east) -- (\thesarrow.south west);
\end{tikzpicture}}%
}
\newcommand{\proofcase}[1]{\adforn{39}~\emph{#1}~}
\newcommand{\proofsubcase}[1]{\adforn{39}\adforn{39}~\emph{#1}~}
\knowledgenewrobustcmd{\vars}{\cmdkl{\textit{vars}}} %
\newcommand{\tup}[1]{\langle #1 \rangle}
\newcommand{\anexpansion}{\xi}
\knowledgenewrobustcmd\subaut[3]{#1\cmdkl{[#2,#3]}}
\newcommand{\orig}{\textit{orig}}
\newcommand{\cont}{\textit{contr}}
\newcommand{\changes}[1]{\color{black}#1\color{black}}
\RenewDocumentCommand\withkl{mm}{
\int_gincr:N\knowledge_inner_modifier_count_int
\cs_gset:cpx
{\knowledge_inner_command:}
{\exp_not:N\cs_gset:Npn
\exp_not:c{\knowledge_inner_command:}
{\knowledge_inner_modifer_re_tl\knowledge_kl_modifiers_tl\exp_not:n{#1}}
\knowledge_kl_modifiers_tl\exp_not:n{#1}}
\knowledge_kl_modifiers_reset:
#2
\int_gdecr:N\knowledge_inner_modifier_count_int
}
\newrobustcmd{\A}{\mathbb{A}}
\newrobustcmd{\B}{\mathbb{B}}
\newrobustcmd{\N}{\mathbb{N}}
\newrobustcmd{\Np}{\mathbb{N}_{>0}}
\newrobustcmd{\Z}{\mathbb{Z}}
\renewcommand{\phi}{\varphi}
\renewcommand{\leq}{\leqslant}
\renewcommand{\geq}{\geqslant}
\renewcommand{\emptyset}{\varnothing}
\newcommand{\set}[1]{\{#1\}}
\newrobustcmd{\equivclass}[2][]{[#2]^{#1}}
\newrobustcmd{\defeq}{\mathrel{\hat{=}}}
\newrobustcmd{\fun}{f}
\knowledgenewrobustcmd\vertex[1]{\cmdkl{V}(#1)}
\knowledgenewrobustcmd\edges[1]{\cmdkl{E}(#1)}
\knowledgenewrobustcmd{\core}{\mathop{\cmdkl{\textrm{core}}}}
\knowledgenewrobustcmd\qvar{\footnotesize\bullet} %
\knowledgenewrobustcmd{\contained}{\mathrel{\cmdkl{\subseteqq}}}
\knowledgenewrobustcmd{\semequiv}{\mathrel{\cmdkl{\equiv}}}
\newrobustcmd\smashxrightarrow[1]{%
  \raisebox{-.04em}{$%
    \xrightarrow{\smash{\raisebox{-.1em}{%
      \tiny{#1}%
    }}}%
  $}%
}%
\newrobustcmd\smashxleftarrow[1]{%
  \raisebox{-.04em}{$%
    \xleftarrow{\smash{\raisebox{-.1em}{%
      \tiny{#1}%
    }}}%
  $}%
}%
\knowledgenewrobustcmd{\atom}[1]{\,\cmdkl{\smashxrightarrow{\ensuremath{#1}}}\,}
\knowledgenewrobustcmd{\coatom}[1]{\,\kl[\atom]{\smashxleftarrow{\ensuremath{#1}}}\,}
\knowledgenewrobustcmd{\homto}{\mathrel{\cmdkl{\smashxrightarrow{hom}}}}
\newrobustcmd{\cohomto}{\mathrel{\kl[\homto]{\smashxleftarrow{hom}}}}
\newrobustcmd{\nothomto}{\mathrel{\kl[\homto]{\centernot{\smashxrightarrow{hom}}}}}
\newrobustcmd{\notcohomto}{\mathrel{\kl[\homto]{\centernot{\smashxleftarrow{hom}}}}}
\newcommand{\xrightarrowdbl}[2][]{%
  \xrightarrow[#1]{#2}\mathrel{\mkern-14mu}\rightarrow
}
\knowledgenewrobustcmd{\surjto}{\mathrel{\cmdkl{\xrightarrowdbl{\smash{\textit{\tiny hom}}}}}}
\knowledgenewrobustcmd{\atompath}[1]{\,\xrsquigarrow{\,\smash{#1}\,}\,}
\knowledgenewrobustcmd{\atoms}{\cmdkl{\textit{atoms}}}
\knowledgenewrobustcmd{\size}[2][]{\cmdkl{\|}#2\cmdkl{\|^{#1}}}
\knowledgenewrobustcmd{\nbvar}[2][]{\cmdkl{\|}#2\cmdkl{\|^{#1}_{\textrm{var}}}}
\knowledgenewrobustcmd{\nbatoms}[2][]{\cmdkl{\|}#2\cmdkl{\|^{#1}_{\textrm{at}}}}
\knowledgenewrobustcmd{\nbseg}[2][]{\cmdkl{\|}#2\cmdkl{\|^{#1}_{\textrm{seg}}}}
\knowledgenewrobustcmd{\nbsegmentsOw}[2][]{\cmdkl{\|}#2\cmdkl{\|^{#1}_{\textrm{1seg}}}}
\knowledgenewrobustcmd{\nbsegmentsTw}[2][]{\cmdkl{\|}#2\cmdkl{\|^{#1}_{\textrm{2seg}}}}
\knowledgenewrobustcmd{\contr}[1]{\cmdkl{#1^{\textsf{ctr}}}}
\knowledgenewrobustcmd{\contrhom}[1]{\cmdkl{#1^{\textsf{ctr}}}}
\knowledgenewrobustcmd{\cdb}{
  \mathrel{\cmdkl{%
    \vDash^{\star}
  }}
}
\knowledgenewrobustcmd{\seggraph}{\mathop{\cmdkl{\+{S\!G}}}}
\newrobustcmd{\weakequiv}[1]{
  \mathrel{\withkl{\kl[\weakequiv]}{\cmdkl{%
    \sim_{#1}
  }}}
}
\knowledge{\weakequiv}{notion}
\newrobustcmd{\true}{\textit{tt}}
\newrobustcmd{\false}{\textit{ff}}
\knowledgenewrobustcmd{\Unfold}{\mathop{\cmdkl{\+U}}}
\def\ifempty#1{\ifx\empty#1\empty} %
\knowledgenewrobustcmd{\CQ}[1][]{\ifempty{#1}\textnormal{"CQ"}\else\cmdkl{\textnormal{CQ}(#1)}\fi}
\knowledgenewrobustcmd{\infUCRPQ}[1][]{\ifempty{#1}\cmdkl{\textnormal{UCRPQ}^\infty}\else\cmdkl{\textnormal{UCRPQ}^\infty(#1)}\fi}
\knowledgenewrobustcmd{\CRPQ}[1][]{\ifempty{#1}\textnormal{"CRPQ"}\else\cmdkl{\textnormal{CRPQ}(#1)}\fi}
\knowledgenewrobustcmd{\UCRPQ}[1][]{\ifempty{#1}\textnormal{"UCRPQ"}\else\cmdkl{\textnormal{UCRPQ}(#1)}\fi}
\knowledgenewrobustcmd{\underlying}[1]{\cmdkl{G_{#1}}}
\knowledgenewrobustcmd{\Exp}[1][]{\cmdkl{\textnormal{Exp}^{\smash{#1}}}}
\knowledgenewrobustcmd{\Refin}[1][]{\cmdkl{\textnormal{Ref}^{\smash{#1}}}}
\knowledgenewrobustcmd{\AppInf}[2]{\cmdkl{\textup{App}_{#2}^{\infty}(#1)}}
\knowledgenewrobustcmd{\App}[3]{\cmdkl{\textup{App}_{#2}^{\smash{\leq #3}}(#1)}}
\knowledgenewrobustcmd{\type}[2]{\cmdkl{\textup{type}_{#1}(#2)}}
\knowledgenewrobustcmd{\shrink}[1]{\mathop{\cmdkl{S_{#1}}}}
\knowledgenewrobustcmd{\classCRPQ}{\cmdkl{\+Q}}
\newrobustcmd{\marking}{\triangledown}
\newcommand\incircbin
\newcommand\@incircbin[2]
\knowledgenewrobustcmd{\disconj}{\mathbin{\cmdkl{\incircbin{\land}}}}
\knowledgenewrobustcmd{\erasingmorphism}[2][]{\cmdkl{\phi^{#1}_{-#2}}}
\knowledgenewrobustcmd{\autoclass}[1]{\cmdkl{\+A_{#1}}}
\knowledgenewrobustcmd{\InclusionPb}[1][\autoclass{\classCRPQ}]{\cmdkl{\textsc{NFA Inclusion}(#1)}}
\knowledgenewrobustcmd{\atomneighbourhoodin}[2]{\cmdkl{\+N^{\smash{\textrm{in}}}_{\smash{#2}}(#1)}}
\knowledgenewrobustcmd{\atomneighbourhoodout}[2]{\cmdkl{\+N^{\smash{\textrm{out}}}_{\smash{#2}}(#1)}}
\knowledgenewrobustcmd{\edgeneighbourhoodin}[2]{\cmdkl{\+N^{\smash{\textrm{in}}}_{\smash{#2}}(#1)}}
\knowledgenewrobustcmd{\edgeneighbourhoodout}[2]{\cmdkl{\+N^{\smash{\textrm{out}}}_{\smash{#2}}(#1)}}
\knowledgenewrobustcmd{\alphabetmarking}{\cmdkl{\mathbb{M}}}
\knowledgenewrobustcmd{\axiomsCanon}{\cmdkl{\textup{(\textsf{Cnz})$_*$}}}
\knowledgenewrobustcmd{\axiomCanonMonotonicity}{\cmdkl{\textup{(\textsf{Cnz})$_\textsf{monotonic}$}}}
\knowledgenewrobustcmd{\axiomCanonContracted}{\cmdkl{\textup{(\textsf{Cnz})$_\textsf{contracted}$}}}
\knowledgenewrobustcmd{\axiomCanonCore}{\cmdkl{\textup{(\textsf{Cnz})$_\textsf{str-onto}$}}}
\knowledgenewrobustcmd{\axiomCanonNonRed}{\cmdkl{\textup{(\textsf{Cnz})$_\textsf{non-red}$}}}
\knowledgenewrobustcmd{\axiomCanonContainment}{\cmdkl{\textup{(\textsf{Cnz})$_\textsf{containment}$}}}
\knowledgenewrobustcmd{\axiomCanonMarking}{\cmdkl{\textup{(\textsf{Cnz})$_\textsf{marking}$}}}
\knowledgenewrobustcmd{\axiomStrongCanonCore}{\cmdkl{\textup{(\textsf{SCnz})$_\textsf{str-onto}$}}}
\knowledgenewrobustcmd{\axiomVarMarkingLoop}{\cmdkl{\textup{(\textsf{VM})$_\textsf{loop}$}}}
\knowledgenewrobustcmd{\axiomVarMarkingOut}{\cmdkl{\textup{(\textsf{VM})$_\textsf{out}$}}}
\knowledgenewrobustcmd{\axiomVarMarkingIn}{\cmdkl{\textup{(\textsf{VM})$_\textsf{in}$}}}
\knowledgenewrobustcmd{\Encoding}{\mathop{\cmdkl{\textrm{Enc}}}}
  \theoremstyle{acmdefinition}
  \newtheorem{remark}[theorem]{Remark}\crefname{remark}{Remark}{Remarks}
  \newtheorem{claim}[theorem]{Claim}\crefname{claim}{Claim}{Claims}
  \newtheorem{fact}[theorem]{Fact}\crefname{fact}{Fact}{Facts}
  \crefname{open}{Open question}{Open questions}
    \crefname{proposition}{proposition}{propositions}
    \Crefname{proposition}{Proposition}{Propositions}
    \crefname{thm}{theorem}{theorems}
    \Crefname{thm}{Theorem}{Theorems}
\begin{document}
\title{Minimizing Conjunctive Regular Path Queries}

\author{Diego Figueira}
\email{diego.figueira@cnrs.fr}
\orcid{0000-0003-0114-2257}
\affiliation{%
  \institution{Univ. Bordeaux, CNRS,  Bordeaux INP, LaBRI, UMR 5800}
  \city{F-33400, Talence}
  \country{France}
}
\author{Rémi Morvan}
\email{remi.morvan@u-bordeaux.fr}
\orcid{0000-0002-1418-3405}
\affiliation{%
  \institution{Univ. Bordeaux, CNRS,  Bordeaux INP, LaBRI, UMR 5800}
  \city{F-33400, Talence}
  \country{France}
}
\author{Miguel Romero}
\email{mgromero@uc.cl}
\orcid{0000-0002-2615-6455} %
\affiliation{%
  \institution{Dept.\ of Computer Science, Universidad Católica de Chile \& CENIA} %
  \city{Santiago}
  \country{Chile}
}

\begin{abstract}
  We study the \emph{minimization problem} for Conjunctive Regular Path Queries (CRPQs) and unions of CRPQs (UCRPQs).
This is the problem of checking, given a query and a number $k$, whether the query is  equivalent to one of size at most $k$. 
For CRPQs we consider the size to be the number of atoms, and for UCRPQs the maximum number of atoms in a CRPQ therein, motivated by the fact that the number of atoms has a leading influence on the cost of query evaluation.

We show that the minimization problem is decidable, both for CRPQs and UCRPQs.
We provide a 2ExpSpace upper-bound for CRPQ minimization, based on a brute-force enumeration algorithm, and an ExpSpace lower-bound. For UCRPQs, we show that the problem is ExpSpace-complete, having thus the same complexity as the classical containment problem. The upper bound is obtained by defining and computing a notion of maximal under-approximation.
Moreover, we show that for UCRPQs using the so-called \emph{simple regular expressions} consisting of concatenations of expressions of the form $a^+$ or $a_1 + \dotsb + a_k$, the minimization problem becomes "PiP2"-complete, again matching the complexity of containment.

\end{abstract}

\ccsdesc[500]{Information systems~Query languages}
\ccsdesc[500]{Information systems~Query reformulation}

\keywords{Regular Path Queries, Minimization, CRPQ, UCRPQ, graph databases, simple regular expressions}

\maketitle
\noindent
\raisebox{-.4ex}{\HandRight}\hspace{.2cm}This pdf contains internal links: clicking on a "notion@@notice" leads to its \AP ""definition@@notice"".%

\ifshortappendix
\noindent\raisebox{-.4ex}{\HandRight}\hspace{.2cm}A long version of this paper with all proofs is available at \url{https://arxiv.org/abs/XXXXXXX}
\else
\noindent\raisebox{-.4ex}{\HandRight}\hspace{.2cm}This article is based on a PODS'25 paper \cite{thispaper}. %
\fi

\section{Introduction}
\AP\label{sec:intro}
 Conjunctive Regular Path Queries (CRPQs) and unions of CRPQs (UCRPQs) form the backbone of graph database query languages, including the new ISO standard Graph Query Language (GQL) \cite{isoGQL} and the SQL extension for querying graph-structured data SQL/PGQ \cite{isoPGQ} (see also \cite{DBLP:conf/icdt/FrancisGGLMMMPR23,DBLP:conf/pods/FrancisGGLMMMPR23}).
 These extend the well-known classes of Conjunctive Queries (CQs) and unions of CQs (UCQs), with the ability to reason about paths in a graph. 
  Optimizing and understanding the fundamental properties of such queries has then become a major topic in graph database theory.

  Static optimization for CRPQs has received considerable attention. The basic study of containment and equivalence problems for CRPQs, possibly with unions and inverses, was initiated over 25 years ago \cite{Florescu:CRPQ,four-italians}, where they were shown to be "ExpSpace"-complete. These problems have also been investigated under different scenarios: restrictions on the shape of queries \cite{figueira_containment_2020}, restrictions on their regular languages \cite{FigueiraGKMNT20}, alternative semantics~\cite{FigueiraRomero23}, or under schema information~\cite{GGIM-kr22,GGIM-pods24}.
  This has enabled the study of more advanced static analysis problems motivated by the following general question: \emph{Can a given query be equivalently rewritten as one from a target fragment (which enjoys desirable properties)?}
        In the literature the problem has been studied where the target fragment are queries which either (i) avoid having infinite languages, or (ii) have a tree-like structure. This gives rise to the so-called (i) \emph{boundedness problem} for CRPQs ("ie", whether a CRPQ is equivalent to a UCQ) \cite{BarceloF019,FigueiraAnanthaAl24}, and (ii) \emph{semantic treewidth problem} for CRPQs ("ie", whether a CRPQ is equivalent to one of a given treewidth) \cite{BarceloRV16,FM2023semantic,FeierGM24}.

  \paragraph{Minimization of queries.}
  Minimization -- that is, the problem of transforming a query into a strictly smaller equivalent query -- is perhaps the most fundamental query optimization question. This problem corresponds to the question posed above, where the target fragment are queries of bounded sizes. 
  For CQs (and UCQs), minimization is well understood, and there exists a canonical unique minimal query, called the \emph{core}.
  The mechanism for obtaining such minimal query is simple: eliminate any atom from the query that results in an equivalent query ("ie", any atom which is `redundant'  in the sense of equivalence).
In contrast, minimization of CRPQs is poorly understood from a theoretical perspective. In this case, the situation is more challenging: there is no natural notion of `core’, and it is not clear whether a notion of `canonical’ smallest query may even be possible. In particular, eliminating redundant atoms of a CRPQ as done for CQs, in general results in a query which is neither minimal nor canonical.%

  In this paper we study the "minimization problem" for CRPQs and UCRPQs. In the case of CRPQs, we aim at minimizing the number of atoms of a CRPQ, and hence we formulate the problem as follows ($\semequiv$ denotes query equivalence, \changes{"ie", the fact that the queries output the same answer for all databases}):\footnote{\changes{Note that we can always assume $k$ to be smaller than the number of "atoms" of the input query, since otherwise the instance of the "minimization problem" is trivially solvable by answering `yes'. So, whether $k$ is given in unary or binary does not affect the size of the input.}}
 
\decisionproblem{""Minimization problem for CRPQs@Minimization problem""}
{A finite alphabet $\A$, a "CRPQ" $\gamma$ over $\A$ and $k \in \N$.}
{Is there a "CRPQ" $\delta$ over $\A$ with at most $k$ "atoms" 
such that $\gamma \semequiv \delta$?}
\medskip

On the other hand, in the case of UCRPQs, we minimize the maximum number of atoms of the CRPQs participating in a UCRPQ:
  
\decisionproblem{\reintro[Minimization problem]{Minimization problem for UCRPQs}}
{A finite alphabet $\A$, a "UCRPQ" $\Gamma$ over $\A$ and $k \in \N$.}
{Is there a "UCRPQ" $\Delta$ over $\A$ whose every "CRPQ" has at most $k$ "atoms" 
"st" $\Gamma \semequiv \Delta$?}
  \medskip

  Observe that the minimization problem for CRPQs and UCRPQs are two different problems: an algorithm for the minimization problem for UCRPQs (where the equivalent query may have unions) in principle does not imply any bound on the minimization for CRPQs (where we insist in being only one CRPQ).

  \paragraph{Contributions}
  We investigate the minimization problem for CRPQs and UCRPQs, and present several fundamental results. More concretely:
  \begin{itemize}
    \item We show that the minimization problem for CRPQs and UCRPQs are both decidable. As explained before, these are different problems  and we give two very different algorithms. Contrary as what happens for CQs, minimizing a CRPQ by unions of CRPQs may result in smaller queries, hence in a sense UCRPQ minimization may be seen as a strictly more powerful approach (\Cref{prop:unionsmatter}).
    \item For the "minimization of CRPQs", the algorithm is essentially by brute-force. By carefully bounding the sizes of the automata involved, we show that the algorithm can be implemented in "2ExpSpace" in \Cref{thm:2expspace-min-crpqs}. We also show an "ExpSpace"-hard lower bound in \Cref{thm:minimization-lowerbound}, leaving an exponential gap.
    \item For the minimization of UCRPQs we can apply a more elegant solution, and in fact we show how to compute `maximal under-approximations' of a query by UCRPQs of a given size (\Cref{lemma:approximation-for-finclass}). The minimization then follows by testing whether the given query is equivalent to its approximation of size $k$, yielding an "ExpSpace" upper bound (\Cref{coro:upperbound-ucrpqs}), which is tight with the lower bound (\Cref{coro:lowerbounds}).
    \item We consider subclasses of UCRPQs restricted to some commonly used regular expressions as observed in practice, namely, the so-called \AP""simple regular expressions"" (or \reintro{SRE}). These are concatenations of expressions of the form
      (i) $a^+$ for some letter over the alphabet $a \in \A$, or 
      (ii) $a_1 + \dotsb + a_m$ for some $a_1, \dotsc, a_m \in \A$.\footnote{Note that "eg" \cite[\S 6]{FM2023semantic} uses $a^*$ instead of $a^+$. This restriction is needed in \Cref{lem:canonization-SREs}.}
    We show that minimization of UCRPQs having such simple regular expressions is
    "PiP2"-complete (\Cref{thm:minimization-SRE}). %
   \item We explore some necessary and sufficient conditions for minimality. In particular, we show that non-redundancy ("ie", the fact that removing any atom results in a non-equivalent query) is necessary but not sufficient for minimality (also known to be the case for tree patterns \cite{min-tree-patterns}). We also investigate a notion of `"strong minimality"' which implies minimality (\Cref{coro:strong-min}), and can be used as a theoretical tool to prove minimality of queries.
   This result is based on \Cref{thm:structure-theorem}, which may be of independent interest, providing a tool to extract
    lower bounds on the number of "atoms" (and more generally properties on the underlying structure of queries, such as tree-width, path-width, etc.) that is necessary to express a "UCRPQ".
   \item We also discuss an alternative definition of size, where instead of the number of atoms we count the number of variables: we obtain upper bounds for the "variable-minimization problem" of "CRPQs" and "UCRPQs" in \Cref{sec:discussion}.
  \end{itemize}
  
  \paragraph{On the chosen size measure}
  A na\"ive algorithm for the "evaluation" of a union of $t$ "CRPQs" with $k$ atoms on a graph database $G$ gives a rough bound of $O\big(t  k  (|\vertex{G}||\edges{G}| r) + t  |\vertex{G}|^{2k}\big)$, where $r$ is the maximum size of the regular expressions it contains, and $\edges{G}$, $\vertex{G}$ are the set of edges and vertices of $G$, respectively.\footnote{This is obtained by first materializing a table with the answers to each RPQ atom $x \atom{L} y$ of the query. For each vertex $u\in V(G)$, we can compute the answers to $u \atom{L} y$, by a BFS traversal on the product of $G$ and   the NFA $\+A_L$ for the regular language $L$, taking roughly $O(|\edges{G}| r)$. Then we can evaluate each "CRPQ" as if it were a "conjunctive query" on the computed tables (each table having size at most $|\vertex{G}|^2$), in $O((|\vertex{G}|^2)^k) = O(|\vertex{G}|^{2k})$.}
  As we see, the most costly dependence is on $k$, since $G$ is the largest object ("ie", the database, several orders of magnitude larger than the remaining parameters in practice). The size of regular expressions and the number of unions have a less predominant multiplicative influence on the cost.
  Further, unions can be executed in parallel, which justifies the choice of taking the maximum size of the number of atoms of the CRPQs therein.
  However, other measures may also be reasonable. For example, taking the size to be the number of variables instead of the number of atoms is explored in \Cref{sec:varmin}.
  More complex measures including the size of regular expressions and the number of unions would need to take into account the drastically different roles of the parameters in the evaluation in view of the previous discussion ("eg", a simple sum of the parameters would not be a reasonable choice).

Our size measure of number of atoms is also natural from a practical perspective. In practice, systems typically evaluate CRPQs by combining   on-the-fly ``materialization'' of CRPQ atoms  with  relational database techniques, in particular using join algorithms (see "eg" \cite{milleniumDB24,eswc-crpqs24,cucumides-icdt23}).  The number of atoms (or joins) plays an important role in these algorithms.
  
  \paragraph{Related work}

  Minimization is well-understood for CQs and corresponds to the \reintro{core} in the Chandra-Merlin theory~\cite{DBLP:conf/stoc/ChandraM77}, that is, the smallest homomorphically-equivalent query, which is unique up to renaming of variables. The "minimization problem" is then "NP"-complete.
  For UCQs, the canonical minimal query consists of minimizing each CQ and removing redundant queries ("ie", removing a CQ disjunct $q$ if there is another disjunct $q'$ such that $q \subseteq q'$), which remains "NP".
  
  However, for unbounded homomorphism-closed queries, such as CRPQs, the existence of such unique minimal queries (even seen as infinitary unions of CQs) remains rather elusive. In particular, what breaks is the ``redundancy removal'', because there could be infinite chains of ever growing queries, as for instance in the Boolean CRPQ $q() = x \atom{a^+} x$.

  Minimization has also been studied for the class of \emph{tree patterns}~\cite{FFM08,KS08,min-tree-patterns}. Tree patterns are simple yet widely used tree-like queries for tree-like databases such as XML. These queries allow mild recursion in the form of descendent edges, that is, atoms of the form $x \atom{a^+} y$, where $x$ is the parent of $y$. 
  Minimization of tree patterns is now well-understood \cite{min-tree-patterns}:  it is known that non-redundancy is not the same as minimality, and that the "minimization problem" is "SigmaP2"-complete, the lower bound being highly non-trivial.
  
\medskip

Due to space constraints, some proofs and details are deferred to the Appendix.

\section{Preliminaries}
\label{sec-prelims}
\paragraph{Graph databases.}
\AP ""Graph databases"" are abstracted as edge-labelled directed graphs
$G = \langle \vertex{G}, \edges{G} \rangle$, 
where nodes of $\intro*\vertex{G}$ represent entities and labelled edges $\intro*\edges{G} \subseteq \vertex G \times \A \times \vertex G$
represent relations between these entities, with $\A$ being a fixed finite alphabet.

\paragraph{Conjunctive regular path queries (CRPQs) and unions of CRPQs (UCRPQs).}
\AP A ""CRPQ"" $\gamma$ is defined as a tuple $\bar z = (z_1,\hdots,z_n)$
of ""output variables""\footnote{For technical reasons (see the definition of "equality atoms") we allow for a variable to appear multiple times.},
together with a conjunction of ""atoms"" of the form
\AP$\bigwedge_{j=1}^m x_j \intro*\atom{L_j} y_j$, where each $L_j$ is a regular language %
 and where $m \geq 0$.
The set of all variables occurring in $\gamma$, namely\footnote{We neither assume 
disjointness nor inclusion between $\{z_1,\hdots,z_n\}$ and $\{x_1,y_1,\hdots,x_m,y_m\}$.}
$\{z_1,\hdots,z_n\}\cup\{x_1,y_1,\hdots,x_m,y_m\}$, is denoted by
$\intro*\vars(\gamma)$. Variables in $\vars(\gamma)\setminus \{z_1,\hdots,z_n\}$ are existentially quantified. 
We denote by $\intro*\atoms(\gamma)$ the set of "atoms" of $\gamma$.
Given a "database" $G$, we say that a tuple of nodes $\bar u = (u_1,\hdots,u_n)$
\AP""satisfies"" $\gamma$ 
on $G$ if there is a mapping
$\fun\colon \vars(\gamma) \to \vertex{G}$ such that $u_i = \fun(z_i)$ for all
$1 \leq i \leq n$, and for each $1 \leq j \leq m$,
there exists a (directed) path from $\fun(x_j)$ to $\fun(y_j)$ in $G$, labelled by
a word from $L_j$ (if the path is empty, the label is $\varepsilon$). The \AP""evaluation"" of $\gamma$ on $G$ is then the set of all tuples that "satisfy" $\gamma$ on $G$.

A ""union of CRPQs"" (\reintro{UCRPQs})
is defined as a finite set of "CRPQs", called \AP""disjuncts"", whose tuples of "output variables" have all the same arity. %
The "evaluation" of a union is defined as the union of its "evaluations". 
If a query has no "output variables" we call it ""Boolean"", and
its "evaluation" can either be the set $\set{()}$, in which case we say that $G$
\reintro{satisfies} the query, or the empty set $\set{}$.

\AP Given two "UCRPQs" $\Gamma$
and $\Gamma'$ whose "output variables" have the same arity,
we say that $\Gamma$ is \AP""contained"" in $\Gamma'$,
denoted by $\Gamma \intro*\contained \Gamma'$, if
for every "graph database" $G$, for every tuple $\bar u$ of $\vertex{G}$,
if $\bar u$ "satisfies" $\Gamma$ on $G$, then so does $\Gamma'$. We will hence reserve the symbol `$\subseteq$' for set inclusion.
The \AP""containment problem"" for "UCRPQs" is the problem of, given
two "UCRPQs" $\Gamma$ and $\Gamma'$, to decide if $\Gamma \contained \Gamma'$.
When $\Gamma \contained \Gamma'$ and $\Gamma' \contained \Gamma$  we say that
$\Gamma$ and $\Gamma'$ are \AP""equivalent"", denoted by
$\Gamma \intro*\semequiv \Gamma'$. 

\AP A ""conjunctive query"" (\reintro{CQ}) is in this context a "CRPQ" whose every atom is of the form $x \atom{a} y$ for $a \in \A$ ("ie", every language is a singleton $\set{a}$).
\AP A ""union of CQs"" (\reintro{UCQs}) is defined as a "UCRPQ" with the same property.
\begin{toappendix}
    A \AP""canonical database"" $G$ of a "CRPQ" $\gamma$ is any "canonical database" associated
    to an "expansion" of $\gamma$, see \cite[Definition 3.1]{Florescu:CRPQ}
    for a formal definition. We denote it by \AP$G \intro*\cdb \gamma$.
    A \reintro{canonical database} of a "UCRPQ" is a "canonical database" of one
    of its "disjuncts".

    An \AP""evaluation map"" from a "CRPQ" $\gamma$ to a "graph database" $G$
    in a function $f$ from variables of $\gamma$ to $G$ "st"
    for any atom $x \atom{L} y$ in $\gamma$, there is path from $f(x)$ to $f(y)$ in $G$
    labelled by a word of $L$.

    The "containment" between "UCRPQs" $\Gamma_1 \contained \Gamma_2$ is exactly characterized
    by the fact that for all "canonical database" $G_1 \cdb \Gamma_1$,
    there exists a "disjunct" $\gamma_2$ of $\Gamma_2$ "st" there is an "evaluation map"
    from $\gamma_2$ to $G_1$.
\end{toappendix}

\paragraph{Homomorphisms}
\AP
A ""homomorphism"" $\fun$ from a "CRPQ" $\gamma(x_1, \dotsc, x_m)$ to a "CRPQ" $\gamma'(y_1, \dotsc, y_m)$ is a mapping from $\vars(\gamma)$ to $\vars(\gamma')$ such that $\fun(x) \atom{L} \fun(y)$ is an "atom" of $\gamma'$ for every "atom" $x \atom{L} y$ of $\gamma$, and further $\fun(x_i)=y_i$ for every $i$.
Such a "homomorphism" $\fun$ is \AP""strong onto"" if for every "atom" $x' \atom{L} y'$ of $\gamma'$ there is an "atom" $x \atom{L} y$ of $\gamma$ such that $\fun(x)=x'$ and $\fun(y)=y'$.
We write $\gamma \intro*\homto \gamma'$ if there is a "homomorphism" from $\gamma$ to $\gamma'$, and $\gamma \intro*\surjto \gamma'$ if there is a "strong onto homomorphism".
In the latter case, we say that $\gamma'$ is a \AP""homomorphic image"" of $\gamma$.
A \reintro{homomorphism} $\fun$ from a graph database $G$ to a graph database $G'$ is a mapping from $\vertex{G}$ to $\vertex{G'}$ such that  for every edge $u \atom{a} v$ of $G$, it holds that $\fun(u) \atom{a} \fun(v)$ is an edge in $G'$. A "homomorphism" from a "CQ" to a graph database is defined analogously.  

\AP
It is easy to see that if $\gamma \homto \delta$ then $\delta \contained \gamma$, and in the case where $\gamma,\delta$ are "CQs" this is an ``if and only if'' \cite[Lemma 13]{DBLP:conf/stoc/ChandraM77}. 
\AP
Two "CQs" $\gamma,\delta$ are ""hom-equivalent"" if there are "homomorphisms" $\gamma \homto \delta$ and $\delta \homto \gamma$.
Hence, for any two "CQs" $\gamma, \delta$, we have $\gamma \semequiv \delta$ if, and only if, they are "hom-equivalent".
\AP The ""core"" of a "CQ" $\gamma$, denoted by $\intro*\core(\gamma)$
is the result of repeatedly removing any atom which results in an equivalent query. It is unique up to isomorphism (see, "eg", \cite{DBLP:conf/stoc/ChandraM77}). We say that a "CQ" is `a core' if it is isomorphic to its "core". If $\gamma$ and $\delta$ are "hom-equivalent" then they have
the same "core". Moreover, there is always an \AP""embedding""---"ie", a "homomorphism" which is injective both on variables and "atoms"---of $\core(\gamma)$ into $\gamma$.

\paragraph*{Refinements and expansions of (U)CRPQs}
\AP For an NFA $\+A$ and two states $q,q'$ thereof, we denote by $\intro*\subaut{\+A}{q}{q'}$ the ""sublanguage"" of $\+A$ recognized  when considering $\set{q}$ as the set of initial states and $\set{q'}$ as the set of final states.
\AP An ""atom $m$-refinement"" of a "CRPQ" "atom" $\gamma(x,y) = x \atom{L} y$ where $m\geq 1$ and $L$ is given by the NFA $\+A_L$ is any "CRPQ" of the form 
\begin{equation}
    \AP\label{eq:refinement}
    \rho(x,y) = x \atom{L_1} t_1 \atom{L_2} \hdots \atom{L_{n-1}} t_{n-1} \atom{L_n} y
\end{equation}
where $1 \leq n \leq m$, $t_1,\hdots,t_{n-1}$ are fresh (existentially quantified) variables,
and $L_1,\hdots,L_n$ are such that there exists a sequence $(q_0,\dotsc,q_n)$ of states of $\+A_L$
such that $q_0$ is initial, $q_n$ is final, and for each $i$, $L_i$ is either of the form
\begin{enumerate}[(i)]
	\item $\subaut{\+A_L}{q_{i-1}}{q_{i}}$, or 
	\item $\{a\}$ if the letter $a\in \A$ belongs to $\subaut{\+A_L}{q_{i-1}}{q_{i}}$.
\end{enumerate}
Additionally, if $\varepsilon \in L$, the \AP""equality atom"" ``$x = y$'' is also an \reintro{atom $m$-refinement}. Thus, an \reintro{atom $m$-refinement} can be either of the form \eqref{eq:refinement} or ``$x=y$''.
By definition, note that the concatenation
$L_1\cdots L_n$ is a subset of  $L$ and hence $\rho \contained \gamma$ for any "atom $m$-refinement" $\rho$ of $\gamma$.
An \AP""atom refinement"" is an "atom $m$-refinement" for some $m$.

Given a natural number $m$, an \AP""$m$-refinement"" of a "CRPQ" $\gamma(\bar x) = \bigwedge_{i} x_i \atom{L_i} y_i$ is any query resulting from: 1) replacing every "atom" by one of its "$m$-refinements@@atom", and 2)
should some "$m$-refinements@@atom" have "equality atoms",
collapsing the variables (and removing the identity atoms `$x=x$').
\AP A ""refinement"" is an "$m$-refinement" for some $m$.
Note that %
in a "refinement" of a "CRPQ"
the "atom refinements" need not have the same length.
For instance, both $\rho(x,x) = x \atom{c} x$ and $\rho'(x,y) = x \atom{a} t_1 \atom{a} y \coatom{c} y$ are "refinements" of $\gamma(x,y) = x \atom{a^*} y \coatom{c} x$.
\AP
We write $\intro*\Refin(\gamma(\bar x))$ to denote the set of all "refinements" of $\gamma(\bar x)$ and $\reintro*\Refin[\leq m](\gamma(\bar x))$ to the $m$-refinements. %

The set of \AP""expansions"" of a "CRPQ" $\gamma$ is the set $\intro*\Exp(\gamma)$ of all "CQs" which are "refinements" of $\gamma$.
In other words, an "expansion" of $\gamma$ is any "CQ" obtained from $\gamma$
by replacing each "atom" $x \atom{L} y$ by a path $x \atom{w} y$ for some
word $w \in L$. The "expansions" (resp.\ "refinements") of a "UCRPQ" are the "expansions" (resp.\ "refinements") of the "CRPQs" it contains.
We define \AP""atom expansions"" analogously to "atom refinements". For "UCRPQs" we use  $\Exp(\Gamma)$, $\Refin(\Gamma)$ and $\Refin[\leq m](\Gamma)$ as for "CRPQs".

Any "UCRPQ" is equivalent to the infinitary union of its "expansions". In light of this, 
the semantics for "UCRPQs" can be rephrased as follows. 
Given a "UCRPQ" $\Gamma(\bar x)$ and a graph database $G$, 
the "evaluation" of $\Gamma(\bar x)$ on $G$, denoted by $\Gamma(G)$, is the set of tuples 
$\bar{v}$ of nodes for which there is $\anexpansion \in \Exp(\Gamma)$ such that there is a "homomorphism" $\anexpansion \homto G$ that sends $\bar x$ onto $\bar v$. 

"Containment" of "UCRPQs" can also be characterized in terms of "expansions".
\begin{proposition}[Folklore, see e.g. {\cite[Proposition 3.2]{Florescu:CRPQ}} or
    {\cite[Theorem 2]{four-italians}}]
    \AP\label{prop:cont-char-exp-st} 
    Let $\Gamma_1$ and $\Gamma_2$ be "UCRPQs". Then the following are equivalent:
        (i) $\Gamma_1 \contained \Gamma_2$;
        (ii) for every $\anexpansion_1\in \Exp(\Gamma_1)$, $\anexpansion_1 \contained \Gamma_2$;
        (iii) for every $\anexpansion_1\in \Exp(\Gamma_1)$ there is $\anexpansion_2\in \Exp(\Gamma_2)$ such that $\anexpansion_2\homto \anexpansion_1$. 
\end{proposition}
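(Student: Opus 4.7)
The plan is to establish the cycle of implications $(i) \Rightarrow (ii) \Rightarrow (iii) \Rightarrow (i)$, using as the central bridging object the fact that an "expansion" $\anexpansion$ can be viewed in two ways simultaneously: as a "CQ" (hence as a query) and as its own "canonical database".

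\textbf{$(i) \Rightarrow (ii)$.} I would first observe that every "expansion" $\anexpansion_1 \in \Exp(\Gamma_1)$ satisfies $\anexpansion_1 \contained \Gamma_1$. This is because an "expansion" is obtained from a "disjunct" of $\Gamma_1$ by replacing each "atom" $x \atom{L} y$ by a specific path $x \atom{w} y$ with $w \in L$; any "evaluation" of $\anexpansion_1$ on a "database" $G$ directly witnesses paths in $G$ whose labels lie in the corresponding languages, yielding an "evaluation" of $\Gamma_1$. Combined with the hypothesis $\Gamma_1 \contained \Gamma_2$ and transitivity of "containment", this gives $\anexpansion_1 \contained \Gamma_2$.

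\textbf{$(ii) \Rightarrow (iii)$.} Fix $\anexpansion_1 \in \Exp(\Gamma_1)$ and view it both as a "CQ" and as its "canonical database" $G_{\anexpansion_1}$, in which every variable is a node and every "atom" $x \atom{a} y$ is the edge $x \atom{a} y$. The identity map is an "evaluation map" of $\anexpansion_1$ on $G_{\anexpansion_1}$, so by hypothesis $(ii)$ some "disjunct" $\gamma_2$ of $\Gamma_2$ also has an "evaluation map" $\fun \colon \vars(\gamma_2) \to \vertex{G_{\anexpansion_1}}$ respecting the "output variables". For every "atom" $x \atom{L} y$ of $\gamma_2$, $\fun$ witnesses a path in $G_{\anexpansion_1}$ from $\fun(x)$ to $\fun(y)$ whose label is some word $w \in L$; instantiating each language of $\gamma_2$ by the corresponding word yields an "expansion" $\anexpansion_2 \in \Exp(\gamma_2) \subseteq \Exp(\Gamma_2)$, and $\fun$ is exactly a "homomorphism" $\anexpansion_2 \homto \anexpansion_1$.

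\textbf{$(iii) \Rightarrow (i)$.} Let $G$ be a "database" and $\bar u$ a tuple "satisfying" $\Gamma_1$. By the expansion-based semantics recalled just before the statement, there exists $\anexpansion_1 \in \Exp(\Gamma_1)$ and a "homomorphism" $h \colon \anexpansion_1 \homto G$ sending $\bar x$ to $\bar u$. By $(iii)$ there is $\anexpansion_2 \in \Exp(\Gamma_2)$ with a "homomorphism" $g \colon \anexpansion_2 \homto \anexpansion_1$ respecting the "output variables". The composition $h \circ g$ is then a "homomorphism" $\anexpansion_2 \homto G$ sending $\bar x$ to $\bar u$, so $\bar u$ "satisfies" $\Gamma_2$ on $G$.

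The only subtle step is $(ii) \Rightarrow (iii)$, which hinges on the dual role of an "expansion" as both a query and a "database". Everything else follows from transitivity of "containment" and the routine observation that an "evaluation map" into a "canonical database" is essentially a "homomorphism" between two "CQs", up to replacing each language atom by the specific word read along the image path.
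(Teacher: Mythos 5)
Your proof is correct, and since the paper states this proposition as folklore with only external citations (and no proof of its own beyond the appendix remark recalling the canonical-database/evaluation-map characterization), there is nothing to diverge from: your cycle $(i)\Rightarrow(ii)\Rightarrow(iii)\Rightarrow(i)$, with the identity evaluation of $\anexpansion_1$ on itself as the key step, is exactly the standard argument the cited references use. The only point worth tightening is in $(ii)\Rightarrow(iii)$: the evaluation map $\fun$ of $\gamma_2$ must still be \emph{extended} to the fresh internal variables of $\anexpansion_2$ along the witnessing paths (and, when a witnessing word is $\varepsilon$, the corresponding variables of $\anexpansion_2$ get collapsed), but this is routine and does not affect correctness.
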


\paragraph*{General assumptions}
To simplify proofs, we often assume that the regular languages are described via non-deterministic finite automata (NFA) instead of regular expressions,
which does not affect any of our complexity bounds.
However, for readability all our examples will be given in terms of regular expressions.
\AP
We denote by $\intro*\nbatoms{\gamma}$ the number of "atoms" of a "CRPQ" $\gamma$ and by $\intro*\nbvar{\gamma}$ the number of variables.
We extend these notations to a "UCRPQ" $\Gamma$ by letting
$\nbatoms{\Gamma} = \max_{\gamma \in \Gamma} \nbatoms{\gamma}$
and $\nbvar{\Gamma} = \max_{\gamma \in \Gamma} \nbvar{\gamma}$. %
We denote by $\size{\Gamma}$ the size (of a reasonable encoding) of a  "UCRPQ". 
For a CRPQ $\gamma$, we define its  \AP""underlying graph"" $\intro*\underlying{\gamma}$ of $\gamma$ as the directed multigraph obtained from $\gamma$ by ignoring the regular languages labelling the atoms of $\gamma$. 

We assume familiarity with basic concepts of directed multigraphs.
For simplicity, thorough the paper, by `graph' we mean a directed multigraph. We also adapt implicitly in the natural way, concepts defined for "CRPQs" to (directed multi)graphs (such a "homomorphisms", "embeddings", etc.).
A \AP""minor"" of a graph is any graph that can be obtained by removing
    edges, removing vertices---and their adjacent edges---, and \AP""contracting edges""---meaning that we identify the two endpoints of the edge and remove the edge from the graph.%
    \footnote{This definition is a trivial generalization of the notion of minors for undirected graphs.}

\section{Necessary \& Sufficient Conditions for Minimality}

This section explores some necessary and sufficient conditions for a query to be "minimal".
We start with some necessary definitions.
\AP
An ""internal variable"" of a "CRPQ" $\gamma$ is any non-"output variable" with both in-degree and out-degree 1.
A non-"internal variable" is called \AP""external"".
A \AP""one-way internal path""\footnote{This definition comes from \cite[\S 7]{FM2023semantic}, 
under the name `one-way internal path'; there is also an equivalent
notion for C2RPQs.}, or "internal path" for short, from $x_0$ to $x_n$ of a "CRPQ" $\gamma$ is a \emph{simple}\footnote{Meaning that all nodes are pairwise disjoint,
except that potentially $x_0 = x_n$.} path
\begin{align}
    x_0 \atom{L_1} x_1 \atom{L_2} \cdots \atom{L_n} x_n 
    \AP\label{eq:path-internal}
\end{align}
in $\gamma$ where $n > 0$ and every $x_i$ is "internal" with $i \in \lBrack 1,n-1\rBrack$.
A \AP""segment"" of a "CRPQ" $\gamma$ is a maximal "internal 
path" in $\gamma$, where ``maximal'' means that it cannot be extended on the left
or on the right. We say that a "segment" is cyclic if $x_0 = x_n$.
We identify two cyclic "segments" if they are equal up to circular permutation.
We say that a "segment" as in \eqref{eq:path-internal} is \AP""incident@@segment""
to a variable $y$ if $y = x_i$ for some $i$.%

\begin{figure}
  \centering
  \begin{minipage}{0.6\textwidth}
    \centering
    \includegraphics[width=\linewidth]{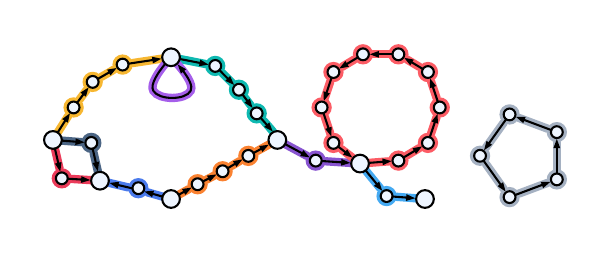}
    \subcaption{%
      \AP\label{fig:segments}%
      The "segments" of $\gamma$---labels are omitted. Each "segment" has a different color.  "Internal variables" are the smaller circles.
    }
  \end{minipage}%
  \hfill%
  \begin{minipage}{0.38\textwidth}
    \centering
    \includegraphics[width=\linewidth]{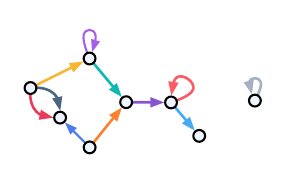}
    \subcaption{%
      \AP\label{fig:segment-graph}%
      The "segment graph" of $\gamma$.
    }
  \end{minipage}%
  \caption{"Segments" and "segment graph" of a "CRPQ" $\gamma$.}
\end{figure}

\begin{fact}
  \AP\label{fact:partition-into-segments}
  The "segments"---seen as sets of "atoms"---of $\gamma$ form a partition on its set of "atoms".
\end{fact}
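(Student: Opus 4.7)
The plan is to prove coverage and disjointness of the putative partition separately, exploiting the defining property of \kl{internal variables} (in-degree and out-degree exactly one, and not output).

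\textbf{Coverage.} Given an arbitrary atom $a = (u \atom{L} v)$ of $\gamma$, I would build a maximal \kl{internal path} containing it by extending greedily on both sides. Extend to the right: while the current right endpoint $v$ is \kl{internal}, it has a unique outgoing atom (because out-degree is $1$); append it. Extend symmetrically on the left using the unique incoming atom at each \kl{internal} endpoint. Since $\gamma$ has finitely many atoms, the process terminates. The termination happens for one of two reasons, corresponding exactly to the two cases allowed in the definition of a \kl{segment}: either the extension reaches an \kl{external} endpoint on that side (so the path cannot be further extended while keeping all intermediate variables \kl{internal}), or the extension revisits the starting vertex, producing a cyclic \kl{segment}; in this latter case I would check that the path obtained is simple (all intermediate vertices distinct), which follows because any repetition before closing the cycle would force an intermediate \kl{internal} variable to have degree at least two on one side.

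\textbf{Disjointness.} Suppose two \kl{segments} $S_1$ and $S_2$ share an atom $a = (u \atom{L} v)$. I would argue by propagating uniqueness outward from $a$: the atom immediately to the right of $a$ in each segment (if it exists) is determined by $v$, because either $v$ is \kl{external} (in which case both segments stop at $v$ on the right) or $v$ is \kl{internal}, and then both segments must use the unique atom out of $v$. Iterating this argument to the right, and symmetrically to the left, shows that $S_1$ and $S_2$ consist of the same sequence of atoms, hence are equal (noting that in the cyclic case equality is up to circular permutation, which is exactly the identification made in the definition).

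The main subtlety, and the only place any care is really needed, is the cyclic case: one must be sure that the greedy extension in the coverage step produces a \emph{simple} cycle, and that the disjointness argument correctly identifies two cyclic segments that differ only by a rotation. Both boil down to the fact that on a cycle of \kl{internal variables} every vertex has exactly one in-edge and one out-edge along the cycle, so the cycle is uniquely determined as an unordered cyclic sequence of atoms by any single atom it contains.
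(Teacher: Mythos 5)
Your proof is correct. The paper states this as a Fact without any proof, treating it as immediate from the definitions, so there is no official argument to compare against; your greedy-extension argument for coverage and forced-propagation argument for disjointness are the natural way to fill in the details. The two subtleties you single out are indeed the only delicate points, and you resolve them correctly: a repeated vertex other than the closing of the cycle back to the starting point would give some intermediate internal variable in-degree or out-degree at least~2, and in the disjointness step the only way a maximal internal path can terminate at an internal endpoint is by closing into a cycle, in which case it still contains the unique outgoing (resp.\ incoming) atom of that vertex as its first (resp.\ last) atom, so the propagation goes through at the level of atom sets, with rotations of a cycle identified exactly as the definition prescribes. One point you leave implicit but which is immediate from your stopping conditions is that the path produced by the greedy extension is in fact maximal: extending past an external endpoint would force that endpoint to become an internal intermediate vertex, and extending a closed cycle would violate simplicity.
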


See \Cref{fig:segments} for an example of a decomposition into "segments".
Note that each "segment" is "incident@@segment" to either zero "external variable" (isolated cycles),
to one (non-isolated cycles) or to two (non-cycles).
We denote by \AP$\intro*\nbseg{\gamma}$ the number of "segments" of a "CRPQ" $\gamma$,
and we extend this notation to "UCRPQs" by letting $\nbseg{\Gamma} \defeq
\max_{\gamma \in \Gamma}{\nbseg{\gamma}}$. By Fact~\ref{fact:partition-into-segments}, $\nbseg{\Gamma}\leq \nbatoms{\Gamma}$ always holds.

\subsection{Necessary Conditions: Contractions and Redundancy}
\paragraph{Contractions}
One simple (and tractable) way to make a query smaller is to `contract' any two consecutive atoms in a path with just one atom having the concatenation of the languages.
Formally, a
\AP ""contraction@@var"" of an "internal variable" $y$ in a "CRPQ" $\gamma$ is the result of replacing any pair of distinct atoms $x \atom{L} y$ and $y \atom{L'} z$ with $x \atom{L \cdot L'} z$ for $L \cdot L' \defeq \set{w \cdot w' : w \in L, w' \in L'}$.\footnote{Note that "contraction of internal variable" is a particular case of "edge contraction".} Observe that this results in an "equivalent" query.
\AP A ""contraction"" of $\gamma$ is any "CRPQ" obtained by repeatedly "contracting@@var" "internal variables".
A "CRPQ" $\gamma$ is \AP""fully contracted"" if it cannot be "contracted".
In other words, $\gamma$ is "fully contracted" "iff" $\nbseg{\gamma} = \nbatoms{\gamma}$.
A "contraction" of a "UCRPQ" is a "contraction" of a "CRPQ" therein (obtaining the "UCRPQ" where one "CRPQ" $\gamma$ was replaced by a "contraction" of $\gamma$). A "UCRPQ" is then \reintro{fully contracted} if each "CRPQ" therein is "fully contracted". %
\begin{fact}
  \AP\label{fact:produce-fully-contracted}
  "Contractions" preserve "semantic equivalence". Further, from a "UCRPQ" $\Gamma$ one can produce, in polynomial time, an "equivalent" one that is "fully contracted" with $\nbseg{\Gamma}$ "atoms".
\end{fact}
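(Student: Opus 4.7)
The plan is to split the proof according to the two sentences of the statement. For the first sentence, I would verify that a single "contraction" of an "internal variable" preserves the "evaluation" on every "graph database", and then iterate. For a "contraction" of $y$ replacing "atoms" $x \atom{L} y$ and $y \atom{L'} z$ with a single atom $x \atom{L \cdot L'} z$, one direction is immediate: concatenate the $L$-path and the $L'$-path into an $L \cdot L'$-path. For the converse, pick any intermediate node $w$ on the witness for the new atom and set $\fun(y) \defeq w$. Since $y$ is "internal" and thus occurs only in the two affected "atoms", the extension is consistent with the rest of the "CRPQ". Chaining this across successive "contractions" then preserves "semantic equivalence".

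For the second sentence, I would apply the following na\"ive algorithm: while an "internal variable" admits a "contraction", apply it. Each step strictly decreases $\nbatoms{\gamma}$, so the procedure terminates after at most $\nbvar{\gamma}$ rounds per "CRPQ", each round being polynomial; the output "UCRPQ" $\Gamma'$ is "fully contracted" by construction. I would then invoke \Cref{fact:partition-into-segments} to conclude that each "segment" of the original "CRPQ" collapses into a single "atom" carrying the concatenation of the languages along the "segment", yielding $\nbatoms{\Gamma'} = \nbseg{\Gamma}$.

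The main subtlety, and the only non-trivial step, concerns cyclic "segments", where "contraction" cannot be applied to a lone self-loop since it requires a \emph{pair} of distinct "atoms". A non-isolated "cyclic segment" $x_0 \atom{L_1} x_1 \cdots \atom{L_n} x_0$ with $x_0$ "external" is contracted through $x_1, \dotsc, x_{n-1}$ into the self-loop $x_0 \atom{L_1 \cdots L_n} x_0$, a single "atom" on which no further "contraction" applies. An isolated cycle (all variables "internal") is handled analogously, terminating at a self-loop on one remaining variable. In both situations the "segment" contributes exactly one "atom" to $\Gamma'$, matching $\nbseg{\Gamma}$ as claimed.
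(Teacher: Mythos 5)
Your proof is correct and follows the same (routine) argument that the paper leaves implicit --- the Fact is stated without proof, justified only by the remark ``Observe that this results in an equivalent query'' after the definition of "contraction@@var", and your verification, including the careful treatment of cyclic "segments" terminating in an uncontractible self-loop, is exactly what is needed. One small imprecision: in the converse direction you cannot pick \emph{any} intermediate node of the witnessing path for $x \atom{L \cdot L'} z$; you must pick the node at a factorization $w = w_1 w_2$ of the witnessing word with $w_1 \in L$ and $w_2 \in L'$, which exists by definition of $L \cdot L'$.
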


In particular, if a "UCRPQ" $\Gamma$ is "minimal" then $\Gamma$ is "fully contracted" and $\nbseg{\Gamma} = \nbatoms{\Gamma}$; in other words, $\nbseg{\Gamma}$ is an upper bound on the number of atoms of the "minimal" equivalent query.

\paragraph{Redundancy}
Another way to reduce the number of atoms of a query is to remove any \AP""redundant atom"", that is, any "atom" whose removal results in an "equivalent" query. 
When there are no such redundant atoms, we say that the query is \reintro{non-redundant}.
However, this is a more difficult problem, since it involves testing for query equivalence, an "ExpSpace"-complete problem.

\begin{proposition}
  \AP\label{prop:lowerbound-non-redundant}
  Testing whether a "(U)CRPQ@UCRPQ" is "non-redundant" is "ExpSpace"-complete.
\end{proposition}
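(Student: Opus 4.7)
The plan is to reduce "non-redundancy" testing to (non-)"containment" testing for "(U)CRPQs", which is known to be "ExpSpace"-complete.

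For the upper bound, observe that a "(U)CRPQ" $\Gamma$ is "non-redundant" iff for every "atom" $\alpha$ of $\Gamma$, $\Gamma \setminus \set{\alpha}$ is not "equivalent" to $\Gamma$. Since removing an "atom" always yields a weaker (more permissive) query, $\Gamma \contained \Gamma \setminus \set{\alpha}$ holds trivially, and the equivalence test collapses to checking whether $\Gamma \setminus \set{\alpha} \contained \Gamma$ fails. Iterating over the polynomially many "atoms" of $\Gamma$ and invoking the "ExpSpace" procedure for "UCRPQ" "containment" on each yields an "ExpSpace" algorithm for "non-redundancy".

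For the lower bound, I first handle the "UCRPQ" case and reduce from non-"containment" of a single "CRPQ" $\gamma_1$ in a "UCRPQ" $\Gamma_2$, which is already "ExpSpace"-hard. Introducing a fresh letter $\#$ and fresh variables $x,y$, I construct the "UCRPQ"
\[
  \Delta \defeq \set{\gamma_1 \land (x \atom{\#} y)} \cup \Gamma_2 .
\]
Since $\#$ appears nowhere else in $\Delta$, any "graph database" can be restricted to its $\#$-free part without affecting satisfaction of non-marker "atoms", whence the marker atom $x \atom{\#} y$ is "redundant" in $\Delta$ iff $\gamma_1 \contained \Gamma_2$. For the "CRPQ"-only case, a parallel construction works by folding the two "disjuncts" into a single "CRPQ" on disjoint variable copies using a fresh selector letter that mimics the disjunction.

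The main obstacle is to ensure that no non-marker "atom" of $\Delta$ is "redundant", so that ``$\Delta$ is non-redundant'' coincides with ``the marker is non-redundant''. I would handle this by augmenting each "disjunct" of $\Delta$ with a fresh decoration atom $z_\delta \atom{\#_\delta} z_\delta'$ on fresh variables and a fresh letter $\#_\delta$ unique to $\delta$, which structurally isolates each "disjunct" from the others: the removal of any non-marker "atom" from a "disjunct" $\delta$ is then witnessed by the "canonical database" of the weakened $\delta$, which by freshness of $\#_\delta$ cannot satisfy any other "disjunct". This ultimately reduces the task to ensuring the input queries $\gamma_1, \Gamma_2$ are themselves internally "non-redundant", a property I would secure by starting from a carefully chosen "ExpSpace"-hardness instance for "containment" whose tiling-based structure already rules out internal redundancy by construction. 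Verifying that the interaction between the decoration atoms, the marker atom, and the disjuncts of $\Gamma_2$ does not spoil the marker-redundancy characterization is the most delicate point of the reduction.
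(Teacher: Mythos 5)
Your upper bound coincides with the paper's: delete each atom in turn and run the (single nontrivial direction of the) "containment" test, which stays in "ExpSpace" over polynomially many atoms.

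The lower bound has a genuine gap, located exactly where you flag ``the most delicate point''. Your core observation --- that in the undecorated $\Delta = \set{\gamma_1 \land x \atom{\#} y} \cup \Gamma_2$ the marker atom is "redundant" iff $\gamma_1 \contained \Gamma_2$ --- is correct, but the patch you propose to control the remaining atoms destroys it. Once each disjunct carries a private decoration $z_\delta \atom{\#_\delta} z_\delta'$, the first disjunct with the marker removed still contains its private letter $\#_1$; its "expansions" therefore have a $\#_1$-edge but no $\#$-edge and no $\#_{\delta'}$-edge, so no "expansion" of any disjunct of the decorated $\Delta$ maps into them, and the marker is \emph{never} redundant, regardless of whether $\gamma_1 \contained \Gamma_2$ (decorating only the $\Gamma_2$-disjuncts fails for the same reason). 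Dropping the decorations, you are left needing every non-marker atom to be "non-redundant", and securing this ``by construction'' from the hard containment instances is both unsubstantiated and essentially circular: certifying non-redundancy of a given query is precisely the problem being proved hard. The CRPQ-only case is also not handled --- a single "CRPQ" cannot simulate a disjunction of "CRPQs" via a ``selector letter''. The paper takes a different route that sidesteps all of this: it forms the single "Boolean CRPQ" $\delta() \defeq x' \atom{K} y' \land \bigwedge_i x \atom{L_i} y$ from a hard containment instance, modified so that (i) no word of $K$ occurs as a factor of a word of any $L_i$ and (ii) each $L_j$ has a private word $w_j \notin L_i$ for $i \neq j$, and then proves that \emph{some} atom is redundant iff \emph{every} $L_i$-atom is redundant iff $x' \atom{K} y' \contained \bigwedge_i x \atom{L_i} y$; no separate non-redundancy guarantee for individual atoms is needed, and the same instance serves both the "CRPQ" and "UCRPQ" statements.
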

\begin{proof}
  The upper bound is trivial: for every "atom" we remove it and check "equivalence".
  
  For the lower bound, we use the construction of \Cref{prop:variation-figueira} for containment. Let $\delta() 
  \defeq x' \atom{K} y' \land \bigwedge_i x \atom{L_i} y$ be the "disjoint conjunction" of $\gamma_1()$ and $\gamma_2()$ as defined in \Cref{prop:variation-figueira}.
  We first strengthen the construction to ensure the following two properties:
  \begin{enumerate}
    \item \textbf{$K$ cannot be mapped inside any $L_i$}: There is no word of $K$ which appears as factor of a word from some $L_i$. For this, it suffices to add a special letter at the beginning and the end of every word of $K$ which is not in any of the $L_i$'s. That is, we can define a new $K^{\textit{new}} \defeq \# \cdot K^{\textit{old}} \cdot \#$ for a new symbol $\#$.
    \item \textbf{For every $j$ there is $w_j \in L_j$ such that $w_j \not\in L_i$ for every $i \neq j$}: it suffices to add a special word ("eg" using a new alphabet letter) to each $L_i$. For example, we can define $L_{i}^{\textit{new}} \defeq L_{i}^\textit{old} \cup \set{@_i}$, where $@_i$ is a fresh alphabet letter.
  \end{enumerate}
  It is easy to see that these modifications preserve all the properties needed for the "containment problem" to still be "ExpSpace"-hard.
  
  We show that $\delta()$ is "non-redundant" "iff" $x' \atom{K} y' \contained \bigwedge_i x \atom{L_i} y$.
  
  \proofcase{$K$ cannot be removed.} We first show that removing the atom $x' \atom{K} y'$ from $\delta()$ results in a non-"equivalent" query $\delta'()$. Indeed, if it is removed then for any expansion of $\delta'()$ there will not be any word from $K$ that can be used to map into the expansion due to the first point above.

  \proofcase{If some $L_j$ is redundant, then containment holds.} Consider the result $\delta'()$ of removing an atom $x \atom{L_j} y$ from $\delta()$. Consider all the expansions of $\delta'()$ that choose $w_i$---defined in the second point above---as the "atom expansion" for $L_i$ for every $i \neq j$. It follows that for any such expansion there must be an expansion of $\delta()$ that maps necessarily $x \atom{L_j} y$ to the "atom expansion" of $x' \atom{K} y'$ in $\delta'()$. Otherwise, we would be mapping some word of $L_j$ to some $w_i$ with $i\neq j$, which we know it is not possible due to the second point above.
  This means that $x' \atom{K} y' \contained \bigwedge_i x \atom{L_i} y$.

  \proofcase{If containment holds, then all $L_i$'s redundant.} Finally, observe that if $x' \atom{K} y' \contained \bigwedge_i x \atom{L_i} y$ then the query is "equivalent" to $x' \atom{K} y'$.

  \medskip

  Overall, we obtained that the following are equivalent:
  \begin{enumerate}[(a)]
    \item $\delta()$ is "redundant",
    \item an atom $x \atom{L_i} y$ of $\delta()$ is "redundant",
    \item the containment $x' \atom{K} y' \contained \bigwedge_i x \atom{L_i} y$ holds,
    \item all atoms $x \atom{L_i} y$ of $\delta()$ are "redundant".\qedhere
  \end{enumerate}
\end{proof}

While in the case of "conjunctive queries" "non-redundancy" is the same as "minimality", for "CRPQs" and "UCRPQs" this is not the case, even if the query is "fully contracted". 

\begin{proposition}
  There are "fully contracted" "non-redundant" "CRPQs" which are not "minimal".
\end{proposition}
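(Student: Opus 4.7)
The plan is to exhibit a concrete counterexample. Consider the 5-atom CRPQ
\[
\gamma(x) = y_1 \atom{a^+} x \,\land\, y_2 \atom{a^+} x \,\land\, y_1 \atom{b^+} y_2 \,\land\, y_2 \atom{b^+} y_1 \,\land\, y_1 \atom{c} y_1,
\]
together with the smaller 3-atom candidate
\[
\delta(x) = y \atom{a^+} x \,\land\, y \atom{b^+} y \,\land\, y \atom{c} y.
\]

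First, I would verify that $\gamma$ is "fully contracted" by computing in- and out-degrees: $y_1$ has degrees $(2,3)$ (in-edges from $y_2$ and its $c$-self-loop; out-edges to $x$, $y_2$, and the $c$-self-loop), while $y_2$ has $(1,2)$. Since no non-output variable has in-degree and out-degree both equal to $1$, no further "contraction@@var" is possible.

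Second, I would establish $\gamma \semequiv \delta$ by a folding argument. For $\delta \contained \gamma$: given any assignment satisfying $\delta$, set $y_1 := y_2 := y$ and observe that each of the five atoms of $\gamma$ collapses to an atom of $\delta$. For $\gamma \contained \delta$: given any assignment satisfying $\gamma$, choose $y := y_1$; then $y_1 \atom{a^+} x$ and $y_1 \atom{c} y_1$ are directly available, and concatenating $y_1 \atom{b^+} y_2$ with $y_2 \atom{b^+} y_1$ yields the self-loop $y_1 \atom{b^+} y_1$. Since $\nbatoms{\delta} = 3 < 5 = \nbatoms{\gamma}$, this shows $\gamma$ is not "minimal".

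Third, I would verify non-redundancy by exhibiting, for each atom $\alpha$ of $\gamma$, a "database" on which $\gamma \setminus \set{\alpha}$ holds but $\gamma$ does not. The guiding idea is that removing any atom breaks the folding: the weaker query can then be satisfied by distributing the roles of $y_1, y_2$ across distinct vertices (where, say, only $y_2$ reaches $x$ via $a^+$ or only $y_1$ has the $c$-self-loop), whereas $\gamma \semequiv \delta$ demands a single witness $y$ that simultaneously reaches $x$ via $a^+$, has a $b^+$-self-loop, and has a $c$-self-loop. For example, to show $y_1 \atom{c} y_1$ is not "redundant", consider a "database" containing a $b^+$-cycle through $y_1, y_2$ with both reaching $x$ via $a^+$ but no $c$-edge anywhere; the four-atom query is "satisfied" but $\delta$ (hence $\gamma$) is not.

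The main obstacle is striking the right balance between symmetry and asymmetry: enough symmetry between $y_1$ and $y_2$ to enable the 3-atom collapse to $\delta$, yet enough asymmetry (provided by the $c$-self-loop attached only to $y_1$) to prevent each atom from being locally covered by the remaining ones. Designing the database witnesses for each of the five non-redundancy checks, particularly for the $b^+$-atoms where one must disable the self-loop derivation while preserving the other constraints, is where the care lies.
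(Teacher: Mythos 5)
Your overall strategy---exhibit a fully contracted query equivalent to a strictly smaller one, then check atom by atom that nothing can be dropped---is the same as the paper's, which gives a concrete witnessing example in a figure. However, your concrete example fails: the atom $y_2 \atom{a^+} x$ of $\gamma$ \emph{is} redundant. Let $\gamma'$ be $\gamma$ with that atom removed. In any database and assignment satisfying $\gamma'$, the image of $y_1$ still has an $a^+$-path to the image of $x$ (from $y_1 \atom{a^+} x$), a $b^+$-labelled cycle (concatenating $y_1 \atom{b^+} y_2$ with $y_2 \atom{b^+} y_1$, using $b^+\cdot b^+\subseteq b^+$), and a $c$-self-loop. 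So $\delta$ is satisfied by mapping $y$ to the image of $y_1$, giving $\gamma' \contained \delta \semequiv \gamma$; combined with the trivial $\gamma \contained \gamma'$, this yields $\gamma' \semequiv \gamma$, i.e.\ the atom is redundant and $\gamma$ does not witness the proposition.

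The failure is instructive: your guiding heuristic (after removing an atom, distribute the roles of $y_1,y_2$ across distinct vertices) does not apply to this atom, because deleting $y_2 \atom{a^+} x$ removes no constraint involving $y_1$, and $y_1$ alone already realizes all three atoms of $\delta$. The asymmetry you introduced to secure non-redundancy (the $c$-loop attached only to $y_1$) is precisely what concentrates everything at $y_1$ and renders $y_2$'s $a^+$-atom superfluous. The remaining four atoms are indeed non-redundant via witnesses of the kind you sketch, but a single redundant atom sinks the example. Any repair must ensure that for \emph{every} atom $\alpha$, including each $a^+$-atom, the containment $\gamma\setminus\set{\alpha} \contained \delta$ genuinely fails; this is exactly the delicate balance the paper's figure is engineered to achieve.
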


\begin{proof}
  \begin{figure}
    \includegraphics[scale=.8]{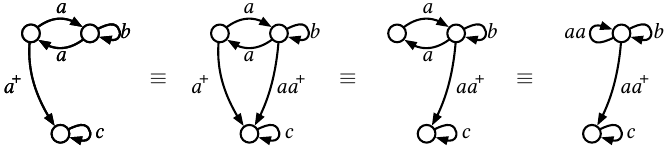}
    \caption{"Equivalent" "CRPQs". The leftmost query is "fully contracted", "non-redundant" and not "minimal", since it is "equivalent" to the rightmost query.}
    \AP\label{fig:ex-equiv-queries}
  \end{figure}
  \Cref{fig:ex-equiv-queries} contains a simple witnessing example. It is trivial to see that the "CRPQs" in the figure are all "equivalent" and that the leftmost query is "non-redundant" and "fully contracted". However, it is not "minimal" since it is "equivalent" to the rightmost "CRPQ".
\end{proof}

\subsection{A Sufficient Condition: Strong Minimality}

We have seen some sound ways to reduce the size of queries. But how can we ensure that a query is actually minimal? Here we give a theoretical tool which can ensure this by means of finding some "expansion" which is a witness for the query to have ``many atoms''. We call this "strong minimality".

Say that an "expansion" $\anexpansion$ of a "UCRPQ" $\Gamma$ is \AP""hom-minimal"" when, for every "expansion" $\anexpansion'$ of $\Gamma$, if $\anexpansion' \homto \anexpansion$ then $\anexpansion'$ and $\anexpansion$ are "hom-equivalent". 

A "UCRPQ" $\Gamma$ is \AP""strongly minimal"" if 
it has a "hom-minimal" "expansion" $\anexpansion \in \Exp(\Gamma)$ "st" $\nbseg{\core(\anexpansion)} = \nbatoms{\Gamma}$. We will next show that this is a sufficient condition for minimality.

The \reintro{segment graph} $\reintro*\seggraph(\gamma)$ of a "CRPQ" $\gamma$ is the directed multigraph obtained by replacing segments of $\gamma$ with edges in its "underlying graph", as illustrated in \Cref{fig:segment-graph} (a formal definition can be found in \Cref{defn:segmentgraph}).

\begin{toappendix}
\begin{definition}\AP\label{defn:segmentgraph}
  Given a "CRPQ" $\gamma$, we define its \AP""segment graph"" $\intro*\seggraph(\gamma)$
  to be the directed multigraph defined by:
  \begin{itemize}
    \item every "external variable" of $\gamma$ is a vertex of $\seggraph(\gamma)$,
      and moreover, for every "cyclic segment" $\sigma$
      that is not "incident@@segment" to any "external variable", we create a new variable
      $x_\sigma$;
    \item for every "cyclic segment" $\sigma$, we have a self-loop around $x_{\sigma}$,
      and for any "external variable" $x$ and $y$ of $\gamma$, we have an edge
      from $x$ to $y$ in $\seggraph(\gamma)$ for any "segment" starting at $x$ and
      ending at $y$.
  \end{itemize}
\end{definition}

The notion is illustrated in \Cref{fig:segment-graph}.
The motivation behind "segments" is that they are essentially the
dual to "atom refinements".
\end{toappendix}

\begin{fact}
  \AP\label{fact:segment-graph-is-contraction}
  The "segment graph" of $\gamma$ can always be obtained from its "underlying
  graph" by "contracting internal variables". 
\end{fact}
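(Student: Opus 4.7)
The plan is to establish the fact by processing the "segments" of $\gamma$ one at a time and showing that the sequence of "contractions@@var" of "internal variables" along each "segment" produces exactly the corresponding edge of $\seggraph(\gamma)$. By \Cref{fact:partition-into-segments}, the "segments" partition the "atoms" of $\gamma$ (hence the edges of $\underlying{\gamma}$), so processing them in turn affects disjoint parts of the graph and the order of the "contractions@@var" is immaterial.

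Fix a "segment" $\sigma = x_0 \atom{L_1} x_1 \atom{L_2} \cdots \atom{L_n} x_n$ of $\gamma$. By definition, each intermediate variable $x_1, \dotsc, x_{n-1}$ is "internal", hence has both in-degree and out-degree exactly $1$ in $\underlying{\gamma}$. Therefore the two edges of $\sigma$ incident to $x_i$ are the unique edges at $x_i$, and "contracting@@var" $x_i$ is well-defined inside $\underlying{\gamma}$ and simply replaces these two edges by a single edge between its neighbours within $\sigma$. Iterating this for $i=1, \dotsc, n-1$ (in any order) collapses $\sigma$ to a single edge from $x_0$ to $x_n$.

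Now I distinguish the three cases present in the definition of the "segment graph". If $\sigma$ is not "cyclic", then $x_0$ and $x_n$ are "external" (by maximality of $\sigma$), and the resulting edge is exactly the edge from $x_0$ to $x_n$ created in $\seggraph(\gamma)$. If $\sigma$ is "cyclic" and "incident@@segment" to some "external variable" (which must then equal $x_0 = x_n$), the final edge is a self-loop at that external vertex, matching the definition. Finally, if $\sigma$ is an isolated cycle, then $x_0 = x_n$ is "internal"; after collapsing $x_1, \dotsc, x_{n-1}$ as above we obtain a self-loop at $x_0$, and the remaining vertex $x_0$ plays the role of the fresh variable $x_\sigma$ introduced in \Cref{defn:segmentgraph}. (In this last case we do not further "contract@@var" $x_0$, as it no longer has in- and out-degree one within the segment after it becomes a self-loop.)

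Performing this procedure on every "segment" of $\gamma$ contracts exactly the set of "internal variables" of $\gamma$, and produces a graph whose vertex set consists of the "external variables" together with one fresh vertex per isolated cyclic "segment", and whose edges are precisely those described in \Cref{defn:segmentgraph}. Hence the resulting graph is $\seggraph(\gamma)$, as desired. No obstacle of substance arises; the only point requiring care is treating isolated cyclic "segments" separately so that the fresh vertex $x_\sigma$ is correctly accounted for.
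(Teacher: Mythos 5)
Your proof is correct. The paper states this as a bare Fact and offers no proof of its own, treating it as immediate from the definitions; your argument supplies exactly the routine verification one would write, segment by segment, and it correctly isolates the one point that genuinely needs care: the basepoint of an isolated cyclic segment ends up carrying a single self-loop, so it cannot be contracted further (the definition of contraction requires a pair of \emph{distinct} atoms) and survives as the fresh vertex $x_\sigma$ of the segment graph. No gap.
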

\begin{figure}
  \centering
  \begin{minipage}{0.38\textwidth}
    \centering
    \includegraphics[width=\linewidth]{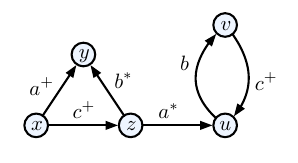}
    \subcaption{%
      \AP\label{fig:example-crpq}%
      A "CRPQ" $\gamma$.
    }
  \end{minipage}%
  \hfill%
  \begin{minipage}{0.28\textwidth}
    \centering
    \includegraphics[width=\linewidth]{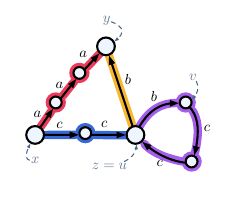}
    \subcaption{%
      \AP\label{fig:segments-expansion}%
      An "expansion" $\xi$ of $\gamma$, together with its "segments".
    }
  \end{minipage}%
  \hfill%
  \begin{minipage}{0.3\textwidth}
    \centering
    \includegraphics[width=.66\linewidth]{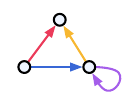}
    \subcaption{%
      \AP\label{fig:segment-graph-expansion}%
      The "segment graph" of $\xi$.
    }
  \end{minipage}%
  \caption{%
  \AP\label{fig:seggraph-of-expansion}
    Illustration of \Cref{prop:seggraph-of-expansion}:
    \Cref{fig:segment-graph-expansion} can be obtained as a "minor" of 
    \Cref{fig:example-crpq}.
  }
\end{figure}

\begin{propositionrep}
  \AP\label{prop:seggraph-of-expansion}
  Let $\gamma$ be a "CRPQ" and $\anexpansion$ be an "expansion" of $\gamma$.
  Then $\seggraph(\anexpansion)$ is a "minor" of
  the "underlying graph" of $\gamma$.%
  \end{propositionrep}

In particular, $\nbseg{\anexpansion} \leq \nbatoms{\gamma}$.
See \Cref{fig:seggraph-of-expansion} for an example.
\begin{proof}
  The "underlying graph" of $\anexpansion$ is obtained from
  the "underlying graph" of $\gamma$ by:
  \begin{enumerate}
    \item contracting some edges---corresponding to "atom refinements"
      for the word $\varepsilon$,
         \item potentially removing isolated vertices,\footnote{This can happen "eg" in the "atom refinement" of $x \atom{a^*} x$ when dealing with the empty word.} , and
    \item "refining" some edges---corresponding to "atom refinements"
      for words of length at least 2.
  \end{enumerate}
  Let $G'$ be the "underlying graph" of $\gamma$ to which we applied 
  all operation of type (1) and (2). By construction,
  $G'$ is a "minor" of $\underlying{\gamma}$.
  Notice then that if $H$ is a graph obtained by "refining" one edge of $G'$,
  then $\seggraph(H)$ and $\seggraph(G')$ are isomorphic.
  By trivial induction, it follows that $\seggraph(\anexpansion)$
  is isomorphic to $\seggraph(G')$.
  In turn, by \Cref{fact:segment-graph-is-contraction}, $\seggraph(G')$
  is an "edge contraction" of $G'$, which concludes the proof
  since the latter is a "minor" of $\underlying{\gamma}$.
\end{proof}
The next proposition provides a helpful tool to prove lower bounds on the
number of "atoms"---but also on the structure---required to express a query.
\AP\phantomintro{Semantical Structure}

\begin{theorem}[\reintro{Semantical Structure}]
  \AP\label{thm:structure-theorem}
  Let $\Gamma$ be a "UCRPQ".
  Let $\anexpansion$ be a "hom-minimal" "expansion" of $\Gamma$,
  and $\Delta$ be any "UCRPQ" "equivalent" to $\Gamma$.
  Then there exists some $\delta \in \Delta$ "st" the "segment graph" $\seggraph(\core(\anexpansion))$
  of the "core" of $\anexpansion$ is a "minor" of the "underlying graph" of $\delta$. 
\end{theorem}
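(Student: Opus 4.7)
First, I will use the hom-minimality of $\xi$ to produce some $\delta \in \Delta$ together with an expansion $\xi' \in \Exp(\delta)$ that is hom-equivalent to $\xi$. Since $\xi \in \Exp(\Gamma)$ and $\Gamma \semequiv \Delta$, \Cref{prop:cont-char-exp-st} gives some $\xi' \in \Exp(\Delta)$---which is an expansion of some $\delta \in \Delta$---with $\xi' \homto \xi$. Applying the same proposition in the reverse direction to $\xi' \contained \Gamma$ yields $\xi'' \in \Exp(\Gamma)$ with $\xi'' \homto \xi'$, and by composition $\xi'' \homto \xi$. The hom-minimality of $\xi$ then forces $\xi \homto \xi''$, closing the chain into a hom-equivalence between $\xi$ and $\xi'$. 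Consequently $\kappa := \core(\xi) \cong \core(\xi')$ embeds into $\xi'$, and I may pick a retraction $r \colon \xi' \to \kappa$ acting as the identity on $\kappa$.

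Second, I will establish a structural ``all-or-nothing'' property: inside each segment of $\xi'$, the atoms of $\kappa$ are either all present or all absent. For any vertex $v \in \vertex{\kappa}$ that is internal in $\xi'$ (non-output, in-degree and out-degree both one), the retraction sends the unique incoming edge $u \atom{a} v$ of $\xi'$ to an incoming edge $r(u) \atom{a} v$ of $v$ in $\kappa$; since $\kappa \subseteq \xi'$ also bounds the in-degree of $v$ in $\kappa$ from above by one, exactly one incoming edge survives, and the symmetric argument for outgoing edges makes $v$ internal in $\kappa$. Hence both $\xi'$-atoms incident to $v$ lie in $\kappa$, and propagating along any $\xi'$-segment yields the claim. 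In particular, the atom set of $\kappa$ decomposes as a disjoint union of entire $\xi'$-segments, and each $\kappa$-segment is a maximal chain of such $\xi'$-segments glued at external-in-$\xi'$ vertices that happen to be internal in $\kappa$.

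Third, I will exhibit $\seggraph(\kappa)$ as a minor of $\seggraph(\xi')$ and then compose with \Cref{prop:seggraph-of-expansion}. Starting from $\seggraph(\xi')$, I delete every edge whose underlying $\xi'$-segment is absent from $\kappa$, delete every remaining vertex not in $\vertex{\kappa}$, and then, for each vertex that is external in $\xi'$ but internal in $\kappa$, I contract one of its two incident edges; this collapses each concatenated chain of $\xi'$-segments into a single edge, with cyclic $\kappa$-segments becoming self-loops on the corresponding merged vertex. The resulting graph is isomorphic to $\seggraph(\kappa)$, so $\seggraph(\core(\xi))$ is a minor of $\seggraph(\xi')$, and transitivity with \Cref{prop:seggraph-of-expansion}---which gives $\seggraph(\xi')$ as a minor of $\underlying{\delta}$---yields the conclusion. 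The main technical obstacle is the second stage: the retraction-based argument is exactly what prevents refinement vertices introduced by the expansion from appearing spuriously as external vertices of $\seggraph(\kappa)$ with no counterpart in $\underlying{\delta}$.
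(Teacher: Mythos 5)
Your proof is correct and follows the same overall skeleton as the paper's: the opening chain $\anexpansion'_\Gamma \homto \anexpansion_\Delta \homto \anexpansion_\Gamma$ closed up by hom-minimality, and the final composition with \Cref{prop:seggraph-of-expansion}, are exactly the paper's steps. Where you diverge is the middle passage from ``$\core(\anexpansion)$ embeds into $\anexpansion_\Delta$'' to ``$\seggraph(\core(\anexpansion))$ is a minor of $\seggraph(\anexpansion_\Delta)$.'' The paper gets this from \Cref{coro:embedding-segments}: it observes that the embedding sends in-degree-$0$ (resp.\ out-degree-$0$) vertices to vertices of the same kind, deduces that external variables map to external variables, and that the image of each segment is a union of segments (\Cref{prop:union-segments}), so the segment graph embeds as a topological minor. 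You instead exploit the retraction $r\colon \xi' \to \kappa$ to prove the sharper ``all-or-nothing'' property that each $\xi'$-segment is either wholly contained in or disjoint from $\kappa$, and then build the minor by explicit deletions and contractions. Both arguments are sound; yours is more self-contained (it bypasses \Cref{prop:union-segments} and \Cref{lemma:hom-segments}) and makes the minor operations concrete, at the cost of a slightly more delicate bookkeeping step (e.g., the dummy vertices $x_\sigma$ for isolated cyclic segments should be retained rather than literally deleted as ``vertices not in $\vertex{\kappa}$,'' though this only matters up to isomorphism). The retraction-based degree argument you give is essentially the same mechanism the paper uses implicitly to justify that the embedding preserves degree-$0$ vertices, so the two proofs are close in spirit.
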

\begin{proof}%
  Let $\Gamma$ be a fixed "UCRPQ", and let $\Delta$ be a "equivalent" "UCRPQ".
  Let $\anexpansion_\Gamma$ be a "hom-minimal" "expansion" of $\Gamma$.
  Since $\Gamma \contained \Delta$, there exists an "expansion" $\anexpansion_\Delta$ of $\Delta$
  "st" $\anexpansion_\Delta \homto \anexpansion_\Gamma$. Likewise, since $\Delta \contained \Gamma$, there exists an "expansion" $\anexpansion'_\Gamma \in \Exp(\Gamma)$ "st"
  $\anexpansion'_\Gamma \homto \anexpansion_\Delta$. Overall, we have
  $\anexpansion'_\Gamma \homto \anexpansion_\Gamma$ and so,
  by "hom-minimality" of $\anexpansion_\Gamma$, it is "hom-equivalent" to $\anexpansion'_\Gamma$. In turn, this implies that $\anexpansion_\Delta$ is "hom-equivalent" to $\anexpansion_\Gamma$, 
  and thus there exists an "embedding" of
  $\core(\anexpansion_\Gamma)$ into $\anexpansion_\Delta$.
  Note moreover that such an "embedding" must send variables of
  in-degree 0 (resp. out-degree 0) to nodes of in-degree 0 (resp. out-degree 0)
  and so by \Cref{coro:embedding-segments},
  there is an "embedding" from $\seggraph(\core(\anexpansion_\Gamma))$
  into $\seggraph(\anexpansion_\Delta)$.
  Letting $\delta$ be the "disjunct" of $\Delta$ of which $\anexpansion_\Delta$ is an "expansion",
  \Cref{prop:seggraph-of-expansion} implies that $\seggraph(\anexpansion_\Delta)$ is a
 "minor" of the "underlying graph" of $\delta$.
  Hence, $\seggraph(\core(\anexpansion_\Gamma))$ is a subgraph of a "minor", and hence a "minor", of the "underlying graph" of $\delta$.
\end{proof}

\begin{corollaryrep}[of \Cref{thm:structure-theorem}]\AP\label{coro:strong-min}
  Every "strongly minimal" "UCRPQ" is "minimal".
\end{corollaryrep}
\begin{proof}
  The number of edges of $\seggraph(\core(\anexpansion))$ equals $\nbseg{\core(\anexpansion)}$, and a "minor" can only decrease the number of edges.
\end{proof}
In fact, it can be seen that the assumption that $\anexpansion$ is "hom-minimal" in
\Cref{thm:structure-theorem} is necessary as otherwise the statement would be false (see \Cref{rk:structure-theorem} for details).
Also, note in particular that \Cref{thm:structure-theorem} implies
$\nbseg{\core(\anexpansion)} \leq \nbatoms{\Delta}$. 
But it can also be used to obtain lower bounds on, for instance,
the tree-width of $\Delta$, and hence the one-way semantic tree-width\footnote{Defined in
\cite[\S 1, p. 7]{FM2023semantic}.} of $\Gamma$, 
or more generally to prove that $\Gamma$ cannot be "equivalent" to a "UCRPQ" whose
underlying graphs all belong to a "minor"-closed class of graphs.

\begin{toappendix}
  \begin{proposition}
    \AP\label{prop:union-segments}
    Let $\gamma$ be a "CRPQ".
    The set of "atoms" of any path $x_0 \atom{a_1} \cdots \atom{a_n} x_n$ of $\gamma$
    where either $x_0 = x_n$ or where both $x_0$ and $x_n$ are "external" 
    is a finite union of "segments" of $\gamma$.
  \end{proposition}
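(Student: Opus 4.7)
The plan is to exploit \Cref{fact:partition-into-segments}, which states that the "segments" of $\gamma$ partition its set of "atoms". Accordingly, it suffices to prove that any "segment" of $\gamma$ that meets $\{a_1,\dotsc,a_n\}$ is entirely contained in it, \emph{i.e.}, that whenever $a_i$ lies in a "segment" $\sigma$, every "atom" of $\sigma$ is one of the $a_j$'s.

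I would fix such $i$ and $\sigma$, and trace $\sigma$ forward from $a_i$ step by step. The "atom" $a_i$ ends at the vertex $x_i$. If $x_i$ is "external", the "segment" $\sigma$ stops there by definition of "segment". If instead $x_i$ is "internal", then $x_i$ has out-degree exactly $1$ in $\gamma$, hence a unique outgoing "atom" of $\gamma$, which must be the next "atom" of $\sigma$. When $i < n$, this unique outgoing "atom" is $a_{i+1}$, since $a_{i+1}$ already leaves $x_i$ in the given path. When $i = n$, the hypothesis rules out $x_n$ being "external", so we must be in the cyclic case $x_0 = x_n$, and the unique outgoing "atom" of $x_n = x_0$ is $a_1$. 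A symmetric argument, exchanging out-degree with in-degree, handles the backward trace from $a_i$.

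Iterating these steps, the trace of $\sigma$ never leaves $\{a_1,\dotsc,a_n\}$ until it either reaches an "external" vertex of the path (terminating $\sigma$) or closes up into a "cyclic segment", yielding $\sigma \subseteq \{a_1,\dotsc,a_n\}$ and hence the desired conclusion by \Cref{fact:partition-into-segments}. The only mildly delicate point is the cyclic case $x_0 = x_n$ with both endpoints "internal": one has to verify that the forward trace consistently wraps around from $a_n$ back to $a_1$ and closes up without escaping the path, which is guaranteed precisely by uniqueness of the outgoing "atom" at $x_0 = x_n$ together with $a_1$ being that "atom".
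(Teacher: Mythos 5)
Your proof is correct, but it takes a genuinely different route from the paper's. The paper argues top-down, by induction on the length of the path: it first normalizes the cyclic case by a circular permutation, then distinguishes three cases (the path is itself a segment; the path can be split at an external middle vertex; the path can be split at a middle occurrence of $x_0 = x_n$), checks that these are exhaustive, and recurses on the two halves. You instead argue bottom-up from \Cref{fact:partition-into-segments}: every segment that meets the path's atom set is entirely absorbed into it, which you establish by tracing the segment forward and backward from a shared atom, using that internal variables have in- and out-degree exactly $1$; the hypothesis ($x_0 = x_n$ or both endpoints external) enters only to justify wrapping around at the two ends of the path. Your version buys a cleaner argument that sidesteps the paper's exhaustiveness check and is transparently robust to non-simple paths; what the paper's version buys is an explicit recursive construction of the decomposition. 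The one point you should make explicit is the step ``which must be the next atom of $\sigma$'': when the trace reaches an internal vertex, you need that the segment genuinely continues through it, i.e.\ that a maximal internal path cannot terminate at an internal vertex unless it is cyclic. This does follow from maximality (an internal non-cyclic endpoint could always be extended along its unique outgoing or incoming atom), but it is doing real work in your argument and is worth a sentence.
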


  \begin{proof}
    The statement deals with the set of "atoms" of the path---and not the path itself---,
    so "wlog", up to a circular permutation of the path, we assume that (\adforn{75})
    if $x_0 = x_n$
    then either $x_0$ and $x_n$ are "external", or all $x_i$'s ($i \in \lBrack 1,n-1\rBrack$)
    are "internal".

    We prove the statement by induction on the length of the path.
    We identify three cases:
    \begin{enumerate}
      \item each $x_i$ ($i \in \lBrack 1,n-1 \rBrack$) is both
        "internal" and distinct from all $x_j$'s ($j\in \lBrack 0,n\rBrack$);
      \item $x_0$ and $x_n$ are "external" and there exists
        $i \in \lBrack 1,n-1 \rBrack$ "st" $x_i$ is "external";
      \item $x_0 = x_n$ and there exists $k \in \lBrack 1,n-1 \rBrack$
            "st" $x_k = x_0 \mathrel{(=} x_n)$.
    \end{enumerate}
    Next, we show that this covers all possible cases.

    If we are not in the first case, then either some
    $x_i$ ($i \in \lBrack 1,n-1 \rBrack$) is "external" or is equal to some $x_j$ ($j\in \lBrack 0,n\rBrack$). For the former, by (\adforn{75}) we get that $x_0$ and $x_n$ are necessarily
    "external", and we fall in case 2.
    
    For the former, we know that all $x_i$'s 
    ($i \in \lBrack 1,n-1 \rBrack$) are "internal", and there exists $i \in \lBrack 1,n-1 \rBrack$
    and $j\in \lBrack 0,n\rBrack$ "st" $x_i = x_j$. Up to renaming the variables, we get 
    that $i<j$ and $(i,j) \neq (0,n)$. 
    Recall that all $x_k$'s ($k \in \lBrack 1,n-1 \rBrack$) are "internal",
    and so from $x_i = x_j$ we get by trivial induction that $x_0 = x_{j-i}$.
    In particular, $x_0$ must be "internal" and so $x_0 = x_n$.
    Letting $k \defeq j-i$ we have $k \in \lBrack 1,n-1 \rBrack$ "st" 
    $x_k = x_0 = x_n$, and so we are in case 3.

    We can now proceed with the induction.
    \begin{enumerate}
      \item For the base case, we have a path where each $x_i$ ($i \in \lBrack 1,n-1 \rBrack$)
        is both "internal" and distinct from all $x_j$'s ($j\in \lBrack 0,n\rBrack$).
        By definition this path is a "segment".
      \item In the second case, $x_0$ and $x_n$ are "external", and there exists
        some $i \in \lBrack 1,n-1 \rBrack$
        "st" $x_i$ is "external". We use the induction hypothesis on the paths
        $x_0 \atom{a_1} \cdots \atom{a_i} x_i$ and $x_i \atom{a_{i+1}} \cdots \atom{a_n} x_n$
        and the conclusion follows.
      \item In the last case, there exists $k \in \lBrack 1,n-1 \rBrack$
          "st" $x_0 = x_k = x_n$, and the conclusion follows from applying the
          induction on $x_0 \atom{a_1} \cdots \atom{a_k} x_k$ and
          $x_k \atom{a_{k+1}} \cdots \atom{a_n} x_n$.
    \end{enumerate}
    This concludes the induction and the proof.
  \end{proof}

  \begin{lemma}
    \AP\label{lemma:hom-segments}
    Let $\gamma,\delta$ be "CRPQs".
    If $f\colon \delta \to \gamma$ is a "homomorphism" that sends "external variables"
    of $\delta$ on "external variables" of $\gamma$, then there is a
    function from nodes of $\seggraph(\delta)$ to nodes of $\seggraph(\gamma)$  
    that sends an edge of $\seggraph(\delta)$ to a path of $\seggraph(\gamma)$.
  \end{lemma}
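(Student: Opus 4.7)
The plan is to define the vertex map $g\colon \vertex{\seggraph(\delta)} \to \vertex{\seggraph(\gamma)}$ by reusing $f$ on the "external variables" of $\delta$, to handle the ``extra'' vertices coming from "cyclic segments" of $\delta$ not "incident@@segment" to any "external variable" by tracking where the image of each such cycle lies in $\gamma$, and then to invoke \Cref{prop:union-segments} to produce the required walks for each edge.

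\textbf{Step 1 (defining $g$ on vertices).} For every "external variable" $x$ of $\delta$, set $g(x) \defeq f(x)$; by hypothesis $f(x)$ is "external" in $\gamma$, hence a vertex of $\seggraph(\gamma)$. For every "cyclic segment" $\sigma$ of $\delta$ not "incident@@segment" to any "external variable" (with associated vertex $x_\sigma$ in $\seggraph(\delta)$), look at the closed walk $f(\sigma)$ in $\gamma$. If $f(\sigma)$ visits some "external variable" $y$ of $\gamma$, set $g(x_\sigma) \defeq y$. Otherwise, every vertex visited by $f(\sigma)$ is "internal" in $\gamma$; since every "internal" vertex of $\gamma$ has in- and out-degree $1$, the walk $f(\sigma)$ is deterministically forced to iterate along one cycle, which is a "cyclic segment" $\sigma'$ of $\gamma$ not "incident@@segment" to any "external variable", and we set $g(x_\sigma) \defeq x_{\sigma'}$.

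\textbf{Step 2 (edges go to walks).} Let $e$ be an edge of $\seggraph(\delta)$, arising from a "segment" $\sigma\colon x_0 \atom{L_1} x_1 \cdots \atom{L_n} x_n$ of $\delta$ with endpoints $u,v$ in $\seggraph(\delta)$. Applying $f$ yields a walk $\pi \defeq f(x_0) \atom{L_1} f(x_1) \cdots \atom{L_n} f(x_n)$ in $\gamma$. Whether $\sigma$ is non-cyclic (both endpoints "external" in $\delta$, so their $f$-images are "external" in $\gamma$), cyclic and "incident@@segment" to an "external" of $\delta$ (endpoints coincide and map to an "external" of $\gamma$), or cyclic and not "incident@@segment" to any "external" of $\delta$, the hypotheses of \Cref{prop:union-segments} are met, and so the set of atoms of $\pi$ is a finite union of "segments" of $\gamma$. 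To extract an \emph{ordered} walk in $\seggraph(\gamma)$, traverse $\pi$ from $f(x_0)$ to $f(x_n)$ and cut it at every "external variable" of $\gamma$ it visits; between two consecutive cuts the portion of $\pi$ traversed lies inside a single "segment" of $\gamma$ (again by the in/out-degree $1$ property of "internal" vertices of $\gamma$), contributing a single edge of $\seggraph(\gamma)$. Concatenating these edges yields the desired walk from $g(u)$ to $g(v)$.

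\textbf{Main obstacle.} The delicate point is Step 1 when $f(\sigma)$ avoids every "external variable" of $\gamma$: one must argue that the closed walk $f(\sigma)$ is not merely covered by "segments" of $\gamma$ (which is all that \Cref{prop:union-segments} delivers in isolation) but is actually confined to a single "cyclic segment" of $\gamma$ not "incident@@segment" to any "external variable", so that $g(x_\sigma)$ is unambiguously defined. This requires a local rigidity argument based on the fact that every "internal" vertex of $\gamma$ has in-degree and out-degree exactly $1$, which rules out any branching of a closed walk confined to internal vertices. Once this is settled, Step 2 reduces to routine bookkeeping on top of the same degree property.
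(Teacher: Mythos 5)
Your proof is correct and follows essentially the same route as the paper, which simply invokes \Cref{prop:union-segments} to conclude that the image of each segment is a union of segments of $\gamma$; you additionally spell out the vertex map (including the delicate case of cyclic segments mapping entirely into internal vertices) and the ordering argument that the paper leaves implicit. The extra rigidity argument via the in/out-degree~$1$ property of internal variables is exactly the right way to fill those gaps.
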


  \begin{proof}
    By \Cref{prop:union-segments}, the image
    $f[\sigma]$ by $f$ of any "segment" $\sigma$ of $\delta$ is a union of
    "segments" of $\gamma$.
  \end{proof}

  \begin{corollary}
    \AP\label{coro:embedding-segments}
    Let $\gamma, \delta$ be two "CRPQs" such that there is an "embedding" from $\delta$ to $\gamma$
    "st" every node of in-degree 0 (resp. out-degree 0) is sent on a node of in-degree 0
    (resp. out-degree 0). 
    Then $\seggraph(\delta)$ is a "minor" of $\seggraph(\gamma)$.
  \end{corollary}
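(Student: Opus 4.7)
The plan is to construct a subdivision of $\seggraph(\delta)$ inside $\seggraph(\gamma)$, which immediately yields the desired "minor" relation (since every subdivision gives a "minor"). The key enabling observation is that under the hypothesis of the corollary, combined with the facts that an "embedding" is injective on "atoms" and that a "homomorphism" preserves "output variables", the map $f$ sends every "external variable" of $\delta$ to an "external variable" of $\gamma$. Indeed, "external" means being either (i) an "output variable", (ii) of in-degree $\neq 1$, or (iii) of out-degree $\neq 1$: case (i) is handled by the definition of "homomorphism"; cases (ii) and (iii) with degree $0$ are exactly the corollary's hypothesis; and cases (ii) and (iii) with degree $\geq 2$ follow because $f$, being injective on "atoms", cannot decrease in- or out-degrees.

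With this in hand, I apply \Cref{lemma:hom-segments} to obtain a function $g\colon V(\seggraph(\delta)) \to V(\seggraph(\gamma))$ sending every edge of $\seggraph(\delta)$ to a path of $\seggraph(\gamma)$. For each edge $e$ of $\seggraph(\delta)$ coming from a "segment" $\sigma$ of $\delta$, let $P_e$ denote the associated path: its endpoints are the $g$-images of those of $e$, and its internal vertices are "external variables" of $\gamma$ traversed by the image path $f[\sigma]$.

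The core of the argument is then to show that the $P_e$'s form a subdivision of $\seggraph(\delta)$ inside $\seggraph(\gamma)$: namely, (i) $g$ is injective on $V(\seggraph(\delta))$, and (ii) the paths $P_e$ are pairwise internally vertex-disjoint and meet the set $g(V(\seggraph(\delta)))$ only at their endpoints. Both properties reduce to the injectivity of $f$ on variables together with \Cref{fact:partition-into-segments}: the "internal variables" of distinct "segments" of $\delta$ are pairwise distinct, hence so are their $f$-images; and an "internal variable" of $\delta$ cannot have the same $f$-image as an "external variable" of $\delta$ by the same injectivity.

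The main subtlety concerns the "cyclic segments" $\sigma$ of $\delta$ that are not "incident@@segment" to any "external variable", since these are represented in $\seggraph(\delta)$ by artificial vertices $x_\sigma$ rather than actual variables of $\delta$. A short case analysis handles these: the image $f[\sigma]$ is a closed walk in $\gamma$ going through images of "internal variables" of $\delta$. If any such image is an "external variable" of $\gamma$, I set $g(x_\sigma)$ to be any such vertex, and the closed walk becomes a closed path in $\seggraph(\gamma)$ that contracts to a self-loop; otherwise, all vertices on $f[\sigma]$ are "internal variables" of $\gamma$, and a degree-$1$ propagation argument (exploiting the injectivity of $f$ on "atoms") forces $f[\sigma]$ to coincide exactly with a "cyclic segment" $\sigma'$ of $\gamma$, so setting $g(x_\sigma) \defeq x_{\sigma'}$ matches the two self-loops directly.
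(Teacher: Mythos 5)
Your proof is correct and follows essentially the same route as the paper's: first show that the embedding sends "external variables" to "external variables", then invoke \Cref{lemma:hom-segments} and use injectivity of the embedding on variables and "atoms" to get that the image paths are pairwise disjoint, yielding a subdivision and hence a "minor". You are in fact somewhat more careful than the paper's own proof, which glosses over both the "output-variable" case of externality and the artificial vertices $x_\sigma$ for isolated "cyclic segments".
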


  \begin{proof}
    Let $f\colon \delta \to \gamma$ be such an "embedding".
    \begin{claim}
      \AP\label{claim:hom-preserves-non-internality}
      If $x \in \delta$ is "external", then $f(x)$ is "external".
    \end{claim}
    Indeed, if $x$ has in-degree at least 2, or out-degree at least 2,
    so does $f(x)$ since $f$ is an "embedding".
    Otherwise, either $x$ has either in-degree 0 or out-degree 0,
    and so does $f(x)$ using the assumption we made on $f$.

    We then use \Cref{lemma:hom-segments}.
    Note that since $f$ is an "embedding", the segments of $\gamma$ occurring in
    $f[\sigma]$ must be distinct from the segments occurring in $f[\sigma']$
    for any "segment" $\sigma' \neq \sigma$.
    Overall, we have an injective map from nodes of $\seggraph(\delta)$ 
    to nodes of $\seggraph(\gamma)$, which sends an edge to a path,
    and moreover these paths are pairwise disjoint.
    This shows that $\seggraph(\delta)$ is a "minor" of $\seggraph(\gamma)$.
  \end{proof}

\end{toappendix}

\begin{toappendix}
\begin{remark}
  \AP\label{rk:structure-theorem}
  Note that the assumption that $\anexpansion$ is "hom-minimal" in
  \Cref{thm:structure-theorem} is necessary: consider the "CRPQ"
  $\gamma(x,y) \defeq x \atom{a^+} y \land x \atom{(aa)^+} y$.
  For $n,m \in \Np$, let $\anexpansion_{n,m}(x,y) \defeq x \atom{a^n} y \land x \atom{a^{2m}} y$.
  There are two cases:
  \begin{enumerate}
    \item \AP\label{rk:structure-theorem:1} If $n \neq 2m$, then $\nbseg{\core(\anexpansion_{n,m})} = 2$
      but $\anexpansion_{n,m}$ is not "hom-minimal" since $\anexpansion_{2m,m} \homto \anexpansion_{n,m}$ but $\anexpansion_{n,m} \nothomto \anexpansion_{2m,m}$.
    \item If $n = 2m$, then $\nbseg{\core(\anexpansion_{n,m})} = 1$
    and $\anexpansion_{n,m}$ is "hom-minimal".
  \end{enumerate}
  Hence, using \Cref{thm:structure-theorem}, we can only get a lower bound of
  one "atom" (and not two) on the size of any "UCRPQ" "equivalent" to $\gamma(x,y)$,
  which is consistent with the fact that $\gamma(x,y) \semequiv \gamma'(x,y)$
  where $\gamma'(x,y) \defeq x \atom{(aa)^+} y$.
  If \Cref{thm:structure-theorem} would allow for non "hom-minimal" queries we would obtain, by \Cref{rk:structure-theorem:1} a lower bound of 2 "atoms", which is false.
\end{remark}
\end{toappendix}

However, this is a sound but unsurprisingly not a complete characterization of "minimality".

\begin{proposition}
  There are "minimal" "(U)CRPQs@CRPQ" which are not "strongly minimal". 
\end{proposition}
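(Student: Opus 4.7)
The plan is to exhibit a specific "minimal" "CRPQ" that is not "strongly minimal". A clean witness is the "Boolean" one-atom query $\gamma \defeq x \atom{a^+} x$ over the alphabet $\A = \set{a}$: since $\gamma$ has a single "atom", $\nbatoms{\gamma} = 1$, and because $\gamma$ evaluates to false on any database with no $a$-edges, it is not "equivalent" to the empty "CRPQ". Hence $\gamma$ is trivially "minimal".

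To establish that $\gamma$ is nevertheless not "strongly minimal", I would first enumerate $\Exp(\gamma) = \set{\xi_n : n \geq 1}$, where $\xi_n \defeq x \atom{a^n} x$ is, up to isomorphism, a directed $a$-labelled cycle of length $n$. The next step is to characterize the "homomorphism" preorder on $\Exp(\gamma)$: by the classical fact that "homomorphisms" between directed cycles on a single label are governed by divisibility---any walk around $\xi_n$ has length divisible by $n$---one obtains $\xi_{n'} \homto \xi_n$ iff $n \mid n'$. Consequently, for every $n \geq 1$, $\xi_{2n} \homto \xi_n$ while $\xi_n \nothomto \xi_{2n}$, so $\xi_n$ fails to be "hom-minimal". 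Thus $\Exp(\gamma)$ contains \emph{no} "hom-minimal" "expansion" at all, and the defining condition of "strong minimality"---existence of a "hom-minimal" $\anexpansion \in \Exp(\gamma)$ with $\nbseg{\core(\anexpansion)} = \nbatoms{\gamma}$---fails vacuously.

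The only subtle step is the divisibility characterization of $\homto$, which is a routine verification. The underlying phenomenon is that the unbounded expression $a^+$ forces $\Exp(\gamma)$ to contain infinite strictly decreasing chains under $\homto$ (namely $\xi_n, \xi_{2n}, \xi_{4n}, \dotsc$), leaving no "hom-minimal" element. The same idea yields more substantive counterexamples with several "atoms"---for instance, $\gamma'(x, y) \defeq x \atom{a^+} y \land y \atom{b^+} y$ is "minimal" with two "atoms" (no one-"atom" "CRPQ" can simultaneously enforce an $a$-path from $x$ to $y$ and a $b$-cycle at $y$, as one verifies by standard database-separation arguments), and an analogous divisibility analysis on the $b$-loop shows that it admits no "hom-minimal" "expansion" either; but the one-"atom" example $\gamma$ already suffices for the proposition.
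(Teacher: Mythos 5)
Your proposal is correct and uses exactly the paper's witness, the Boolean one-atom query $x \atom{a^+} x$, with the same reasoning (trivial minimality plus the absence of any hom-minimal expansion); you merely spell out the divisibility argument for cycles that the paper leaves implicit. No issues.
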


\begin{proof}
  The "Boolean" "CRPQ" $\gamma() = x \atom{a^+} x$. Since it has one atom, it must be "minimal", but it has no "hom-minimal" "expansions".
\end{proof}

Finally, we show that checking strong minimality is at least as hard as checking containment.

\begin{propositionrep}
  \label{prop:lowerbound-strong-minimality}
  Testing whether a "(U)CRPQ@UCRPQ" is "strongly minimal" is "ExpSpace"-hard.
\end{propositionrep}
\begin{proof}
  We use the same reduction as in \Cref{prop:lowerbound-non-redundant}.
  If $\delta$ is "strongly minimal", then it is "minimal", and so it is
  "non-rendundant", and hence
  by the proof of \Cref{prop:lowerbound-non-redundant},
  $x' \atom{K} y' \not\contained \bigwedge_i x \atom{L_i} y$.
  
  We prove the converse implication: assume that
    $x' \atom{K} y' \not\contained \bigwedge_i x \atom{L_i} y$.
  Then there exists $\#u\# \in K$ "st" $x' \atom{\#u\#} y'$
  does not "satisfy" $\bigwedge_i x \atom{L_i} y$.
  Consider the "expansion" $\anexpansion$ obtained by replacing $K$ with $\#u\#$, and replacing 
  $L_i$ with $@_i$ for each $i$.
  Then $\anexpansion$ is a "core" because of the fresh letters $\#$ and $@_i$.
  Moreover, we claim that it has to be "hom-minimal".
  Indeed, assume that some other "expansion" $\anexpansion'$ is "st" $\anexpansion'\homto \anexpansion$.
  Then because of $\#$, the "atom expansion" of $x' \atom{K} y'$ in $\anexpansion'$ 
  must be mapped on $x' \atom{\# u \#} y'$ in $\anexpansion$.
  Then, by definition of $u$, the "atom expansions" of $\bigwedge_i x \atom{L_i} y$
  cannot be mapped on $x' \atom{\# u \#} y'$, and so they must be mapped on
  $\bigwedge_i x \atom{@_i} y$, and hence $\anexpansion' = \anexpansion$.
  Hence, this shows that $\delta$ is "strongly minimal".
  Overall, we showed that $x' \atom{K} y' \not\contained \bigwedge_i x \atom{L_i} y$
  "iff" $\delta$ is "strongly minimal", which concludes our reduction.
\end{proof}

\section{An Upper Bound for Minimization of CRPQs}
\AP\label{sec:upperCRPQ}
In this section we show that the "minimization problem" for "CRPQs" is decidable, in particular, it belongs to the class "2ExpSpace".

\begin{theorem}
\AP\label{thm:2expspace-min-crpqs}
The "minimization problem" for "CRPQs" is in "2ExpSpace".
\end{theorem}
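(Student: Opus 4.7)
The overall strategy is a brute-force enumeration of candidate minimal queries $\delta$ together with an equivalence test against $\gamma$. Given $\gamma$ whose atom languages are described by NFAs of total size $n$ and a parameter $k$, the aim is to search through all "CRPQs" $\delta$ with at most $k$ atoms such that $\gamma \semequiv \delta$. Two ingredients are needed: (i) a bound on the size of the regular languages that a witness $\delta$ needs to use, and (ii) the standard "ExpSpace" algorithm for "containment"/"equivalence" of "UCRPQs".

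The main technical step I would prove is the following size bound: \emph{if $\gamma$ admits an equivalent "CRPQ" with at most $k$ atoms, then it admits one, $\delta$, in which every atom language is recognised by an NFA of size at most $2^{p(n,k)}$ for some fixed polynomial $p$.} The idea is that only the ``equivalence type'' of each language $L'$ appearing in an atom of $\delta$ really matters: two languages that induce the same set of "expansions" of $\delta$---viewed up to "hom-equivalence" with "expansions" of $\gamma$---are interchangeable inside $\delta$ without breaking $\delta \semequiv \gamma$. These equivalence types are controlled by product/subset-style constructions over the NFAs of $\gamma$ together with the shape (of at most $k$ atoms) of $\delta$, so a canonical representative for each type can be produced as an NFA of singly-exponential size in $n$ and $k$. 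Concretely, for every atom of $\delta$ one can read off, from $\gamma$'s NFAs, the relevant collection of pairs of reachable states reachable in the product, and replace $L'$ by the union of the corresponding "sublanguages"; the resulting NFA remains of singly-exponential size.

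Granted this bound, the algorithm is straightforward. Enumerate, one by one, all "CRPQs" $\delta$ with at most $k$ atoms whose atom NFAs have size at most $2^{p(n,k)}$: this is a doubly-exponential number of candidates, each of singly-exponential size, and cycling through them only requires a counter of exponentially many bits, hence fits in "2ExpSpace". For each candidate $\delta$, run the standard "ExpSpace" algorithm for "UCRPQ" "containment" twice to decide $\gamma \semequiv \delta$; since $\size{\delta}$ is singly exponential in $n$, this check uses space polynomial in $\size{\delta}$, i.e., doubly exponential in $n$. If any candidate is equivalent to $\gamma$, accept; otherwise, reject. The whole procedure fits in "2ExpSpace".

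The main obstacle, unsurprisingly, is the size-bound lemma: without it, the search space is infinite and brute force gives nothing. Establishing it requires showing that, once the underlying graph of $\delta$ is fixed, only singly-exponentially many ``semantically distinct'' choices of language per atom need to be considered, the rest being harmlessly replaceable by a canonical representative definable from $\gamma$'s NFAs. This is where the ``careful bounding of the sizes of the automata involved'' announced in the introduction is actually carried out; everything else---the enumeration and the final equivalence test---is a routine combination of brute force with the known "ExpSpace" algorithm for "UCRPQ" equivalence.
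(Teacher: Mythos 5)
Your overall architecture coincides with the paper's: a size-bound lemma for the languages of a witness query, followed by brute-force enumeration of candidates and the standard "ExpSpace" containment test. The genuine gap is in the size-bound lemma itself, which is the entire technical content of the theorem and which you assert rather than prove --- and assert in a form that is almost certainly too strong. You claim a witness $\delta$ can be chosen with atom NFAs of size $2^{p(n,k)}$, that is, singly exponential. If that were true, your own algorithm would run in "ExpSpace" rather than "2ExpSpace" (doubly-exponentially many candidates of singly-exponential size can be enumerated with a singly-exponential amount of memory, and each equivalence test $\gamma \semequiv \delta$ against a singly-exponential $\delta$ costs only singly-exponential space via the bound $O(\size{\gamma} + \size{\delta}^{c\cdot k})$), which would close the exponential gap between the paper's upper and lower bounds. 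The paper's careful analysis only yields a \emph{doubly}-exponential bound on the NFAs, and that is precisely why the final complexity is "2ExpSpace".

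Concretely, your replacement recipe --- ``read off the relevant pairs of reachable states in the product and replace $L'$ by the union of the corresponding "sublanguages"'' --- handles only the easy direction ($\alpha \contained \beta$, since enlarging languages preserves it) and gives no control over the hard direction $\beta \contained \gamma$: an arbitrary union of sublanguages $\subaut{\+A}{p}{q}$ can easily break containment in $\gamma$. The paper instead defines $u \weakequiv{\gamma} v$ when $u$ and $v$ lie in exactly the same sublanguages of $\gamma$'s NFAs, defines the "$\gamma$-type" of $u$ as the \emph{set} of tuples of $\weakequiv{\gamma}$-classes arising from all factorizations of $u$ into at most $\nbvar{\gamma}+1$ pieces, and replaces each $L$ by $\bigcup_{u \in L}\{z \mid \type{\gamma}{u} \subseteq \type{\gamma}{z}\}$; the proof of $\beta \contained \gamma$ then reassembles homomorphisms along ``basic paths'' determined by these factorizations. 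The automaton for $\{z \mid \type{\gamma}{u} \subseteq \type{\gamma}{z}\}$ must verify that $z$ admits \emph{every} factorization pattern that $u$ admits, which forces an intersection of exponentially many exponential-size automata --- hence the double-exponential bound. Without an argument of this kind (or a genuinely new one actually achieving your claimed single-exponential bound, which would be a strictly stronger result than the paper's), the proposal does not establish the theorem.
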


The proof of \Cref{thm:2expspace-min-crpqs} is based on a key lemma stated below. Intuitively, this lemma tells us that if a "CRPQ" $\gamma$ is "equivalent" to another "CRPQ" $\alpha$, then $\gamma$ is also equivalent to a "CRPQ" $\beta$, where $\beta$ has the same “shape” than $\alpha$ but the sizes of the NFAs appearing in $\beta$ are bounded by the size of $\gamma$. In particular, if $\gamma$  is equivalent to a  "CRPQ" with at most $k$ atoms, then it is equivalent to a "CRPQ" with at most $k$ atoms and NFAs of bounded sizes, and hence the search space in the minimization problem becomes finite. A careful analysis of this idea yields our "2ExpSpace" upper bound.

\begin{lemmarep}
\AP\label{lemma:crpq-size-bound}
Let $\gamma$ and $\alpha$ be two "CRPQs" such that $\alpha \contained \gamma$. Then there exists a "CRPQ" $\beta$ satisfying $\alpha \contained \beta\contained \gamma$ such that:
\begin{enumerate}
\item The "underlying graphs" of $\alpha$ and $\beta$ coincide. %
\item The size of each NFA appearing in $\beta$ is bounded by $f(\size{\gamma})$, where $f$ is a double-exponential function. %
\end{enumerate}
\end{lemmarep}

\begin{proof}
\proofcase{Construction of $\beta$.}
The idea is to define the "CRPQ" $\beta$ as the "CRPQ" obtained from $\alpha$ by replacing each regular language $L$ by a suitable regular language $\tilde{L}$, which depends also on $\gamma$. 
Consider the following equivalence relation on $\A^*$, where $\A$ is the underlying alphabet. Given $u,v\in \A^*$, we write $u \intro*\weakequiv{\gamma} v$ if for every NFA $\+A$ appearing on $\gamma$, and pair of states $p,q$ of $\+A$:
 $$\text{$u$ is accepted by $\subaut{\+A}{p}{q}$ $\iff$  $v$ is accepted by $\subaut{\+A}{p}{q}$}.$$
 Recall that  $\subaut{\+A}{p}{q}$ denote the ""sublanguage"" of $\+A$ recognized  when considering $\set{p}$ as the set of initial states and $\set{q}$ as the set of final states.
 For $u\in \A^*$, we define its ""$\gamma$-type"" to be:
  \[\type{\gamma}{u} \defeq \{
  ([u_1]_{\weakequiv{\gamma}},\dots,[u_\ell]_{\weakequiv{\gamma}}) \mid \text{$\ell\leq (\nbvar{\gamma} +1)$, and $u_1,\dots,u_\ell\in \A^*$ satisfy that $u=u_1\cdots u_\ell$} \}.
  \]
The idea is that $\type{\gamma}{u}$ encodes all the possible ways $u$ can be broken into $\ell\leq (\nbvar{\gamma}+1)$ subwords: we are not interested in the particular subwords $u_i$, but in their equivalence classes $[u_i]_{\weakequiv{\gamma}}$ with respect to $\weakequiv{\gamma}$.

We define the sought "CRPQ" $\beta$ to be the "CRPQ" obtained from $\alpha$ by replacing each regular language $L$ by $\tilde{L}$, where:
  \[\tilde{L} \defeq \bigcup_{u\in L} \{z\in \A^*\mid \type{\gamma}{u}\subseteq \type{\gamma}{z}\}.
  \]
  
  \medskip
  
\proofcase{Correctness of the construction.} By definition, the "underlying graphs" of $\alpha$ and $\beta$ are the same, and hence condition (1) holds. It remains to verify  $\alpha \contained \beta\contained \gamma$  and condition (2).

\proofsubcase{$\alpha \contained \beta\contained \gamma$.} 
Note that $u\in \{z\in \A^*\mid \type{\gamma}{u}\subseteq \type{\gamma}{z}\}$ always holds, and hence $L\subseteq \tilde{L}$. It follows that $\alpha \contained \beta$. 

We check $\beta\contained \gamma$ using Proposition~\ref{prop:cont-char-exp-st}. Assume $\beta$ is of the form $\bigwedge_{j=1}^m x_j \atom{\tilde{L}_j} y_j$. Consider an "expansion" $\anexpansion_\beta\in \Exp(\beta)$ defined by replacing each atom $x_j \atom{\tilde{L}_j} y_j$ by the path $x_j \atom{z_j} y_j$, for  $z_j\in \tilde{L}_j$. There must exist $u_j\in L_j$ such that $\type{\gamma}{u_j}\subseteq \type{\gamma}{z_j}$. Let $\anexpansion_{\alpha}\in \Exp(\alpha)$ be the expansion of $\alpha=\bigwedge_{j=1}^m x_j \atom{L_j} y_j$ obtained from replacing each atom $x_j \atom{L_j} y_j$ by the path $x_j \atom{u_j} y_j$. As $\alpha \contained \gamma$, then there exists an expansion $\eta_\alpha\in \Exp(\gamma)$ such that $\eta_\alpha\homto \anexpansion_\alpha$. Assume $\gamma$ is of the form $\bigwedge_{i=1}^r t_i \atom{M_i} s_i$ and $\eta_\alpha$ is of the form $\bigwedge_{i=1}^r t_i \atom{w_i} s_i$, where $w_i\in M_i$. Let $f$ be a homomorphism from $\eta_\alpha$ to $\anexpansion_\alpha$. We can decompose each path $x_j \atom{u_j} y_j$ in $\anexpansion_\alpha$ into
	\[
		x_j \atom{u_{j,1}} h_{j,1} \atom{u_{j,2}} \cdots \atom{u_{j,\ell-1}} h_{j,\ell-1}\atom{u_{j,\ell}} y_j
	\]
	where each $h_{j,p}$ is the image via $f$ of some variable in $\vars(\gamma)=\{s_1,t_1,\dots,s_r,t_r\}$ and each path $x \atom{u_{j,p}} x’$ satisfies that all of its internal variables are not  images via $f$ of variables in $\vars(\gamma)$. Note that $\ell\leq  (\nbvar{\gamma} +1)$. We say that the paths of the form $x \atom{u_{j,p}} x’$ are the ""basic paths"" of  $\anexpansion_\alpha$ with respect to $f$. Since $\type{\gamma}{u_j}\subseteq \type{\gamma}{z_j}$, each path $x_j \atom{z_j} y_j$ in $\anexpansion_\beta$ can be decomposed into
	\[
		x_j \atom{z_{j,1}} g_{j,1} \atom{z_{j,2}} \cdots \atom{z_{j,\ell-1}} g_{j,\ell-1}\atom{z_{j,\ell}} y_j
	\]
	where $z_{j,p}\weakequiv{\gamma} u_{j,p}$. Again, we say that the paths $x \atom{z_{j,p}} x’$ are the "basic paths" of  $\anexpansion_\beta$ with respect to $f$. We conclude by showing that there is $\eta_\beta\in \Exp(\gamma)$ such that $\eta_\beta\homto \anexpansion_\beta$. Intuitively, $\anexpansion_\alpha$ and $\anexpansion_\beta$ have ``equivalent’’ decompositions in terms of "basic paths". Hence, it is possible to turn $\eta_\alpha$ into another expansion $\eta_\beta\in \Exp(\gamma)$ by replacing "basic paths" of $\anexpansion_\alpha$  by their corresponding "basic paths" in $\anexpansion_\beta$. By doing so, we indeed obtain an expansion $\eta_\beta$ of $\gamma$ such that $\eta_\beta\homto \anexpansion_\beta$.

	Formally, observe that each path $t_i \atom{w_i} s_i$ of expansion $\eta_\alpha$ is mapped via $f$ to path in $\anexpansion_\alpha$ that can be decomposed into
		\[
		k_0 \atom{w_{i,1}} k_1 \atom{w_{i,2}} \cdots \atom{w_{i,n-1}} k_{n-1}\atom{w_{i,n}} k_n
	\]
	where $w_i=w_{i,1}\cdots w_{i,n}$, $k_0=f(t_i)$, $k_n=f(s_i)$ and each path $k_{q-1} \atom{w_{i,q}} k_q$ is a "basic path" of $\anexpansion_\alpha$ w.r.t.\ $f$. Each path $k_{q-1} \atom{w_{i,q}} k_q$ has a corresponding "basic path" $k’_{q-1} \atom{w_{i,q}’} k_q’$ of $\anexpansion_\beta$ w.r.t.\ $f$ defined in the natural way: if $k_o\in \vars(\alpha)=\{x_1,y_1,\dots,x_m,y_m\}$, then $k_o’=k_o$; if $k_o=h_{j,p}$, then $k_o’=g_{j,p}$; and if $w_{i,q}= u_{j,p}$, then $w_{i,q}’= z_{j,p}$. In particular, we have that $w_{i,q} \weakequiv{\gamma} w_{i,q}’$ and that the following path belongs to $\anexpansion_\beta$:
	\[
		k_0’ \atom{w_{i,1}’} k_1’ \atom{w_{i,2}’} \cdots \atom{w_{i,n-1}’} k_{n-1}’\atom{w_{i,n}’} k_n’
	\]
	It follows that the CQ $\eta_{\beta}$ obtained from $\gamma$ by replacing each atom $t_i \atom{M_i} s_i$ by the path  $t_i \atom{w_i’} s_i$, where $w’_{i}=w_{i,1}’\cdots w_{i,n}’$ satisfies that $\eta_{\beta}\homto \anexpansion_{\beta}$. It remains to show that  $\eta_{\beta}$ is actually an expansion of  $\gamma$, that is, each $w’_i\in M_i$. Suppose $M_i$ is represented by the NFA $\+A_i$. Since $w_i\in M_i$ and $w_i=w_{i,1}\cdots w_{i,n}$, there is a sequence $e_0,\dots,e_n$ of states of $\+A_i$ such that $e_0$ is initial, $e_n$ is final, and $w_{i,q}$ is accepted by $\subaut{\+A_i}{e_{q-1}}{e_q}$. As $w_{i,q}’\weakequiv{\gamma} w_{i,q}$, we have that $w_{i,q}’$ is also accepted by $\subaut{\+A_i}{e_{q-1}}{e_q}$, and hence $w’_{i}=w_{i,1}’\cdots w_{i,n}’$ is accepted by $\+A_i$.
	
\medskip
\proofsubcase{Bounding the size of NFAs in $\beta$.} 
First note that for any equivalence class $C$ of $\weakequiv{\gamma}$ the language $\{z\in \A^*\mid [z]_{\weakequiv{\gamma}} = C\}$ is accepted by an NFA $\+A_{C}$ of single-exponential size. Indeed, the class $C$ can be described by a set of triples
	\[
		\+T_C\subseteq \+T=\{(\+A,p,q)\mid \text{$\+A$ is an NFA in $\gamma$, and $p,q$ are states of $\+A$}\}
	\]
in such a way that $v\in C$ iff $v$ is accepted by $\subaut{\+A}{p}{q}$, for every $(\+A,p,q)\in \+T_C$, and $v$ is accepted by $\subaut{\+A}{p}{q}^{\complement}$, for every $(\+A,p,q)\not\in \+T_C$, where $\+A^{\complement}$ denotes the complement NFA of $\+A$. Hence $\+A_{C}$ can be written as the following intersection of NFAs:
\[
		\+A_{C} = \bigcap_{(\+A,p,q)\in \+T_C} \subaut{\+A}{p}{q} \cap  \bigcap_{(\+A,p,q)\not\in \+T_C} \subaut{\+A}{p}{q}^{\complement} 
	\]
	
The number of states in $\subaut{\+A}{p}{q}^{\complement}$ is at most $2^{r}$ and $|\+T|\leq r^2 \nbatoms{\gamma}$, where $r$ is the maximum number of states in an NFA of $\gamma$. Hence the number of states of $\+A_{C}$ is at most $2^{r^3\nbatoms{\gamma}}$, that is, single-exponential on $\gamma$.

Fix $u\in\A^*$. We claim that $\{z\in \A^*\mid \type{\gamma}{u}\subseteq \type{\gamma}{z}\}$ can be accepted by an NFA $\+A_u$ of at most double-exponential size on $\gamma$. It is easy to see that for every tuple $\bar{c} =(C_1,\dots,C_\ell)\in \type{\gamma}{u}$, the language $\{z\in \A^*\mid \bar{c} \subseteq \type{\gamma}{z}\}$ can be described by an NFA $\+A_{\bar{c}}$ of single-exponential size: we guess a decomposition $z=z_1\cdots z_\ell$ of the input word $z$ and check that each $z_i$ is accepted by $\+A_{C_i}$. The number of states of $\+A_{\bar{c}}$ is $O(\sum_{i=1}^{\ell} s_i)$, where $s_i$ is the number of states of $\+A_{C_i}$. Since $\ell\leq (\nbvar{\gamma} +1)$, this is $O(\nbvar{\gamma}2^{r^3\nbatoms{\gamma}})$. Now, the NFA $\+A_u$ is simply the intersection of all the NFAs in $\{\+A_{\bar{c}}\mid \bar{c}\in \type{\gamma}{u}\}$. The number of possible equivalence classes of $\weakequiv{\gamma}$ is at most $2^{|\+T|}\leq 2^{r^2 \nbatoms{\gamma}}$, and then the number of possible tuples of the form $\bar{c}=(C_1,\dots,C_\ell)$ is $N=O(2^{r^2 \nbatoms{\gamma}(\nbvar{\gamma} +1)})$. It follows that the number of states of $\+A_u$ is at most $O(\nbvar{\gamma}2^{r^3\nbatoms{\gamma} N})$, i.e., double-exponential on $\gamma$.

Finally, note that the number of possible $\type{\gamma}{u}$ is at most double-exponential on $\gamma$, more precisely, $2^N$. We conclude that every language $\tilde{L}$ in $\beta$ can be represented by an NFA $\+A_{\tilde{L}}$ which is the union of at most double-exponential many NFAs of the form $\+A_u$, each of these having at most double-exponential many states. We conclude that the size of $\+A_{\tilde{L}}$ can be bounded by a double-exponential function $f(\size{\gamma})$. 
\end{proof}

\begin{proofsketch}
The idea is to define the "CRPQ" $\beta$ as the "CRPQ" obtained from $\alpha$ by replacing each regular language $L$ by a suitable regular language $\tilde{L}$, which depends also on $\gamma$. Consider the following equivalence relation on $\A^*$, where $\A$ is the underlying alphabet. Given $u,v\in \A^*$, we write $u \intro*\weakequiv{\gamma} v$ if for every NFA $\+A$ appearing on $\gamma$, and pair of states $p,q$ of $\+A$:
 $$\text{$u$ is accepted by $\subaut{\+A}{p}{q}$ $\iff$  $v$ is accepted by $\subaut{\+A}{p}{q}$}.$$
 Recall that  $\subaut{\+A}{p}{q}$ denote the "sublanguage" of $\+A$ recognized  when considering $\set{p}$ as the set of initial states and $\set{q}$ as the set of final states. %
 For $u\in \A^*$, we define its \AP""$\gamma$-type"" to be:
  \[\AP\intro*\type{\gamma}{u} \defeq \{
  ([u_1]_{\weakequiv{\gamma}},\dots,[u_\ell]_{\weakequiv{\gamma}}) \mid \text{$\ell\leq (\nbvar{\gamma} +1)$, and $u_1,\dots,u_\ell\in \A^*$ satisfy that $u=u_1\cdots u_\ell$} \}.
  \]
The idea is that $\type{\gamma}{u}$ encodes all the possible ways $u$ can be broken into $\ell\leq (\nbvar{\gamma}+1)$ subwords: we are not interested in the particular subwords $u_i$, but in their equivalence classes $[u_i]_{\weakequiv{\gamma}}$ with respect to $\weakequiv{\gamma}$.

We define the sought "CRPQ" $\beta$ to be the "CRPQ" obtained from $\alpha$ by replacing each regular language $L$ by $\tilde{L}$, where:
  \[\tilde{L} \defeq \bigcup_{u\in L} \{z\in \A^*\mid \type{\gamma}{u}\subseteq \type{\gamma}{z}\}.
  \]
  
  \medskip
  
  By definition, the "underlying graphs" of $\alpha$ and $\beta$ are the same, and hence condition (1) holds. It remains to verify  $\alpha \contained \beta\contained \gamma$  and condition (2). Note that $u\in \{z\in \A^*\mid \type{\gamma}{u}\subseteq \type{\gamma}{z}\}$ always holds, and hence $L\subseteq \tilde{L}$. It follows that $\alpha \contained \beta$. 

Showing $\beta\contained \gamma$ is more involved. By Proposition~\ref{prop:cont-char-exp-st}, we need to prove that for every "expansion" $\anexpansion_\beta\in \Exp(\beta)$, there exists an "expansion" $\eta_\beta\in \Exp(\gamma)$, such that $\eta_\beta\homto \anexpansion_\beta$. Assume that $\beta$ and $\gamma$ are of the form $\bigwedge_{j=1}^m x_j \atom{\tilde{L}_j} y_j$ and $\bigwedge_{i=1}^r t_i \atom{M_i} s_i$, respectively. In particular, $\alpha$ must be of the form $\bigwedge_{j=1}^m x_j \atom{L_j} y_j$. Suppose $\anexpansion_\beta$ is of the form  $\bigwedge_{j=1}^m x_j \atom{z_j} y_j$, where $z_j\in \tilde{L}_j$. By construction, $\anexpansion_\beta$ has a corresponding expansion $\anexpansion_\alpha$ of $\alpha$: since each $z_j\in \tilde{L}_j$, there must be a $u_j\in L_j$ such that $\type{\gamma}{u_j}\subseteq \type{\gamma}{z_j}$, and then we can take $\anexpansion_\alpha$ as $\bigwedge_{j=1}^m x_j \atom{u_j} y_j$. In turn, since $\alpha\contained \gamma$, there exists an "expansion" $\eta_\alpha\in \Exp(\gamma)$ such that $\eta_\alpha\homto \anexpansion_\alpha$ via a homomorphism $f$. 

The idea is to modify $\eta_\alpha$ into another expansion $\eta_\beta$ with  $\eta_\beta\to \anexpansion_\beta$, as desired. Note that $f$ maps the external variables of $\eta_\alpha$ to external or internal variables in $\anexpansion_\alpha$. This determines a subdivision for each path $x_j \atom{u_j} y_j$ of $\anexpansion_\alpha$ into smaller or `basic paths’, whose endpoints correspond to external variables of $\xi_\alpha$ or images of the external variables  of $\eta_\alpha$ via $f$. The number of these paths is hence bounded by $\nbvar{\gamma} +1$. Since $\type{\gamma}{u_j}\subseteq \type{\gamma}{z_j}$, then each path $x_j \atom{z_j} y_j$ in $\anexpansion_\beta$ can also be subdivided in an equivalent way than  $x_j \atom{u_j} y_j$. Overall, the decomposition of $\anexpansion_\alpha$  into basic paths can be `simulated’ in $\anexpansion_\beta$. This gives us a homomorphism from $\eta_\beta$ to $\anexpansion_\beta$, where $\eta_\beta$ is obtained from $\eta_\alpha$ in the following way: for each path $t_i \atom{w_i} s_i$ in $\eta_\alpha$, where $w_i\in M_i$, the image of the path via $f$ induces a subdivision of $t_i \atom{w_i} s_i$ into basic paths  of $\anexpansion_\alpha$. We can replace each of these basic paths by its equivalent basic path in $\anexpansion_\beta$. As the label of these paths are equivalent w.r.t to the relation $\weakequiv{\gamma}$, membership in $M_i$ is maintained after this transformation. Hence $\eta_\beta$ is indeed an expansion of $\gamma$.

For condition (2), it is easy to see that every equivalence class $C$ of the relation $\weakequiv{\gamma}$ can be accepted by an NFA $\+A_C$ of single-exponential size on $\size{\gamma}$. Also, for each word $u\in \A^*$, and  each tuple $\bar{c} =(C_1,\dots,C_\ell)\in \type{\gamma}{u}$, there is a single-exponential size NFA $\+A_{\bar{c}}$  accepting the language $\{z\in \A^*\mid \bar{c} \in \type{\gamma}{z}\}$. By intersecting these NFAs, we obtain an NFA $\+A_{u}$ accepting the language $\{z\in \A^*\mid \type{\gamma}{u}\subseteq \type{\gamma}{z}\}$. It is easy to see that the number of tuples of the form $(C_1,\dots,C_\ell)$ is at most single-exponential, and then the size of $\+A_{u}$ is at most double-exponential. It follows that  the number of possible  $\type{\gamma}{u}$ is at most double-exponential, and hence the languages  $\tilde{L}$ in $\beta$ can be described by a union of double-exponential many NFAs, each of size at most double-exponential. Overall, each $\tilde{L}$ can be described by an NFA of at most double-exponential size on $\size{\gamma}$.
 \end{proofsketch}

\begin{proof}[Proof of \Cref{thm:2expspace-min-crpqs}]
The "2ExpSpace" algorithm proceeds as follows. Let $\gamma$ be a "CRPQ"  over $\A$ and $k\in\N$. Let $f$ be the double-exponential function from \Cref{lemma:crpq-size-bound}. We enumerate all possible directed multigraphs with at most $k$ edges. For each of these graphs, we enumerate all the possible "CRPQs" $\beta$ obtained by labelling each edge with an NFA of size bounded by $f(\size{\gamma})$. If for some of these $\beta$, we have that $\beta\semequiv \gamma$ then we accept the instance, otherwise we reject it. \Cref{lemma:crpq-size-bound} ensures that this algorithm is correct.

It remains to show that  $\beta\semequiv \gamma$ can be carried out in "2ExpSpace". We use a proposition from \cite[Proposition 3.11, p. 17]{FM2023semantic} stating that containment $\Gamma\contained \Delta$, for "UCRPQs" $\Gamma$ and  $\Delta$, can be solved in non-deterministic space $O(\size{\Gamma} + \size{\Delta}^{c\cdot \nbatoms{\Delta}})$, where $c$ is a constant. This implies that $\beta\contained \gamma$ can be checked in space $O(\size{\beta} + \size{\gamma}^{c\cdot \nbatoms{\gamma}})$, and then within "2ExpSpace". The containment $\gamma\contained \beta$ can be solved in space $O(\size{\gamma} + \size{\beta}^{c\cdot k})$, and hence also in "2ExpSpace".
\end{proof}

\section{Minimization of UCRPQs via Approximations}
\label{sec:ucrpqs}

\AP\label{sec:upperUCRPQ}
We now focus on minimizations by \emph{finite unions} of "CRPQs". This is a different problem than the one seen in the previous section, "ie", there is no obvious reduction in either direction between the "minimization problem" for "CRPQs" and the "minimization problem" for "UCRPQs", and indeed we will solve this problem using an altogether different approach.

\subsection{Unions Allow Further Minimization}

As it turns out, having unions may help in minimizing the (maximum) number of atoms of a query as the next proposition shows.
\begin{propositionrep}
  \AP\label{prop:unionsmatter}
  There exist "CRPQs" which are "minimal" among "CRPQs" but not among "UCRPQs".
\end{propositionrep}
\begin{proof}
  The following example is inspired from \cite[Example 1.2]{FigueiraM23}.
  Consider the following "CRPQs":\leavevmode
  \begin{center}
    \includegraphics{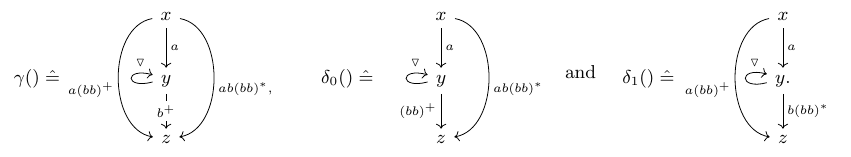}
  \end{center}
  It is easy to see that $\gamma \semequiv \delta_0 \lor \delta_1$ 
  by doing a case disjunction on the parity of the path between $y$ and $z$---the even case
  is handled by $\delta_0$, the odd case by $\delta_1$.
  Hence, $\gamma$ is not "minimal" among "UCRPQs".
  
  We want to show that $\gamma$ is "minimal" among "CRPQs".
  Let $\zeta()$ be a "CRPQ" that is "equivalent" to $\gamma$, and assume by contradiction
  that it has at most four "atoms".
  Given natural numbers $l,m,r \in \Np$ where $l$ is even and $r$ is odd,
  consider the "expansion"
  \begin{center}
    \includegraphics{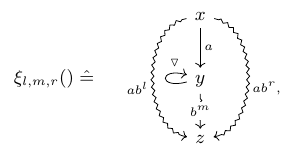}
  \end{center}
  where squiggly arrows represent paths of atoms.
  \begin{claim}
    \AP\label{claim:example-union-matter-1}
    The "expansion" $\anexpansion_{l,m,r}$ is "hom-minimal" "iff" $l = m$ or $m=r$.
  \end{claim}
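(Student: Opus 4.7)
I would establish the two implications separately.

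\emph{$(\Leftarrow)$ direction.} Assume, without loss of generality, that $l = m$ (the case $m = r$ is symmetric). Take any expansion $\xi'$ of $\gamma$ together with a homomorphism $f \colon \xi' \homto \xi_{l,m,r}$, and aim to build a homomorphism $\xi_{l,m,r} \homto \xi'$. Since every expansion of $\gamma$ has the shape $\xi_{l',m',r'}$ for appropriate parities of $l'$ and $r'$ (here is where the parity constraints on the atoms of $\gamma$ enter), the homomorphism $f$ imposes arithmetic relations between $(l',m',r')$ and $(l,m,r)$, coming from how the three ``squiggly'' paths of $\xi'$ must be routed through the three paths of $\xi_{l,m,r}$. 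The equality $l = m$, together with the distinguished role of the external variables of $\gamma$, removes enough ambiguity that these relations force $\xi'$ to contain, up to hom-equivalence, a copy of $\xi_{l,m,r}$, which gives the reverse homomorphism. Hence $\xi_{l,m,r}$ is hom-minimal.

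\emph{$(\Rightarrow)$ direction.} I argue the contrapositive: if $l \neq m$ and $m \neq r$, then $\xi_{l,m,r}$ is not hom-minimal. The plan is to construct an expansion $\xi'$ of $\gamma$ that is strictly smaller than $\xi_{l,m,r}$ (in number of variables or atoms), with $\xi' \homto \xi_{l,m,r}$ but $\xi_{l,m,r} \not\homto \xi'$. The construction folds the three paths of $\xi_{l,m,r}$ onto each other, using that the pairwise distinct values $l \neq m$ and $m \neq r$ provide the slack required to identify vertices across different paths in a consistent way. That the result is still an expansion of $\gamma$ must be verified by checking that the folded path lengths remain in the languages of $\gamma$'s atoms, in particular respecting the parities ($l$ even, $r$ odd). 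The absence of a reverse homomorphism follows from a core or size invariant: $\xi'$ is strictly smaller than $\xi_{l,m,r}$, hence cannot receive a homomorphism from it.

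\emph{Main obstacle.} The delicate part is the $(\Rightarrow)$ direction: explicitly exhibiting the folding and verifying that it defines a legitimate expansion of $\gamma$, preserving the external variables and the parity constraints, while genuinely decreasing a rigidity invariant. The subtle point is that the folding must stay inside the parity-constrained languages of $\gamma$, which requires a small but non-trivial arithmetic argument on $(l,m,r)$. By contrast, $(\Leftarrow)$ is more mechanical once one unpacks the rigidity imposed by $l = m$.
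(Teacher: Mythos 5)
Your $(\Leftarrow)$ direction is only a sketch, but it gestures at the right mechanism: the $\marking$-self-loop pins $y$ (hence $x$ and $z$), and the parity constraints then force any expansion $\anexpansion_{l',m',r'}$ admitting a homomorphism into $\anexpansion_{l,m,r}$ to satisfy $m'=m$, $l'=l$ and $r'=r$ once $l=m$ --- the role of $l=m$ being that the only alternative target for the even outer path, namely the route $x \atom{a} y \atom{b^m} z$, then carries the same label as the $ab^l$-path. The paper concludes $\anexpansion_{l',m',r'}=\anexpansion_{l,m,r}$ exactly, which is stronger and more concrete than ``contains a copy up to hom-equivalence,'' but your plan is compatible with that argument.

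The $(\Rightarrow)$ direction, however, has a genuine gap on two counts. First, the witness cannot be obtained by folding $\anexpansion_{l,m,r}$: the expansions of $\gamma$ are exactly the queries $\anexpansion_{l',m',r'}$ with $l'$ even and $r'$ odd, i.e.\ three paths glued only at $x$, $y$, $z$, and any quotient identifying vertices across two of these paths is no longer of that shape, hence not an expansion of $\gamma$ at all --- the verification step you yourself flag as delicate would simply fail. The paper instead picks a \emph{different member of the same family}: $\anexpansion_{m,m,r}$ if $m$ is even, $\anexpansion_{l,m,m}$ if $m$ is odd (the case split on the parity of $m$ is what keeps the new parameters inside the parity constraints), and maps its duplicated outer path onto the route $x \atom{a} y \atom{b^m} z$ inside $\anexpansion_{l,m,r}$; note this witness is in general neither smaller than nor a quotient of $\anexpansion_{l,m,r}$. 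Second, your justification for the absence of a reverse homomorphism --- that $\anexpansion'$ is strictly smaller and ``hence cannot receive a homomorphism from $\anexpansion_{l,m,r}$'' --- is invalid: homomorphisms routinely go from larger structures onto strictly smaller ones (every structure maps onto its core). The correct obstruction is a label argument between the pinned endpoints: $\anexpansion_{l,m,r} \nothomto \anexpansion_{m,m,r}$ because no path from $x$ to $z$ in $\anexpansion_{m,m,r}$ is labelled $ab^l$ when $l \neq m$ (and $l \neq r$ by parity), and dually for $\anexpansion_{l,m,m}$.
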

  Indeed, if $l \neq m$ and $m \neq r$, then
  if $m$ is even then $\anexpansion_{m,m,r} \homto \anexpansion_{l,m,r}$
  but $\anexpansion_{l,m,r} \nothomto \anexpansion_{m,m,r}$ (since $l \neq m$)
  and dually if $m$ is odd then $\anexpansion_{l,m,m} \homto \anexpansion_{l,m,r}$
  but $\anexpansion_{l,m,r} \nothomto \anexpansion_{l,m,m}$ (since $m \neq r$).
  In both cases, $\anexpansion_{l,m,r}$ is not "hom-minimal".
  
  Conversely, if $l = m$ or $m=r$, assume "wlog", by symmetry, that $l=m$.
  Let $l',m',r' \in \Np$ "st" $l'$ is even and $r'$ is odd,
  and assume that there is a "homomorphism" $f\colon \anexpansion_{l',m',r'} \to \anexpansion_{l,m,r}$. Because of the $\marking$-self-loop, $f(y) = y$ and hence $f(x) = x$. It then follows that
  $f(z) = z$ because we must have both an $a(bb)^+$- and an $ab(bb)^*$-path
  from $f(x)$ and $f(z)$. 
  Moreover, we must have $m=m'$,
  and since $l'$ is even, we must have $l' = l = m$,
  and dually, since $r'$ is odd, we must have $r' = r$.
  It follows that $\tup{l',m',r'} = \tup{l,m,r}$, and hence $\anexpansion_{l,m,r}$ is "hom-minimal".

  \begin{claim}
    \AP\label{claim:example-union-matter-2}
    If $m=r$, then $\core(\anexpansion_{l,m,r})$ equals
    \begin{center}
      \includegraphics{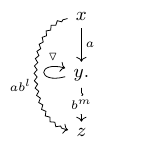}
    \end{center}
  \end{claim}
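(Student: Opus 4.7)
The plan is to identify a candidate core $C$ matching the picture in \texttt{figs/6.pdf}, show that $C$ is hom-equivalent to $\anexpansion_{l,m,r}$, and then verify that $C$ admits no non-identity endomorphism. Given the hypothesis $m = r$, the natural candidate $C$ is obtained from $\anexpansion_{l,m,r}$ by folding the $r$-parameterized path onto the $m$-parameterized path: since these paths have equal length (namely $m = r$) and share their endpoints in $\anexpansion_{l,m,r}$, the folding is an honest "homomorphism" onto a proper sub-CQ of $\anexpansion_{l,m,r}$.

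First, I would spell out $C$ as the sub-CQ of $\anexpansion_{l,m,r}$ obtained by deleting the $r$-path (and any variables that become isolated). The inclusion $C \hookrightarrow \anexpansion_{l,m,r}$ yields one homomorphism $C \homto \anexpansion_{l,m,r}$. For the other direction, I would define a "homomorphism" $\anexpansion_{l,m,r} \homto C$ that is the identity on the common part and sends the $r$-path step-by-step onto the $m$-path; the equality $m = r$ guarantees length compatibility, and the $\marking$-self-loops at $x$ and $y$ are trivially preserved. Hence $C$ is "hom-equivalent" to $\anexpansion_{l,m,r}$, and in particular $\core(C) = \core(\anexpansion_{l,m,r})$.

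Second, I would show that $C$ is itself a "core", namely that every "homomorphism" $g\colon C \to C$ is an isomorphism. Exactly as in the proof of \Cref{claim:example-union-matter-1}, the $\marking$-self-loops force $g(y) = y$ and, through the $x$--$y$ path, $g(x) = x$; the two atoms $a(bb)^+$ and $ab(bb)^*$ then pin down $g(z) = z$ by the usual parity-of-length argument. Once the "external variables" are fixed, the remaining "internal variables" lie on simple paths whose labels and lengths are rigidly determined, so $g$ must be the identity.

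The main obstacle I anticipate is precisely this rigidity step: one must rule out every degenerate "endomorphism" of $C$, including ones that might try to collapse the retained $l$-path onto the $m$-path or relocate internal vertices. This requires mimicking carefully the parity-based case analysis of \Cref{claim:example-union-matter-1}: because $l$ is even and (after folding) $m$ has a single fixed parity, no partial collapse is compatible with the labels $a(bb)^+$ and $ab(bb)^*$, and therefore $C$ is rigid. Combining hom-equivalence with rigidity yields $\core(\anexpansion_{l,m,r}) = C$, as depicted.
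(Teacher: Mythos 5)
The paper asserts this claim without proof, so there is no official argument to compare against; yours is the natural and surely intended one (exhibit the retract $C$, prove hom-equivalence, prove rigidity of $C$), and it goes through. One detail in your description of the folding needs repair, though: the $r$-parameterized path runs from $x$ to $z$ with label $ab^{r}$, whereas the $m$-parameterized path runs from $y$ to $z$ with label $b^{m}$, so the two neither share their source endpoint nor have equal length, and a literal step-by-step map of one onto the other is ill-typed (its first edge is an $a$, and it has one edge too many). The correct target of the fold is the composite path $x \atom{a} y$ followed by the $m$-path: its label $ab^{m}$ equals $ab^{r}$ exactly because $m=r$, so sending the initial $a$-edge of the $r$-path to $x \atom{a} y$ and the $b^{r}$-suffix edge-by-edge onto the $b^{m}$-path is a well-defined homomorphism fixing $x$ and $z$. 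Also, only $y$ carries a $\marking$-self-loop (the intermediate vertices of the $r$-path carry none, which is why the fold is legal), so the reference to a self-loop at $x$ should be dropped; neither point affects your rigidity argument, where $g(y)=y$ is forced by the unique self-loop, $g(x)=x$ by the unique $a$-atom into $y$, $g(z)=z$ by following the unique $b$-walk out of $y$, and the $ab^{l}$-path is fixed because $l$ is even while $m=r$ is odd.
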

  Putting \Cref{claim:example-union-matter-1,claim:example-union-matter-2} and
  \Cref{thm:structure-theorem} together, we get that 
  $\seggraph(\core(\anexpansion_{2,1,1}))$ is a "minor" of $\zeta$.
  Since $\zeta$ was assumed to have at most four "atoms",
  it follows that $\zeta()$ must be exactly of the form
  \begin{center}
    \includegraphics{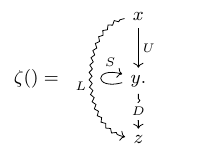}
  \end{center}
  Using the containment $\zeta \contained \gamma$,
  since the only directed simple cycle in an "expansion" of $\gamma$ is labelled
  by $\marking$, we get $S = \{\marking\}$.
  Similarly, using again that $\zeta \contained \gamma$ and that any "expansion" of $\gamma$
  must have both an $a(bb)^+$- and an $ab(bb)^*$-path, it can be shown that $U = \{a\}$.

  Now let $l,m,r \in \Np$ with $l$ even and $r$ odd.
  From $\gamma \contained \zeta$, we get that each $\anexpansion_{l,m,r}$ "satisfies"
  $\zeta$ and so there is an "evaluation map" $f\colon \zeta \to \anexpansion_{l,m,r}$.
  Clearly $f(y) = y$ and then using the
  $a(bb)^+$- and $ab(bb)^*$-paths, we get that $f(x) = x$ and $f(z) = z$. 
  It then follows that
  \begin{itemize}
    \item $b^m \in D$, and
    \item either $ab^l \in L$, or $ab^r \in L$, or $a \marking^k b^m$ for some $k\in\N$
  \end{itemize}
  From the first point we get $b^+ \subseteq D$, and in fact using $\zeta\contained \gamma$,
  $D = b^+$.

  Now let $w \in L$ and $m \in \N$, and consider the "expansion"
  \begin{center}
  \includegraphics{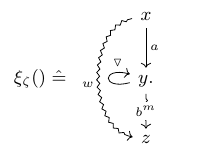}
  \end{center}
  of $\zeta$. Since $\zeta \contained \gamma$, there is
  an "evaluation map" $f\colon \gamma \to \anexpansion_\zeta$.
  Taking $m$ to be any even value, we get that $w$ must be in $ab(bb)^*$,
  say $w = ab^r$ for some odd $r \in \Np$.
  Taking $m$ to be any odd value, we get dually that $w$ must be in $a(bb)^+$,
  say $w = ab^l$ for some even $l \in \Np$.
  Hence, we get $\{ab^l, ab^r\} \subseteq L$.

  We can now pick $m$ to be even and
  $w = ab^l$, and then $\anexpansion_\zeta$ should be a model of $\gamma$.
  But $\anexpansion_\zeta$ does cannot satisfy the atom $x \atom{ab(bb)^*} z$.
  Contradiction. Hence, $\gamma$ cannot be equivalent to a single "CRPQ"
  with at most four "atoms".
\end{proof}

\begin{proofsketch}
  The following example is inspired from \cite[Example 1.2]{FigueiraM23}.
  Consider these "CRPQs":\leavevmode
  \begin{center}
    \includegraphics{figs/8.pdf}
  \end{center}
  It can be seen that $\gamma \semequiv \delta_0 \lor \delta_1$ 
  by doing a case analysis on the parity of the path between $y$ and $z$---the even case
  is handled by $\delta_0$, the odd case by $\delta_1$.
  Hence, $\gamma$ is not "minimal" among "UCRPQs".

  The hard part is to argue that $\gamma$ is "minimal" among
  "CRPQs": we consider a "CRPQ" $\zeta$ that is equivalent to $\gamma$ and,
  by contradiction we assume that it has at most four "atoms".
  Using the "Semantical Structure" theorem, we deduce that the following graph must be "minor"
  of the "underlying graph" of $\zeta$. Since $\zeta$ has at most four "atoms",
  we conclude that $\zeta$ must exactly be of the following shape 
  \begin{center}
      \includegraphics{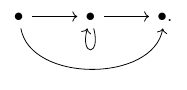}
    \end{center}
    We then use the equivalence with $\delta$ to reach a contradiction.
\end{proofsketch}

\subsection{Maximal Under-Approximations}

\AP\label{sec:maximal-under-approxiamtions}
Our approach exploits having unions at our disposal, enabling the possibility of defining and computing "maximal under-approximations" for "UCRPQs" having some given underlying shape. This will lead to an "ExpSpace" upper bound for "UCRPQ" minimization.

For a graph class $\+C$ (remember that these are directed multigraphs) we denote by \AP$\intro*\CQ[\+C]$, $\intro*\CRPQ[\+C]$, $\intro*\UCRPQ[\+C]$, $\intro*\infUCRPQ[\+C]$, the class of "CQs", of "CRPQs", of finite unions of "CRPQs" 
and of infinite unions of "CRPQs" $\gamma$ such that $\underlying{\gamma} \in \+C$.
Given a graph class $\+C$ and a "UCRPQ" $\Gamma$, we define $\AppInf{\Gamma}{\+C}$ to be the (infinite) set of all $\CRPQ[\+C]$ queries which are "contractions" of "CQs" "contained" in $\Gamma$. More formally:\footnote{Note that the "contraction" of a "CQ" is a "CRPQ": this is why $\AppInf{\Gamma}{\+C}$ is an infinite union of "CRPQs" and not of "CQs".}
\[
  \intro*\AppInf{\Gamma}{\+C} \defeq \{
    \alpha \in \CRPQ[\+C] \mid \exists \xi \in \Exp(\Gamma), \exists \eta \in \CQ \text{ s.t. }
    \xi \homto \eta \text{ and } \alpha \text{ is a "contraction" of } \eta
  \}.
\]
\begin{proposition}
  \AP\label{prop:max-under-approx}
  If $\+C$ is a graph class closed under taking "minors", then
  $\AppInf{\Gamma}{\+C}$ is a \AP""maximal under-approximation"" of $\Gamma$ by $\infUCRPQ[\+C]$ queries, in the sense that:
  \begin{enumerate}
    \item $\AppInf{\Gamma}{\+C} \in \infUCRPQ[\+C]$,
    \item $\AppInf{\Gamma}{\+C} \contained \Gamma$, and
    \item for any $\Delta \in \infUCRPQ[\+C]$, if $\Delta \contained \Gamma$, then $\Delta \contained \AppInf{\Gamma}{\+C}$.
  \end{enumerate}
\end{proposition}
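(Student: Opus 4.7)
The plan is to dispatch items (1) and (2) directly from the definitions, and to attack item (3) via the expansion characterization of containment. For (1), each CRPQ $\alpha$ in the union $\AppInf{\Gamma}{\+C}$ is by definition required to satisfy $\underlying{\alpha} \in \+C$, so $\AppInf{\Gamma}{\+C} \in \infUCRPQ[\+C]$. For (2), I would pick any $\alpha \in \AppInf{\Gamma}{\+C}$ together with witnesses $\xi \in \Exp(\Gamma)$, $\eta \in \CQ$ such that $\xi \homto \eta$ and $\alpha$ is a contraction of $\eta$. By Fact~\ref{fact:produce-fully-contracted}, contractions preserve semantic equivalence, so $\alpha \semequiv \eta$; since both $\xi$ and $\eta$ are CQs, $\xi \homto \eta$ yields $\eta \contained \xi$; and $\xi \in \Exp(\Gamma)$ trivially gives $\xi \contained \Gamma$. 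Chaining $\alpha \semequiv \eta \contained \xi \contained \Gamma$ gives the claim.

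For (3), I would invoke Proposition~\ref{prop:cont-char-exp-st} to reduce the goal $\Delta \contained \AppInf{\Gamma}{\+C}$ to the following: for each $\xi_\Delta \in \Exp(\Delta)$, produce some $\xi' \in \Exp(\AppInf{\Gamma}{\+C})$ with $\xi' \homto \xi_\Delta$. Fix such $\xi_\Delta$ expanding a disjunct $\delta$ of $\Delta$, so in particular $\underlying{\delta} \in \+C$. Since $\Delta \contained \Gamma$, Proposition~\ref{prop:cont-char-exp-st} also provides $\xi_\Gamma \in \Exp(\Gamma)$ with $\xi_\Gamma \homto \xi_\Delta$. Setting $\eta \defeq \xi_\Delta$, I would define $\alpha$ as the contraction of $\eta$ that collapses precisely the internal variables introduced during the expansion from $\delta$ to $\xi_\Delta$. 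The atoms of $\alpha$ then carry singleton languages whose unique words label the corresponding subpaths of $\xi_\Delta$; in particular $\xi_\Delta$ is itself an expansion of $\alpha$, and the identity is the required homomorphism $\xi_\Delta \homto \xi_\Delta$.

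The crux is to verify that $\alpha$ actually lies in $\AppInf{\Gamma}{\+C}$, and specifically that $\underlying{\alpha} \in \+C$. I would argue that $\underlying{\alpha}$ is obtained from $\underlying{\delta}$ by contracting exactly those edges whose atoms were expanded to $\varepsilon$---yielding equality atoms that identify the endpoints in $\xi_\Delta$; all other modifications in going from $\underlying{\delta}$ to $\underlying{\xi_\Delta}$, namely the subdivisions arising from expansions to words of length at least two, are undone by the contraction defining $\alpha$. Hence $\underlying{\alpha}$ is a minor of $\underlying{\delta}$, and the minor-closedness of $\+C$ forces $\underlying{\alpha} \in \+C$. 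This is the sole place where minor-closedness is used and is the main obstacle to anticipate: without it, $\varepsilon$-expansions could push $\underlying{\alpha}$ outside $\+C$ and break the argument. Combined with the witnesses $\xi_\Gamma \in \Exp(\Gamma)$ and $\eta = \xi_\Delta$, this certifies $\alpha \in \AppInf{\Gamma}{\+C}$ and completes the proof.
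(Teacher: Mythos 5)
Your proposal is correct and follows essentially the same route as the paper: items (1) and (2) are handled identically (via Fact~\ref{fact:produce-fully-contracted} and the chain $\alpha \semequiv \eta \contained \xi \contained \Gamma$), and for item (3) you, like the paper, take an expansion $\xi_\Delta$ of a disjunct $\delta$, pull back a homomorphism $\xi_\Gamma \homto \xi_\Delta$ from $\Delta \contained \Gamma$, and exhibit a contraction $\alpha$ of $\xi_\Delta$ whose underlying graph is a minor of $\underlying{\delta}$, hence in $\+C$. Your write-up is in fact slightly more explicit than the paper's about which contraction is taken and why minor-closedness (rather than mere closure under subdivision) is the right hypothesis, namely because of $\varepsilon$-expansions.
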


\begin{proof}
  The first point is trivial. For the second one, observe that
  if $\alpha \in \CRPQ[\+C]$ is "st" there exist $\xi \in \Exp(\Gamma)$ and
  $\eta \in \CQ$ "st" $\xi \homto \eta$ and $\alpha$ is a "contraction" of $\eta$,
  then $\eta \contained \xi$ and $\eta$ is "semantically equivalent" to $\alpha$ by \Cref{fact:produce-fully-contracted}, and hence $\alpha \contained \xi$, from which it follows that $\AppInf{\Gamma}{\+C} \contained \Gamma$.

  For the third point, let $\Delta \in \infUCRPQ[\+C]$ "st" $\Delta \contained \Gamma$. Let $\delta \in \Delta$ and $\xi_{\delta}$ be an "expansion" of $\delta$. Since $\Delta \contained \Gamma$, there exist $\gamma \in \Gamma$ and $\xi_{\gamma} \in \Exp(\gamma)$ "st" $\xi_{\gamma} \homto \xi_{\delta}$. From the fact that $\underlying{\delta} \in \+C$ and that $\xi_{\delta} \in \Exp(\delta)$, it follows that there exists a "contraction" $\alpha_{\delta}$ of $\xi_{\delta}$ "st" $\underlying{\alpha_{\delta}}$ is a "minor" of $\underlying{\delta}$, and thus belongs to $\+C$.
  Hence, it follows that $\alpha_{\delta} \in \AppInf{\Gamma}{\+C}$.
  So $\xi_{\delta} \contained \AppInf{\Gamma}{\+C}$ and thus
  $\Delta \contained \AppInf{\Gamma}{\+C}$.
\end{proof}

\begin{toappendix}
\begin{remark}
  If $\+C$ is closed under taking "subgraphs" and "subdivisions", then
  \[\AppInf{\gamma}{\+C} \semequiv \{
    \alpha \in \CQ[\+C] \mid \exists \xi \in \Exp(\gamma), 
    \xi \surjto \alpha
  \}.\]
\end{remark}
\end{toappendix}

Our "ExpSpace" upper bound relies on the following technical lemma:

\begin{lemma}
  \AP\label{lemma:approximation-for-finclass}
  If $\+C$ is finite and closed under taking "minors", then $\AppInf{\Gamma}{\+C}$ is equivalent to a query $\Delta\in \UCRPQ[\+C]$ 
with exponentially many "CRPQs", each "CRPQ" of $\Delta$ being of size polynomial in $\size{\Gamma} + \max{\set{\nbatoms{G} \mid G \in \+C}}$.  %
Further, the membership $\delta \in^? \Delta$ can be tested "NP". 
In particular, this "UCRPQ" $\Delta$ can be computed in exponential time from $\Gamma$.
\end{lemma}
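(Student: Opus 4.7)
The strategy is to construct $\Delta$ by explicit enumeration of ``patterns'' describing all the ways a disjunct $\gamma$ of $\Gamma$ can be homomorphically witnessed inside a subdivision of a graph $G \in \+C$. Since $\+C$ is finite, we iterate over all $G \in \+C$. For each such $G$, each disjunct $\gamma \in \Gamma$, and each \emph{pattern} $\pi$ consisting of (i) a vertex map $f \colon \vars(\gamma) \to \vertex{G}$ preserving output variables, (ii) for each atom $s = (x_s \atom{L_s} y_s)$ of $\gamma$ (with NFA $\+A_s$), a walk $W_s$ in $G$ from $f(x_s)$ to $f(y_s)$, and (iii) a state sequence $q_s^0, q_s^1, \dotsc, q_s^{|W_s|}$ in $\+A_s$ along $W_s$ with $q_s^0$ initial and $q_s^{|W_s|}$ final, we build a CRPQ $\delta_\pi$ whose underlying graph is $G$ and whose edge $e_i$ is labelled by the intersection $L_i^\pi \defeq \bigcap_{(s,j)} \subaut{\+A_s}{q_s^{j-1}}{q_s^j}$, taken over all $(s,j)$ for which $W_s$ traverses $e_i$ at step $j$. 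A pigeonhole argument on the product $\vertex{G} \times (\text{states of } \+A_s)$ allows us to restrict to walks $W_s$ of polynomial length.

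Semantic equivalence is established using \Cref{prop:max-under-approx}. By construction, each $\delta_\pi \contained \gamma$: the pattern directly yields an expansion $\xi_\gamma \in \Exp(\gamma)$ that homomorphically maps into any expansion of $\delta_\pi$. Hence $\Delta \contained \Gamma$, and since $\Delta \in \UCRPQ[\+C]$, item~(3) of that proposition gives $\Delta \contained \AppInf{\Gamma}{\+C}$. For the converse, any $\alpha \in \AppInf{\Gamma}{\+C}$ arises from witnesses $\xi \in \Exp(\gamma)$ and $\eta \in \CQ$ with $\xi \homto \eta$ and $\alpha$ a contraction of $\eta$; we extract a pattern $\pi$ by letting $W_s$ be the projection onto $\underlying{\alpha}$ of the expansion-chain of $\xi$ corresponding to atom $s$, and by reading off the state sequence from an accepting run of $\+A_s$ on the corresponding expansion word. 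This yields $\alpha \contained \delta_\pi$, so $\AppInf{\Gamma}{\+C} \contained \Delta$.

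For the counts and sizes, the number of patterns is exponential in $\size{\Gamma} + \max_G \nbatoms{G}$: polynomially many choices of $G$ and $\gamma$, exponentially many vertex maps $f$, and for each atom exponentially many walks of polynomial length with their state sequences. Each CRPQ $\delta_\pi$ has an underlying graph of size $O(\max_G \nbatoms{G})$ and its edge-labels have polynomial-size descriptions as intersections of sub-NFAs of the $\+A_s$. The NP membership test for $\delta \in \Delta$ then guesses a pattern $\pi$ and syntactically checks $\delta = \delta_\pi$; enumerating all patterns explicitly computes $\Delta$ in exponential time.

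The main obstacle is twofold. First, keeping the per-CRPQ size polynomial when several walks traverse a common edge requires a compact representation of $L_i^\pi$ as a conjunctive list of sub-NFAs rather than a fully materialised product automaton, which would incur an exponential blow-up; if that representation is not acceptable, we instead split each pattern into subpatterns in which every edge of $G$ carries at most one constraint, multiplying the number of CRPQs by an exponential factor but keeping each label a single sub-NFA. Second, extracting a pattern from a witnessing homomorphism $\xi \homto \eta$ is delicate when the induced walks are longer than the pigeonhole bound: short-circuiting them must be shown to only enlarge the labels $L_i^\pi$, so that $\alpha \contained \delta_\pi$ is preserved under the length reduction.
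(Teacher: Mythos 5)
Your construction is a sound under-approximation ($\Delta \contained \AppInf{\Gamma}{\+C}$ does follow from the argument you give, via \Cref{prop:max-under-approx}), but it is not complete: it misses exactly the phenomenon the approximation is designed to capture, namely contraction. In your patterns every variable of $\gamma$ is placed by $f$ on a \emph{vertex} of $G$, and each step of a walk $W_s$ consumes an entire edge of $G$; consequently an edge of $\delta_\pi$ is labelled by an \emph{intersection} of sublanguages. But in $\AppInf{\Gamma}{\+C}$ the query $\alpha$ is a contraction of $\eta$, so a variable of $\gamma$ (or of an expansion $\xi$) may land on an internal variable of $\eta$ that is contracted away, i.e.\ strictly \emph{inside} an atom of $\alpha$; the label of that atom must then be a \emph{concatenation} of sublanguages, one factor per portion of walk between consecutive vertex-hits. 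Concretely, take $\Gamma = \set{\gamma}$ with $\gamma() = x \atom{a} y \land y \atom{b} z$ and $\+C$ the (finite, minor-closed) class of graphs with at most one edge. Then $\alpha() = x \atom{ab} z$ belongs to $\AppInf{\Gamma}{\+C}$ and in fact $\Gamma \semequiv \alpha$. Yet no pattern of yours yields a disjunct containing $\alpha$: on the single-edge graph, $f(y)$ must be an endpoint, so one of the two atoms is assigned an empty walk, which requires $\varepsilon \in \set{a}$ or $\varepsilon \in \set{b}$; on the self-loop graph both walks traverse the unique edge, whose label becomes a subset of $\set{a} \cap \set{b} = \emptyset$. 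Hence $\AppInf{\Gamma}{\+C} \not\contained \Delta$, and the algorithm of \Cref{coro:upperbound-ucrpqs} built on your $\Delta$ would wrongly declare $\Gamma$ not equivalent to a one-atom query. Your own second obstacle---that materializing intersections as NFAs blows up exponentially---is a symptom of the same problem: intersection is the wrong operation here, whereas a concatenation of polynomially many sub-NFAs stays polynomial.

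The paper's proof avoids this by interposing a query $\eta$ that is essentially a subdivision of $\alpha$: an explicit approximation is a tuple $\tup{\rho,\eta,\alpha,h,\orig,\cont}$ with $\rho$ a refinement of $\Gamma$, $h\colon \rho \homto \eta$ a homomorphism, and $\alpha$ a contraction of $\eta$. Two pigeonhole arguments then bound the contraction length by $O(\nbvar{\Gamma})$ (a longer contracted path contains an internal variable all of whose $h$-preimages are internal in $\rho$, so both sides can be contracted simultaneously) and the refinement length by $O(\nbatoms{\Gamma}\cdot r_{\Gamma}\cdot \nbatoms{\+C})$ (a repeated pair of atom and NFA state can be short-circuited). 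The labels of the resulting $\alpha$ are concatenations of polynomially many languages of the form $\subaut{\+A}{p}{q}$, $\set{a}$ or $\A^*$, hence admit polynomial-size NFAs, which gives all the claimed bounds. To repair your approach you would need to let $f$ place a variable at a position inside an edge of $G$ and replace the per-edge intersection by a concatenation indexed by those positions---at which point you have essentially reconstructed the paper's intermediate query $\eta$.
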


\begin{proof}
We start by observing that $\AppInf{\Gamma}{\+C}$ admits an equivalent and more flexible definition in terms of "refinements". This definition will allow us to effectively compute our desired equivalent query $\Delta$, by considering "refinements" of bounded lengths.

  For each $m \in \N\cup\{+\infty\}$, we define the $\infUCRPQ[\+C]$
  \[\AP\intro*\App{\Gamma}{\+C}{m} \defeq \{
    \alpha \in \CRPQ[\+C] \mid \exists \rho \in \Refin[\leq m](\Gamma), 
    \exists \eta \in \CRPQ \text{ s.t. }
    \rho \homto \eta \text{ and } \alpha \text{ is a "contraction" of } \eta
  \}.\]

We have that $\AppInf{\Gamma}{\+C}\semequiv \App{\Gamma}{\+C}{+\infty}$. Indeed, $\AppInf{\Gamma}{\+C}\contained \App{\Gamma}{\+C}{+\infty}$ as the former is a subset of the latter. On the other hand, $\App{\Gamma}{\+C}{+\infty}\contained \AppInf{\Gamma}{\+C}$ follows from Proposition \ref{prop:max-under-approx} and the fact that $\App{\Gamma}{\+C}{+\infty}\contained \Gamma$ by construction.  
  
Let $\nbatoms{\+C}\defeq \max{\set{\nbatoms{G} \mid G \in \+C}}$ and $r_{\Gamma}$ be the maximum number of states over the NFAs describing the languages appearing in $\Gamma$. 
  We claim that $\App{\Gamma}{\+C}{+\infty} \semequiv \App{\Gamma}{\+C}{O(\nbatoms{\Gamma}\cdot r_{\Gamma}\cdot \nbatoms{\+C})}$.
  The right-to-left containment is trivial, so it suffices to show
$\App{\Gamma}{\+C}{+\infty} \contained \App{\Gamma}{\+C}{O(\nbatoms{\Gamma}\cdot r_{\Gamma}\cdot \nbatoms{\+C})}$.

  We define an \AP""explicit approximation"" of $\Gamma$ over $\+C$ as a tuple
  $\vec{\alpha} = \tup{\rho, \eta, \alpha, h, \orig, \cont}$ consisting of the following:
  \begin{itemize}
    \item queries $\rho \in \Refin(\Gamma)$, $\eta \in \CRPQ$ and
      $\alpha \in \CRPQ[\+C]$,
    \item a "homomorphism" $h\colon \rho \homto \eta$,
    \item witnesses that $\alpha$ is a "contraction" of $\eta$, in the form of
      a function $\orig \colon \vars(\alpha) \to \vars(\eta)$,
      saying from which variable of $\eta$ a variable of $\alpha$ originates,
      and a function $\cont \colon \atoms(\eta) \to \atoms(\alpha)$
      saying onto which atom of $\alpha$ an atom of $\eta$ is contracted ("ie", the functions $\orig,\cont$ must meet the expected properties).
  \end{itemize}
  We say that an "explicit approximation" $\vec{\alpha_1}$ is \AP""contained@@approx"" in
  an "explicit approximation" $\vec{\alpha_2}$ if $\alpha_1 \contained \alpha_2$.
  For any $m$, to prove that $\App{\Gamma}{\+C}{+\infty} \contained \App{\Gamma}{\+C}{m}$,
  it suffices to prove that any "explicit approximation" $\vec{\alpha_1}=\tup{\rho_1, \eta_1, \alpha_1, h_1, \orig_1, \cont_1}$ is "contained@@approx"
  in an "explicit approximation" $\vec{\alpha_2}=\tup{\rho_2, \eta_2, \alpha_2, h_2, \orig_2, \cont_2}$ such that $\rho_2 \in \Refin[\leq m](\Gamma)$.

  \proofcase{1\textsuperscript{st} step: Bounding the size of $\eta$.}
  We define the \AP""contraction length"" of an "explicit approximation" $\vec{\alpha}$
  as the size of the longest path in $\eta$ whose atoms are all sent on the same
  atom of $\alpha$ via $\cont$. In symbols, this is $\max{\set{|\cont^{-1}(\mu)| : \mu \in \atoms(\alpha)}}$.
  We show that any "explicit approximation" $\vec{\alpha_1}$ is "contained@@approx"
  in an "explicit approximation" $\vec{\alpha_2}$ of bounded "contraction length".

  Consider a path $x_0 \atom{} x_1 \atom{} \cdots \atom{} x_k$ of $\eta_1$ whose
  atoms are all sent on the same atom of $\alpha_1$ via $\cont_1$. 
  If this path is very long, in particular greater than $\nbvar{\Gamma}$, it must contain an "internal variable" $x_i$ such that all of its $h_1$-preimages are "internal variables" of $\rho_1$. 
  Then we will be able to "contract" $x_i$ as well as the "internal variables" of the preimage, obtaining an "explicit approximation" which  contains  $\vec{\alpha_1}$ (details in \Cref{app:lemma:approximation-for-finclass}). 
Overall, this shows that any "explicit approximation" is "contained@@approx"  in an "explicit approximation" of "contraction length" at most $O(\nbvar{\Gamma})\leq O(\nbatoms{\Gamma})$ .%

  \begin{toappendix}
    \subsection{Missing details to the proof of \Cref{lemma:approximation-for-finclass}}
    \AP\label{app:lemma:approximation-for-finclass}
    Let $\vec{\alpha_1}$ be an "explicit approximation".
    Consider a path $x_0 \atom{} x_1 \atom{} \cdots \atom{} x_k$ of $\eta_1$ whose
    atoms are all sent on the same atom of $\alpha_1$ via $\cont_1$. 
    
Assume that for some $x_i$, with $i\in \lBrack 1,k-1\rBrack$, all variables in $h_1^{-1}(x_i)$ are "internal" in $\rho_1$. If $x_{i-1}\atom{L} x_i$ and $x_i\atom{L’} x_{i+1}$ are the only atoms containing $x_i$ in $\eta_1$, then for a variable $z\in h_1^{-1}(x_i)$, the only atoms in  containing $z$ in $\rho_1$ must have the form $w\atom{L} z$ and $z\atom{L’} w’$. Let $\eta_2$ be the query resulting from $\eta_1$ by contracting the internal variable $x_i$ and replacing $L\cdot L’$ by $K$, where $K$ is defined as follows. Since $L$ and $L’$ appear consecutively in internal paths of the refinement $\rho_1$, there must be an  NFA $\+A$ in $\gamma$ and three states $p,q,r$ such that $L=\subaut{\+A}{p}{q}$ or $L=\{a\}$ with $a\in \subaut{\+A}{p}{q}$, and   $L’=\subaut{\+A}{q}{r}$ or $L’=\{a\}$ with $a\in \subaut{\+A}{q}{r}$. We define $K=\subaut{\+A}{p}{r}$. Note that in any case, $L\cdot L’\subseteq K$. Similarly, define $\rho_2$ to be the query resulting from $\rho_1$ by contracting each internal variable $z\in h_1^{-1}(x_i)$ and replacing $L\cdot L’$ by $K$. Note that $\rho_2$ is still a refinement of $\gamma$ and that the homomorphism  $h_1\colon \rho_1 \to \eta_1$ induces a "homomorphism" $f_2\colon \rho_2 \to \eta_2$. Define $\alpha_2$ be the contraction of $\eta_2$ obtained by contracting all the remaining internal variables as  the contraction $\alpha_1$ is obtained from  $\rho_1$. Since $\alpha_1\contained \alpha_2$ as $L\cdot L’\subseteq K$, this defines an  "explicit approximation" $\vec{\alpha_2}$ that contains $\vec{\alpha_1}$. Note that in the case $h_1^{-1}(x_i)=\emptyset$, we can take $K=L\cdot L’$, and $\alpha_1\equiv \alpha_2$.

If $k-1>\nbvar{\Gamma}$, then path $x_0 \atom{} x_1 \atom{} \cdots \atom{} x_k$ contains a variable satisfying the condition above, and hence we can apply the simplification.  Overall, this shows that any "explicit approximation" is "contained@@approx"
  in an "explicit approximation" of "contraction length" at most
  $O(\nbvar{\Gamma})\leq O(\nbatoms{\Gamma})$.

\end{toappendix}

  \proofcase{2\textsuperscript{nd} step: bounding the size of $\rho$.}
  We now show that we can bound the \AP""refinement length"" of an "explicit approximation",
  namely the maximal length of an "atom refinement" in $\rho$.
  Let $\vec{\alpha_1}$ be an "explicit approximation" of "contraction length" at most
  $O(\nbatoms{\Gamma})$. Then $\eta_1$ has at most $O(\nbatoms{\Gamma}\cdot\nbatoms{\+C})$
  "atoms". It follows then, by the pigeonhole principle, that we can bound the
  "refinement length" of $\rho_1$ by $O(\nbatoms{\Gamma}\cdot r_{\Gamma}\cdot\nbatoms{\alpha})$. Indeed, if the length of an "atom refinement" of $\rho_1$  is greater than this bound, there are two  atoms in the refinement $x\atom{L_i} y$ and  $x’\atom{L_j} y’$, with $i<j$, mapped to the same atom via $h_1$ and whose corresponding NFA states $q_i$ and $q_j$ in the definition of refinements are the same. We can then remove the path between $y$ and $y’$.
  In conclusion, this shows that $\App{\Gamma}{\+C}{+\infty} \contained \App{\Gamma}{\+C}{O(\nbatoms{\Gamma}\cdot r_{\Gamma}\cdot \nbatoms{\+C})}$.

  \proofcase{Conclusion: Expressing \& computing $\App{\Gamma}{\+C}{O(\nbatoms{\Gamma}\cdot r_{\Gamma} \cdot \nbatoms{\+C})}$ as a "UCRPQ".} 
  In order to compute $\App{\Gamma}{\+C}{O(\nbatoms{\Gamma}\cdot r_{\Gamma} \cdot \nbatoms{\+C})}$ we can enumerate the finitely many $m$-refinements $\rho$ of $\Gamma$, where $m=O(\nbatoms{\Gamma}\cdot r_{\Gamma} \cdot \nbatoms{\+C})$, and the finitely many "CRPQs"  $\eta$ with at most $O(\nbatoms{\Gamma} \cdot \nbatoms{\+C})$ atoms such that $\rho\homto \eta$. The only issue here is that we have infinitely many possibilities to choose languages labelling the atoms that are not in the "homomorphic image" of $\rho \homto \eta$. However, we can  choose the most general language $\A^*$ obtaining a query equivalent to $\App{\Gamma}{\+C}{m}$. 
  Note that each "CRPQ" has at most $O(\nbatoms{\Gamma} \cdot \nbatoms{\+C})$ atoms and its languages are concatenations of
   $O(\nbatoms{\Gamma} \cdot \nbatoms{\+C})$ "sublanguages" of $\Gamma$ or $\A^*$, and
  so they can be described by NFAs of polynomial size on $\size{\Gamma}$ and $\nbatoms{\+C}$. %
\end{proof}

\begin{corollary}
  \label{coro:upperbound-ucrpqs}
  Testing whether a "UCRPQ" is "equivalent" to a "UCRPQ" of at most $k$ atoms is "ExpSpace"-complete.
\end{corollary}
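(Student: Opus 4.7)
The plan is to derive this corollary as a direct consequence of \Cref{lemma:approximation-for-finclass} and \Cref{prop:max-under-approx}; the lower bound is already handled by \Cref{coro:lowerbounds} (as advertised in the introduction), so I would focus the proof on the upper bound. The key reformulation is the following: letting $\+C$ denote the class of directed multigraphs with at most $k$ edges (finitely many up to isomorphism, and trivially closed under "minors" since any "minor" operation can only decrease the edge count), a "UCRPQ" $\Gamma$ is "equivalent" to a "UCRPQ" in which every "CRPQ" has at most $k$ "atoms" if and only if $\Gamma \semequiv \AppInf{\Gamma}{\+C}$. The ``only if'' direction uses maximality: if $\Gamma'$ is an equivalent "UCRPQ" in $\UCRPQ[\+C]$, then part (3) of \Cref{prop:max-under-approx} gives $\Gamma' \contained \AppInf{\Gamma}{\+C}$, which combined with $\AppInf{\Gamma}{\+C} \contained \Gamma \semequiv \Gamma'$ forces equivalence throughout. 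The ``if'' direction is obtained by taking the finite "UCRPQ" $\Delta$ equivalent to $\AppInf{\Gamma}{\+C}$ produced by \Cref{lemma:approximation-for-finclass}.

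The algorithm then becomes: use \Cref{lemma:approximation-for-finclass} to build, in exponential time, a finite "UCRPQ" $\Delta \in \UCRPQ[\+C]$ equivalent to $\AppInf{\Gamma}{\+C}$, with exponentially many "CRPQs" each of polynomial size; then test $\Gamma \semequiv \Delta$. Because $\Delta \contained \Gamma$ holds automatically by \Cref{prop:max-under-approx}, only the containment $\Gamma \contained \Delta$ needs to be verified.

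To fit within "ExpSpace", I would invoke the same non-deterministic containment bound used in the proof of \Cref{thm:2expspace-min-crpqs}: testing $\Gamma \contained \Delta$ takes space $O(\size{\Gamma} + \size{\Delta}^{c \cdot \nbatoms{\Delta}})$. Here $\nbatoms{\Delta}$ is polynomial in $\size{\Gamma} + k$ and $\size{\Delta}$ is at most single-exponential, so the whole expression remains single-exponential, yielding the claimed "ExpSpace" upper bound. Combined with Savitch's theorem if needed and the lower bound from \Cref{coro:lowerbounds}, this gives "ExpSpace"-completeness.

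The main potential obstacle is purely bookkeeping: one must be careful that $\Delta$, though exponentially large in total, has its atom count $\nbatoms{\Delta}$ polynomial (otherwise the $\size{\Delta}^{c \cdot \nbatoms{\Delta}}$ term would blow up to 2ExpSpace as in \Cref{thm:2expspace-min-crpqs}); this is precisely what \Cref{lemma:approximation-for-finclass} guarantees. No new technical lemma seems necessary---the statement follows from assembling the pieces already proved.
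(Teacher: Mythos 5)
Your proposal is correct and follows essentially the same route as the paper: test $\Gamma$ against the finite under-approximation $\Delta$ produced by \Cref{lemma:approximation-for-finclass} for the class $\+C$ of graphs with at most $k$ edges, with correctness from \Cref{prop:max-under-approx} and the space bound from the containment result of Figueira--Morvan used in \Cref{thm:2expspace-min-crpqs}. You merely spell out the maximality argument and the $\nbatoms{\Delta}\leq k$ bookkeeping more explicitly than the paper does, which is fine.
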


\begin{proof}
 It suffices to test if the "UCRPQ" $\Gamma$ is "equivalent"
  to $\App{\Gamma}{\+C}{m}$ where $\+C$ is the class of all graphs
  with at most $k$ edges and $m=O(\nbatoms{\Gamma}\cdot r_{\Gamma} \cdot \nbatoms{\+C})$ as in the proof of \Cref{lemma:approximation-for-finclass}. 
  The correctness follows from \Cref{prop:max-under-approx} since $\+C$ is
  trivially closed under "minors".
  Each $\alpha \in \App{\Gamma}{\+C}{m}$ has at most $k$ edges,
  and $\App{\Gamma}{\+C}{m}$ contains exponentially many queries,
  so by \cite[Proposition 3.11]{FM2023semantic} (see also proof of \Cref{thm:2expspace-min-crpqs}), it can be solved
  in "ExpSpace".
  Finally, "ExpSpace"-hardness will follow from \Cref{thm:minimization-lowerbound} (\S\ref{sec:lowerbounds}).
\end{proof}

\subsection{CRPQs over Simple Regular Expressions}
\AP Let ""UCRPQ(SRE)"" ("resp" ""CRPQ(SRE)"") be the set of all "UCRPQs" ("resp" "CRPQs") whose languages are expressed via "SREs" (as defined in \Cref{sec:intro}). We show that if we restrict the regular expressions we obtain a much better complexity for the "minimization problem" for "UCRPQs".

\begin{theorem}\AP\label{thm:minimization-SRE}
  The "minimization problem" for "UCRPQ(SRE)" is "PiP2"-complete.\footnote{By this we mean the "minimization problem" for "UCRPQs" whose input instances are "UCRPQ(SRE)".}
\end{theorem}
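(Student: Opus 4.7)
The plan is to apply the under-approximation machinery of \Cref{lemma:approximation-for-finclass}. Given a UCRPQ(SRE) $\Gamma$ and $k \in \N$, take $\+C$ to be the (finite, minor-closed) class of directed multigraphs with at most $k$ edges, and form $\Delta := \App{\Gamma}{\+C}{m}$ for the bound $m$ from \Cref{lemma:approximation-for-finclass}. By \Cref{prop:max-under-approx}, $\Gamma$ is equivalent to some UCRPQ with $\leq k$ atoms iff $\Gamma \contained \Delta$ (the reverse containment being automatic). Moreover, $\Delta$ can be realized as a UCRPQ(SRE): the sublanguages $\subaut{\+A_L}{p}{q}$ used to build the CRPQs of $\Delta$ decompose, for an SRE $L = L_1 \cdots L_n$, into finite unions of SREs of the form (``tail'' of some $L_i$) $\cdot L_{i+1} \cdots L_{j-1} \cdot$ (``head'' of some $L_j$); any such union inside a single CRPQ atom can be absorbed by producing one disjunct per choice, preserving the CRPQ(SRE) shape. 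This is where the restriction to $a^+$ rather than $a^*$, noted at the definition of SREs and central to \Cref{lem:canonization-SREs}, is needed.

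\textbf{Testing the containment in $\Pi_2^p$.} By \Cref{prop:cont-char-exp-st}, $\Gamma \contained \Delta$ is equivalent to: for every expansion $\xi$ of $\Gamma$, there exist $\delta \in \Delta$ and an expansion of $\delta$ homomorphically mapping to $\xi$. For UCRPQ(SRE), a canonical database of polynomial size suffices for $\xi$---one simply expands each SRE atom $L_1 \cdots L_n$ by picking one letter per factor, giving a path of length at most $n$. The universal quantifier thus ranges over polynomially-sized objects, and the existential witness (a polynomial-size $\delta \in \Delta$ together with a polynomial-size expansion and a homomorphism) is in NP, since $\delta \in^? \Delta$ is in NP by \Cref{lemma:approximation-for-finclass}. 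This yields $\Pi_2^p$.

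\textbf{Lower bound.} For $\Pi_2^p$-hardness we reduce from the containment problem for UCRPQ(SRE), known to be $\Pi_2^p$-hard. The idea is, given an instance $\Gamma_1 \contained^? \Gamma_2$, to construct a UCRPQ(SRE) $\Psi$ and integer $K$ such that $\Gamma_1 \contained \Gamma_2$ iff $\Psi$ admits an equivalent UCRPQ(SRE) of at most $K$ atoms. A natural choice is $\Psi := \Gamma_2 \cup \Gamma_1^{\textit{pad}}$, where $\Gamma_1^{\textit{pad}}$ is $\Gamma_1$ padded with a rigid SRE ``witness gadget'' that (i) inflates the number of atoms to $K+1$ and (ii) admits a hom-minimal expansion (cf.\ \Cref{coro:strong-min}) certifying that $\Gamma_1^{\textit{pad}}$ alone cannot be minimized below $K+1$ atoms. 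With $K$ taken to be the maximum atom count of the CRPQs in $\Gamma_2$, if $\Gamma_1 \contained \Gamma_2$ then $\Gamma_1^{\textit{pad}}$ becomes redundant in the union and $\Psi \semequiv \Gamma_2$ has at most $K$ atoms, while if $\Gamma_1 \not\contained \Gamma_2$ then $\Psi$ is forced to retain some witness of $\Gamma_1^{\textit{pad}}$ requiring $> K$ atoms. The main obstacle is designing the padding within the SRE fragment so as to force rigidity, most likely via fresh-letter disjoint-conjunction constructions in the spirit of \Cref{prop:lowerbound-non-redundant}.
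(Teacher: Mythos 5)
Your upper bound follows the paper's route (reduce to testing $\Gamma \contained \Delta$ for the under-approximation $\Delta$ of \Cref{lemma:approximation-for-finclass}), but the step that makes the test land in "PiP2" is wrong as stated. You claim that a polynomial-size expansion suffices for the universally quantified $\xi$ because each SRE factor contributes one letter; this is false: a factor $a^+$ expands to $a^j$ for arbitrary $j\geq 1$, so $\Exp(\Gamma)$ contains expansions of unbounded size and the universal quantifier does not range over polynomially-sized objects. What is actually needed is a \emph{small counterexample property}: if $\Gamma \not\contained \Delta$, then some counterexample expansion already has polynomially many atoms. The paper proves exactly this (\Cref{cl:small-counterexample-SRE}) by a pumping argument---shrink an over-long $a^m$-block and argue that the shrunken expansion is still a counterexample---and that argument relies on the specific shape of the languages occurring in $\Delta$. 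Relatedly, your claim that $\Delta$ can be realized as a UCRPQ(SRE) is incorrect: the construction of \Cref{lemma:approximation-for-finclass} labels the atoms lying outside the homomorphic image of the refinement with $\A^*$, which is not an SRE (it contains $\varepsilon$); your head/tail decomposition handles the sublanguages of the input SREs, but not these $\A^*$ atoms. This is why the paper states the small-counterexample claim for targets mixing $a^+$, $a_1+\dotsb+a_k$ and $\A^*$ rather than trying to stay inside the SRE fragment.

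For the lower bound you only sketch a union-based padding $\Gamma_2 \cup \Gamma_1^{\textit{pad}}$ and explicitly defer the hard part (designing the padding so as to force rigidity). That is precisely where the work lies, and the paper takes a different, fully worked-out route: it reduces containment for CRPQ(SRE)---"PiP2"-hard by \cite{FigueiraGKMNT20}---to minimization via a \emph{disjoint conjunction} $\delta_1 \disconj \gamma_2$, where $\delta_1$ is obtained from $\gamma_1$ by deleting "locally redundant atoms" and attaching a fresh-letter self-loop $x \atom{\marking_x} x$ to every variable (\Cref{lem:canonization-SREs}); the marking, together with the structure theorem, forces any equivalent query with at most $\nbatoms{\delta_1}$ atoms to contain a rigid copy of $\delta_1$, whence $\delta_1 \contained \gamma_2$. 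The general scaffolding is \Cref{lem:reduction-containment-to-minimization} and \Cref{coro:lowerbounds}. Your union-based idea might be salvageable, but as written it establishes neither the rigidity gadget nor the fact that padding $\Gamma_1$ with fresh letters preserves the equivalence with the original containment instance, so the hardness direction is not proved.
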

\begin{proof}
  \changes{
    We first begin with an easy small counterexample property.
  }
  \begin{claim}\AP\label{cl:small-counterexample-SRE}
    \changes{
    Let $\Gamma, \Delta \in \text{"UCRPQ"}$ containing only atoms with expressions of the form \AP""(i)@@sre"" $a^+$, or ""(ii)@@sre"" $a_1 + \dotsb + a_k$. Additionally, $\Delta$ may also have expressions of the form ""(iii)@@sre"" $\A^*$. If $\Gamma \not\contained \Delta$, then there exists $\anexpansion \in \Exp(\Gamma)$ such that (a) $\anexpansion \not\contained \Delta$ and (b) $\nbatoms{\anexpansion} \leq O(\max_{\gamma \in \Gamma}\nbatoms{\gamma} \cdot \max_{\delta \in \Delta}\nbatoms{\delta})$.
    }
  \end{claim}
  \changes{
    The intuition is that if a counterexample includes an atom expansion $x \atom{a^n} y$ of some atom $x \atom{a^+} y$, where $n$ is greater than the maximum number of atoms in $\Delta$ (plus one), then the expansion obtained by replacing $x \atom{a^n} y$ with $x \atom{a^{n-1}} y$ must also be a counterexample. Hence, a minimal counterexample must have all atom expansions bounded by the maximum number of atoms in $\Delta$.
  }
  \begin{proof}
    \changes{
    This fact follows from a standard technique as used in, "eg", \cite{FigueiraGKMNT20}.
    Take any counterexample $\anexpansion \in \Exp(\Gamma)$ as in the statement, and suppose it is of minimal size. By means of contradiction, assume $\nbatoms{\anexpansion} > \max_{\gamma \in \Gamma}\nbatoms{\gamma} \cdot (\max_{\delta \in \Delta}\nbatoms{\delta}+1)$. 
    Then, it contains an "atom expansion" $x \atom{a^m} y$ of size $m>\max_{\delta \in \Delta}\nbatoms{\delta}+1$. 
    Consider removing one atom from such expansion ("ie", replacing $x \atom{a^m} y$ with $x \atom{a^{m-1}} y$), obtaining some expansion $\anexpansion' \in \Exp(\Gamma)$ of smaller size. By minimality $\anexpansion'$ is not a counterexample: in other words there is  $\anexpansion'' \in \Exp(\delta)$ such that $h:\anexpansion'' \homto \anexpansion'$ for some $\delta \in \Delta$ and $h$. Since $x \atom{a^{m-1}} y$ contains more than $\nbatoms{\delta}$ "atoms", there must be some $a$-"atom" of $x \atom{a^{m-1}} y$ which either (1) has no $h$-preimage or (2) every $h$-preimage is in an "atom expansion" of a $\A^*$-"atom" or a $a^+$-"atom" of $\delta$. We can then replace the $a$-atom with two $a$-atoms in $\anexpansion'$ and do similarly in the atom expansions of $\anexpansion''$ in the $h$-preimage, obtaining that $\anexpansion' \contained \delta$. But this is in contradiction with our hypothesis, hence any minimal counterexample is of size smaller or equal to $\max_{\gamma \in \Gamma}\nbatoms{\gamma} \cdot (\max_{\delta \in \Delta}\nbatoms{\delta}+1)$.
    }
  \end{proof}
  \changes{
  Given a "UCRPQ(SRE)" $\Gamma$, the construction of \Cref{lemma:approximation-for-finclass} yields its "maximal under-approximation" by "UCRPQs" of at most $k$ atoms as a "UCRPQ" $\Delta_{\textnormal{App}}$ whose every regular expression is a concatenation of expressions of the form "(i)@@sre", "(ii)@@sre" and "(iii)@@sre" above.
  }
  It suffices then to test $\gamma \contained \Delta_{\textnormal{App}}$ for every "CRPQ(SRE)" $\gamma$ in $\Gamma$. Due to \Cref{cl:small-counterexample-SRE}
  \changes{
    (and observing that equivalent queries without concatenations can be obtained in polynomial time)
  } 
  its negation $\gamma \not\contained \Delta_{\textnormal{App}}$ can be tested by guessing a polynomial sized "expansion" $\anexpansion$ of $\gamma$ and then testing $\anexpansion \not\contained \Delta_{\textnormal{App}}$.
  In turn, $\anexpansion \contained \Delta_{\textnormal{App}}$ can be tested in "NP" by 
  \changes{
    \cite[Theorem 4.2]{FigueiraGKMNT20}.\footnote{Simply by (1) guessing a polynomial size "CRPQ" $\delta$, (2) testing $\delta \in \Delta_{\textnormal{App}}$ in "NP" by \Cref{lemma:approximation-for-finclass}, and (3) guessing a (small) expansion $\anexpansion'$ of $\delta$ and testing $\anexpansion' \homto \anexpansion$.}
  }
  This yields a "PiP2" algorithm for testing $\gamma \contained \Delta_{\textnormal{App}}$, and thus also for $\Gamma \contained \Delta_{\textnormal{App}}$.
  "PiP2"-hardness follows from \Cref{coro:lowerbounds}.
\end{proof}

\section{Lower Bounds}
\AP\label{sec:lowerbounds}
In this section we give some underlying ideas for showing lower bounds for the "minimization problems". All the proofs can be found in the Appendix.

\begin{toappendix}
\subsection{Equivalence with a Single Atom}
\end{toappendix}

\subsection{Equivalence with a Single Atom}

"Containment of CRPQs" is "ExpSpace"-complete \cite{Florescu:CRPQ,four-italians}. Somewhat surprisingly,
Figueira \cite[Lemma 8]{figueira_containment_2020}
showed that there exists a finite alphabet $\A$
"st" the problem remained "ExpSpace"-hard even if restricted to
instances with a simple shape.
We start by strengthening this result to fit our needs.

\begin{propositionrep}[{Variation on \cite[Lemma 8]{figueira_containment_2020}}]
	\AP\label{prop:variation-figueira}
	There is a fixed alphabet over which the
	"con\-tainment problem" for "Boolean CRPQs" is "ExpSpace"-hard restricted
	to instances of the form%
	\[\gamma_1() = x \atom{K} y \qquad \contained^? \qquad \bigwedge_{j \in \lBrack 1,p\rBrack} x \atom{L_j} y = \gamma_2() \text{, where:}\]
	\begin{enumerate}
		\item no language among $K$ or the $L_i$'s is empty or contains the empty word $\varepsilon$, and
		\item there is no $i$ such that $\A^* L_i \A^* = \A^* \bigl( \bigcap_j L_j\bigr) \A^*$.
	\end{enumerate}
\end{propositionrep}
	\begin{proof}
	By inspecting \cite[Proof of Lemma 8, pp. 15--17]{figueira_containment_2020}, it can be noticed that
	actually the first condition is satisfied by Figueira's reduction---using his notation,
		neither $E$ nor $G_i\cup F_C\cup F_H$ with $i \in \lBrack 0,n\rBrack$
		are empty.
		
		Moreover, we claim that the reduction can be made so that the second condition also holds.
		Notice first that the $2^n$-tiling problem is still "ExpSpace"-complete
		if we restrict it to instances with $n > 1$ and such that all instances admit one tiling which is
		``locally valid'' but not valid---namely a tiling which satisfies all vertical and horizontal constraints,
		but not the initial and final tiles conditions.
		This can be achieved "eg" by
		adding a new tile $t$ "st" $(t,t)$ is both a valid horizontal and vertical configuration,
		but $t$ cannot be adjacent to any other tile.
		Then, the second condition amounts to showing that there is no $i$ "st"
		\[
			\mathbb{A}^* (G_i \cup F_C \cup F_H) \mathbb{A}^* =
			\mathbb{A}^* \Bigl( \bigcap_{0 \leq j \leq n} (G_j \cup F_C \cup F_H)\Bigr) \mathbb{A}^*
			\]
			which is equivalent, by elementary manipulations, to saying that for all $i \in \lBrack 0,n\rBrack$
			\[
				\mathbb{A}^* G_i \mathbb{A}^*
				\not\subseteq 
				\mathbb{A}^* \Bigl( \bigcap_{0 \leq j \leq n} G_j \Bigr) \mathbb{A}^*
				\cup \mathbb{A}^* (F_C \cup F_H) \mathbb{A}^*.
				\]
				For $i=0$, this holds because we can consider a valid encoding of a tiling
				which respects all constraints except that one vertical constraint is violated.
				For $i \in \lBrack 1,n\rBrack$, we consider the encoding of a tiling which is locally valid.
				Then, it has no vertical error, no horizontal error, and no encoding error, so
				it does not belong to the right-hand side.
				However, it belongs to $\mathbb{A}^* G_i \mathbb{A}^*$ for any $i \in \lBrack 1,n\rBrack$ since it contains a subword encoding two cells separated by exactly one row. Hence,
				the second condition also holds.
	\end{proof}

We shall use these hard instances to show that the "minimization problem" for "CRPQs" is hard.
\begin{thmrep}
	\AP\label{thm:minimization-lowerbound}
	The "minimization problem" for "CRPQs" is "ExpSpace"-hard. Further, there is a fixed alphabet "st" the problem of, given a "Boolean CRPQ"
	on this alphabet with only four variables is "equivalent"
	to a "Boolean CRPQ" with a single "atom" is "ExpSpace"-hard.
\end{thmrep}
\begin{proofsketch}
	We reduce an instance of the problem of \Cref{prop:variation-figueira} to the instance
	$\delta$, where
	\begin{center}
		\includegraphics{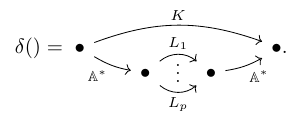}
    \end{center}
	First, it is easy to see that if $\gamma_1 \contained \gamma_2$ then $\delta \semequiv \gamma_1$. Conversely, if $\delta$ is "equivalent" to a "Boolean CRPQ" with at most one atom, then $\gamma_1 \contained \gamma_2$.
	The conditions imposed by \Cref{prop:variation-figueira} are necessary to discard one-atom queries which are self-loops and that whenever  $\gamma_1 \not\contained \gamma_2$ there is an "expansion" of $\delta$ to which any $\delta$-"equivalent" single-atom query "expansion" cannot be mapped.
\end{proofsketch}
\begin{proof}[Proof of \Cref{thm:minimization-lowerbound}]
	We reduce an instance of the problem of \Cref{prop:variation-figueira} to the instance
	$\delta$, where
		\begin{center}
			\includegraphics{figs/1.pdf}
		\end{center}
	
		\begin{claim}
			\AP\label{claim:minimization-lowerbound-1}
			If $\gamma_1 \contained \gamma_2$ then $\delta \semequiv \gamma_1$.
		\end{claim}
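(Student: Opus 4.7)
The plan is to prove the two containments $\delta \contained \gamma_1$ and $\gamma_1 \contained \delta$ separately. The shape of $\delta$ depicted in the figure is (as suggested by the earlier use of the same reduction in \Cref{prop:lowerbound-non-redundant}) essentially the \emph{disjoint conjunction} of $\gamma_1$ and $\gamma_2$, so that $\delta$ is a "Boolean CRPQ" of the form $x' \atom{K} y' \land \bigwedge_{j} x \atom{L_j} y$ on four variables $x',y',x,y$.

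The direction $\delta \contained \gamma_1$ is immediate from the construction: the mapping $x' \mapsto x$ (of $\gamma_1$ to the corresponding variables in $\delta$), $y' \mapsto y$ is a "homomorphism" $\gamma_1 \homto \delta$, and hence by the folklore characterization we get $\delta \contained \gamma_1$.

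For the reverse direction $\gamma_1 \contained \delta$, I would use the assumption $\gamma_1 \contained \gamma_2$ together with the semantics: given any "graph database" $G$ and nodes $u, v$ such that the mapping $x' \mapsto u$, $y' \mapsto v$ "satisfies" $\gamma_1$ on $G$, define a mapping of $\vars(\delta)$ into $G$ by sending both $x', x \mapsto u$ and both $y', y \mapsto v$. The atom $x' \atom{K} y'$ is satisfied by assumption. Since $\gamma_1 \contained \gamma_2$, the same database $G$ (with the witness $u, v$ for $\gamma_1$) must also satisfy $\gamma_2$, and one may re-use a witnessing mapping for $\gamma_2$ to extend (or redefine) the images of $x, y$ accordingly, so that each atom $x \atom{L_j} y$ of $\delta$ is also satisfied. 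Hence $\gamma_1 \contained \delta$ as required.

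I do not expect any real obstacle here: the argument is a straightforward ``copy the witnesses on both components of the disjoint conjunction'' argument, with the containment $\gamma_1 \contained \gamma_2$ providing exactly what is needed to satisfy the $\gamma_2$-conjunct of $\delta$ whenever $\gamma_1$ is satisfied. The only minor subtlety is that the witnesses for $\gamma_1$ and for $\gamma_2$ need not be the same nodes; but since $x'/y'$ and $x/y$ are disjoint variables in $\delta$, we can pick the witnesses independently and combine them into a single assignment on $\vars(\delta)$.
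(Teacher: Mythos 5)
Your proof is correct and takes essentially the same route as the paper: $\delta \contained \gamma_1$ is immediate because $\delta$ contains the atom of $\gamma_1$, and $\gamma_1 \contained \delta$ follows from $\gamma_1 \contained \gamma_2$ because $\delta$ is the disjoint conjunction of the two Boolean queries, so witnesses for the two parts can be chosen independently. The paper justifies the second containment via expansions (every word of $K$ contains a factor in each $L_i$), but your direct semantic argument is equivalent and, if anything, slightly more transparent.
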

		Note first that $\delta \contained \gamma_1$.
		Then, if $\gamma_1 \contained \gamma_2$, then every word of $K$
		contains a factor which belongs to each $L_i$ for $i \in \lBrack 1, p\rBrack$,
		and hence $\gamma_1 \contained \delta$ "ie" $\delta \semequiv \gamma_1$,
		and so $\delta$ is equivalent to a "CRPQ" with a single "atom".
	
		\begin{claim}
			\AP\label{claim:minimization-lowerbound-2}
			Conversely, if $\delta$ is "equivalent" to a "Boolean CRPQ" with at most one atom,
			then $\gamma_1 \contained \gamma_2$.
		\end{claim}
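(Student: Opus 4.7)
The plan is to assume $\delta \semequiv \zeta$ for some "Boolean CRPQ" $\zeta$ with at most one "atom" and deduce $\gamma_1 \contained \gamma_2$. First, I would constrain the shape of $\zeta$. Since $K$ is nonempty and $\varepsilon \notin K$ by condition (1) of \Cref{prop:variation-figueira}, the one-node database with no edges does not "satisfy" $\gamma_1$, hence not $\delta$; this rules out $\zeta$ having zero atoms and $\zeta$ being of the form $z_1 \atom{M} z_2$ with $\varepsilon \in M$, since both would be satisfied by every nonempty database. Moreover, every "expansion" of $\delta$ is the disjoint union of a path $x' \atom{w_K} y'$ with $w_K \in K$ and a ``bouquet'' of internally-disjoint paths $x \atom{w_j} y$ with $w_j \in L_j$, which is a directed acyclic graph; so a self-loop $\zeta = z \atom{M} z$ with $\varepsilon \notin M$ could not be "satisfied" by such an expansion, contradicting $\delta \contained \zeta$. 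Thus $\zeta = z_1 \atom{M} z_2$ with $z_1 \neq z_2$ and $\varepsilon \notin M$.

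Next, I would translate the two containments into word conditions. From $\zeta \contained \delta$, every $w \in M$ (viewed as the path $z_1 \atom{w} z_2$) must admit an "evaluation map" for $\delta$, so $w$ has a factor in $K$ and, at a single pair of positions, a factor in $\bigcap_j L_j$; in particular
\[ M \;\subseteq\; \A^* \Bigl(\bigcap_{j} L_j\Bigr) \A^*. \]
From $\delta \contained \zeta$, every "expansion" of $\delta$ must contain a nonempty $M$-labelled factor; since such an expansion is acyclic and its weakly-connected components are the $K$-path and the $L_j$-bouquet, this $M$-factor must occur inside $w_K$ or inside some $w_j$. A case split yields two possibilities: (A) $K \subseteq \A^* M \A^*$; or (B) $L_j \subseteq \A^* M \A^*$ for some $j$. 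Indeed, if $K \not\subseteq \A^* M \A^*$, fix an $M$-factor-free $w_K \in K$; if additionally every $L_j$ contained an $M$-factor-free word $v_j$, the tuple $(w_K, v_1, \dots, v_p)$ would yield an expansion of $\delta$ with no $M$-factor anywhere, contradicting $\delta \contained \zeta$.

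Finally, I combine these. In Case (A), chaining the inclusions yields $K \subseteq \A^* (\bigcap_j L_j) \A^*$, which is exactly the word-level formulation of $\gamma_1 \contained \gamma_2$ via \Cref{prop:cont-char-exp-st}: any "expansion" $x \atom{w} y$ of $\gamma_1$ with $w \in K$ contains a common factor $u \in \bigcap_j L_j$, so the "expansion" $\bigwedge_j x \atom{u} y$ of $\gamma_2$ maps homomorphically into it. Case (B) is eliminated by condition (2) of \Cref{prop:variation-figueira}: combining $L_j \subseteq \A^* M \A^* \subseteq \A^* (\bigcap_{j'} L_{j'}) \A^*$ with the trivial inclusion $\bigcap_{j'} L_{j'} \subseteq L_j$ gives $\A^* L_j \A^* = \A^* (\bigcap_{j'} L_{j'}) \A^*$, which is forbidden. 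Hence Case (A) must hold and the claim follows. The main subtlety lies in exploiting conditions (1) and (2) of \Cref{prop:variation-figueira}---they are precisely what rule out the degenerate shapes of $\zeta$ and dispose of Case (B), respectively; without them, $\delta$ could be ``trivially'' minimizable even when $\gamma_1 \not\contained \gamma_2$.
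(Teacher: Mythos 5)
Your proof is correct and follows essentially the same route as the paper's: rule out the degenerate shapes of $\zeta$ (zero atoms, $\varepsilon$-admitting atom, self-loop) using condition (1) and acyclicity of the expansions of $\delta$, derive $M \subseteq \A^*\bigl(\bigcap_j L_j\bigr)\A^*$ from $\zeta \contained \delta$, use condition (2) to obtain $M$-factor-free witnesses $v_j \in L_j$, and conclude $K \subseteq \A^* M \A^*$ from the expansion built from $u$ and the $v_j$'s. The only cosmetic difference is that you phrase the last step as a dichotomy whose second branch is killed by condition (2), whereas the paper eliminates that branch up front before constructing the witnessing expansion.
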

		Let $\zeta$ be the "Boolean CRPQ" with at most one atom which is equivalent to $\delta$. Assume first, by contradiction, that it is a self-loop, "ie" $\zeta() = x \atom{M} x$ for some language $M$.
		Then by assumption on $K$, there exists a word $u\in K$ of size at least one.
		Since none of the $L_i$ are empty, there exists a "canonical database"
		$G_\delta^u$ where the atom
		$\qvar \atom{K} \qvar$ yielded a $u$-labelled path. Since $\delta \contained \zeta$,
		the database $G_\delta^u$ must satisfy $\zeta() = x \atom{M} x$.
		Since every strongly connected component of $G_\delta^u$ is trivial---we assumed that none of the languages
		of $\delta$ contained $\varepsilon$---, it must be that $\varepsilon \in M$, and hence $\zeta$
		is the query which is always satisfied, which contradicts the equivalence $\delta \semequiv \zeta$.
	
		Similarly, it can be shown that $\zeta$ cannot have zero atoms since $\delta$ is non-trivial. 
		Hence, $\zeta$ is exactly of the form $\zeta() = x \atom{M} y$ for some language $M$.
		First, note that from $\zeta \contained \delta$, it follows that
		\begin{equation}
			M \subseteq \A^* \bigl( \bigcap_j L_j \bigr) \A^*.
			\AP\label{eq:L-is-factor-of-M}
		\end{equation}
		Assume then, by contradiction, that there
		exists an $i$ "st" every word of $L_i$ has a factor in $M$,
		"ie" $L_i \subseteq \A^* M \A^*$. Then \eqref{eq:L-is-factor-of-M} implies
		$\A^* L_i \A^* = \A^* \bigl( \bigcap_j L_j \bigr) \A^*$ which contradicts
		the second assumption of \Cref{prop:variation-figueira}.
		Therefore, for every $i$, there is a word $v_i \in L_i$ which contains no factor in $M$.
		
		We are now ready to show that $\gamma_1 \contained \gamma_2$, by first observing that
		it boils down to showing $K \subseteq \A^* (\bigcap_j L_j ) \A^*$. Let $u \in K$.
		Consider the following "canonical database" of $\delta$, where the $v_i$'s are words
		defined as in the paragraph above:
		\begin{center}
			\includegraphics{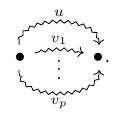}
		\end{center}
		Since $\delta \contained \zeta$, it must contain a path labelled by a word of $M$.
		But no $v_i$ contains a factor in $M$, hence it has to be $u$ that does.
		Hence, $K \subseteq \A^* M \A^*$. Together with
		\Cref{eq:L-is-factor-of-M}, we get $K \subseteq \A^* \bigl( \bigcap_j L_j \bigr) \A^*$,
		which concludes the proof of \Cref{claim:minimization-lowerbound-2}.
	
		Overall, we showed that $\gamma_1 \contained \gamma_2$ "iff"
		$\delta$ is equivalent to a "CRPQ" with at most one atom, which concludes the proof.
\end{proof}

\begin{toappendix}

Note that the assumption of $\A^* L_i \A^* = \A^* \bigl( \bigcap_j L_j\bigr) \A^*$
in \Cref{prop:variation-figueira} in necessary for the reduction to be correct,
otherwise $\delta()$ would be equivalent to
\[\delta'() \defeq x \atom{K} y \land x \atom{\A^* L_i \A^*} y,\]
and so we could have $\A^* L_i \A^* \subsetneq K$, implying that
(1) $K \not\subseteq \A^* L_i \A^*$ and hence $\gamma_1 \not\contained \gamma_2$
but (2) $\delta$ would be equivalent to a "CRPQ" with a single "atom", namely
$\delta''() \defeq x\atom{\A^* L_i \A^*} y$.
	
\subsection{Equivalence with a Single Variable}

\begin{theorem}
	\AP\label{thm:variable-minimization-lowerbound}
	There is a fixed alphabet "st" the problem of, given a "Boolean CRPQ"
	on this alphabet with only five variables is "equivalent"
	to a "Boolean CRPQ" with a single variable is "ExpSpace"-hard.
\end{theorem}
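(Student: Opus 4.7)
The plan is to adapt the reduction of \Cref{thm:minimization-lowerbound} by ``closing the loop'' using fresh alphabet symbols. Starting from the "ExpSpace"-hard containment instance of \Cref{prop:variation-figueira} with $\gamma_1() = x \atom{K} y$ and $\gamma_2() = \bigwedge_{j} x \atom{L_j} y$, I would take the four-variable "Boolean CRPQ" $\delta$ constructed in the proof of \Cref{thm:minimization-lowerbound}---which is "equivalent" to the single-"atom" "CRPQ" $x \atom{K} y$ "iff" $\gamma_1 \contained \gamma_2$---and introduce two fresh letters $\#, \$$ together with a fresh variable $v$ to define
\[
\delta'() \defeq \delta() \land y \atom{\#} v \land v \atom{\$} x.
\]
This $\delta'$ is a "Boolean CRPQ" with exactly five variables, namely the four of $\delta$ (which include $x$ and $y$) plus $v$.

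For the forward direction, suppose $\gamma_1 \contained \gamma_2$. By \Cref{claim:minimization-lowerbound-1} we then have $\delta \semequiv x \atom{K} y$, and hence $\delta' \semequiv x \atom{K} y \land y \atom{\#} v \land v \atom{\$} x$. In Boolean semantics this conjunction is "satisfied" by a "graph database" $G$ "iff" $G$ contains a cycle labelled by some word of $K \cdot \{\#\} \cdot \{\$\}$, which is exactly what the single-variable "Boolean CRPQ" $\zeta() = x \atom{K \cdot \{\#\} \cdot \{\$\}} x$ expresses. Hence $\delta' \semequiv \zeta$, showing that $\delta'$ is "equivalent" to a single-variable "CRPQ".

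For the converse, suppose $\delta' \semequiv \zeta$ for some single-variable "Boolean CRPQ" $\zeta() = \bigwedge_{i} x \atom{M_i} x$. Since $\#$ and $\$$ occur in $\delta'$ only in the two appended atoms, every "canonical database" of $\delta'$ has a unique $\#$-edge immediately followed by a unique $\$$-edge, and every directed cycle passing through them has the form $w \cdot \# \cdot \$$ for some word $w$ arising from an "expansion" of a $K$- or $L_j$-atom of $\delta$. This rigidity lets me mirror the analysis of \Cref{claim:minimization-lowerbound-2}: first rule out $\zeta$ being the always-true query using that no language of $\delta$ contains $\varepsilon$, then pick representatives $v_j \in L_j$ whose words $v_j \cdot \#\$$ avoid being factors of any $M_i$ (exploiting condition (2) of \Cref{prop:variation-figueira}), and on the resulting "canonical database" derive the inclusion $K \subseteq \A^* \bigl(\bigcap_j L_j\bigr) \A^*$, "ie" $\gamma_1 \contained \gamma_2$.

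The main obstacle is this backward direction, since the class of single-variable "Boolean CRPQs" is substantially richer than the class of single-"atom" queries (the target in \Cref{thm:minimization-lowerbound}): $\zeta$ may now have arbitrarily many self-loops with different labels. The role of the fresh letters $\#, \$$ is precisely to tame this richness by forcing any equivalent $\zeta$ to express its cycles factorwise through the unique $\#\$$-return path of $\delta'$, reducing the analysis to a variant of the single-atom case. Some additional care is still needed to rule out ``accidental'' $\zeta$'s with many compensating self-loops that collectively mimic $\delta'$; here the non-triviality conditions of \Cref{prop:variation-figueira}---and, if necessary, slight strengthenings in their spirit, analogous to how the original proof strengthens Figueira's construction---are the main lever.
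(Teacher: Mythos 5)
Your construction is genuinely different from the paper's, and the gap lies in the backward direction, which you leave essentially open. The paper's reduction does not graft a return path onto the four-variable query of \Cref{thm:minimization-lowerbound}; instead it builds a five-variable query in which \emph{every} atom---the $K$-atom \emph{and} the $L_j$-atoms---is closed into a cycle through one distinguished variable $x$, each cycle delimited by fresh letters $\triangleright\cdot\triangleleft$, and $x$ additionally carries a $\marking$-self-loop. The self-loop anchors the unique variable of any candidate $\zeta()=\bigwedge_i x\atom{M_i}x$ (every evaluation map must send it to $x$), the delimiters let one normalize $\zeta$ to the form $x\atom{\marking}x\land\bigwedge_j x\atom{\triangleright N_j\triangleleft}x$, and condition (2) of \Cref{prop:variation-figueira} is needed precisely because the $L_j$-expansions are themselves petals that the $N_j$'s could otherwise latch onto. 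In your $\delta'$ none of this machinery is present, and the concrete plan you give for the converse does not fit your own construction: the words $v_j$ label the acyclic component $\bigwedge_j x''\atom{L_j}y''$, which is disjoint from the $\#\$$-cycle, so words of the form $v_j\cdot\#\$$ never occur in any canonical database of $\delta'$, and condition (2) cannot be exploited the way you describe.

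That said, your construction is likely salvageable, but by an argument your sketch does not supply. Since all atoms of $\zeta$ are self-loops on a single variable and the only cycles in a canonical database $G_u$ of $\delta'$ wind around the $u\#\$$-loop, non-triviality of $\zeta$ forces every $M_i$ to contain either $\varepsilon$ or the label of a closed walk based at one \emph{common} node $w_u$ of that loop; hence the canonical database of $\zeta$ built from these witnessing words is a flower whose petals are rotations of $(u\#\$)^{k_i}$ all aligned at the same cut point, and every path in it avoiding $\#$- and $\$$-edges is a factor of $u$, with at most one such path between any two nodes. Feeding this flower back into $\zeta\contained\delta'$, the acyclic, $\varepsilon$-free, $\{\#,\$\}$-free component $\bigwedge_j x''\atom{L_j}y''$ must land on a single common factor of $u$, giving $K\subseteq\A^*\bigl(\bigcap_j L_j\bigr)\A^*$ directly---interestingly, without condition (2), precisely because you left the $L_j$-part acyclic and unanchored. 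This alignment argument is the entire content of the backward direction; the steps you do write down in its place are incorrect, and the part you defer to ``additional care'' is exactly where the proof lives.
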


\begin{proof}
	We use same idea as in \Cref{thm:minimization-lowerbound}.
	We reduce the problem of \Cref{prop:variation-figueira} to the instance $\delta$, where
	\begin{center}
		\includegraphics{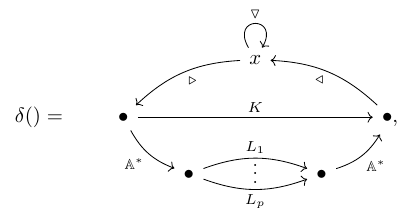}
    \end{center}
	where $\triangleright$, $\marking$ and $\triangleleft$ are new symbols.
	Note that despite being named, variable $x$ is also existentially quantified.
	
	\begin{claim}
		\AP\label{claim:variable-minimization-lowerbound-1}
		If $\gamma_1 \contained \gamma_2$ then $\delta \semequiv \gamma'_1$
		where $\gamma'_1() \defeq x \atom{\triangleright K \triangleleft} x
		\land x \atom{\marking} x$.
	\end{claim}
	
	If $\gamma_1 \contained \gamma_2$ then any word of $K$ contains a
	factor which belongs to $\bigcap_j L_j$ and so $\gamma'_1 \contained \delta$.
	The converse $\delta \contained \gamma'_1$ always holds.

	\begin{claim}
		\AP\label{claim:variable-minimization-lowerbound-2}
		Conversely, if $\delta$ is "equivalent" to a "Boolean CRPQ" with
		a single variable, then $\gamma_1 \contained \gamma_2$.
	\end{claim}

	Let $\zeta() = \bigwedge_{i=0}^n x \atom{M_i} x$ be a single-variable "Boolean CRPQ"
	that is "equivalent" to $\delta$.

	We first claim that there is some $i \in \lBrack 0,n\rBrack$
	"st" $M_i = \{\marking\}$. Every "canonical database" of $\delta$ contains a $\marking$-self loop
	and so from $\zeta \contained \delta$ it follows that any
	"canonical database" of $\zeta$ contains a $\marking$-self loop, which in turns implies
	that $M_i = \{\marking\}$ for some $i$. "Wlog", assume that $M_0 = \{\marking\}$.
	
	Observe that any "evaluation map" from $\zeta$ to a "canonical database"
	of $\delta$ must send $x \in \zeta$ to $x \in \delta$ because of the $\marking$-self loop,
	and conversely, any "evaluation map" from $\delta$ to a "canonical database"
	of $\zeta$ must send $x \in \delta$ to $x \in \zeta$.

	We remove from $\zeta$ all atoms $x \atom{M_i} x$ "st" $i \neq 0$
	and $M_i \cap {\marking^*} \neq\emptyset$. Thanks to the
	$\marking$-self loop, this transformation preserve the semantics of $\zeta$.
	More generally, if $M_i$ contains a word in which the letter `$\marking$' occurs, we get remove 
	the atom associated to $M_i$ altogether. The query obtained $\zeta'$ is clearly
	"st" $\zeta \contained \zeta'$, but dually for any "canonical database" $G_{\zeta'}$ of
	$\zeta'$, extend it to a "canonical database" $G_{\zeta}$ of $\zeta$ by picking, for any
	atom that was removed, any word containing the letter `$\marking$'.
	Since $\zeta \contained \delta$, there is an "evaluation map" from $\delta$
	to $G_{\zeta}$. Now the atoms of $\delta$ except the $\marking$-self loop
	do not use the letter $\marking$, and so it follows that the
	"evaluation map" from $\delta$ to $G_{\zeta}$ actually yields
	an "evaluation map" from $\delta$ to $G_{\zeta'}$.
	Hence, $\zeta' \contained \delta$ and thus $\zeta' \semequiv \delta$.

	The same argument works for "atoms" containing a word that does not
	start with $\triangleleft$, or that does not end $\triangleright$,
	or that contain strictly more than one occurrence of these symbols.
	Overall, it implies that "wlog" $\zeta$ is "equivalent" to
	\[
		x \atom{\marking} x \land \bigwedge_{j=1}^{m} x \atom{\triangleright N_j \triangleleft} x
	\]
	where $m \geq 0$ and the $N_j$'s are languages over $\A$.

	Assume now, by contradiction, that for all $j \in \lBrack 1,m \rBrack$
	"st" $N_j \not\subseteq K \cap \A^* \big(\bigwedge_i L_i\big) \A^*$. Pick for each $j$ a word $u_j$ witnessing this. 
	The canonical database of $\zeta$ induced by these
	words $\langle n_1, \hdots, n_m \rangle$, namely
	\[x \atom{\marking} x \land \bigwedge_{j=1}^m x \atom{\triangleright n_j \triangleleft}\]
	must satisfy $\delta$. But this implies that at least one $n_j$ must belong to
	$K \cap \A^* \big(\bigcap_i L_i\big) \A^*$.
	Contradiction.
	
	In fact, a argument similar to what we claimed before shows that
	we can remove all "atoms" "st" $N_j \not\subseteq K \cap \A^* \big(\bigcap_i L_i\big) \A^*$ without changing the semantics.
	Hence, "wlog", for each $j$, we have $N_j \subseteq K \cap \A^* \big(\bigcap_i L_i\big) \A^*$.

	We then claim that each word of $K$ must belong to all $N_j$.
	Indeed, let $u$ be a word of $K$. Let $v_i$ be a word in
	$L_i \smallsetminus \A^* \big(\bigcap_k L_k\big) \A^*$---recall
	that such words exist by an assumption of \Cref{prop:variation-figueira}---and
	consider the "canonical database" of $\delta$ obtained by "expanding"
	$K$ into $u$ and $L_i$ into $v_i$.
	Now $\delta \contained \zeta$ so this database must satisfy $\zeta$.
	Hence, for each $j$, $N_j$ must contain one word among $u$, $v_1, \hdots,v_m$.
	It cannot be any $v_i$ since otherwise we would have $v_i \in N_j \subseteq
	K \cap \A^* \big(\bigcap_k L_k\big) \A^* \subseteq \A^* \big(\bigcap_k L_k\big) \A^*$, which is a contradiction. And so $u \in N_j$.
	Therefore, we have
	\[
		K \subseteq \bigcap_j N_j \subseteq K \cap \A^* \big(\bigcap_i L_i\big) \A^*
	\]
	from which it follows that $K \subseteq \big(\bigcap_k L_k\big)$ and hence
	$\gamma_1 \contained \gamma_2$.
	
	Overall,
	\Cref{claim:variable-minimization-lowerbound-1,claim:variable-minimization-lowerbound-2}
	imply that $\gamma_1 \contained \gamma_2$ "iff" it is equivalent
	to a "CRPQ" with a single variable, in which case it is actually equivalent to
	\[\gamma'_1() \defeq x \atom{\triangleright K \triangleleft} x
		\land x \atom{\marking} x,\]
	which concludes the correctness of the reduction.
\end{proof}

\subsection{Variable Minimization is Harder than Containment}
A \AP""class of (Boolean) CRPQs"" is a function \AP$\intro*\classCRPQ$ mapping an alphabet $\A$ 
to a set $\classCRPQ_{\A}$ of "Boolean CRPQs", which is closed under variable renaming and alphabetic
renaming of the languages.

The \AP""disjoint conjunction"" \AP$\intro*\disconj$ of two "CRPQs" consists of
the conjunction of the queries, up to renaming so that their variable sets are disjoint---of two queries is 
the "class@@CRPQ" still belong to the "class@@CRPQ".
We say that the "class@@CRPQ" is \AP""closed under disjoint conjunction""
if $\gamma \in \classCRPQ_\A$ and $\delta \in \classCRPQ_\B$
imply $\gamma \disconj \delta \in \classCRPQ_{\A \cup \B}$.

Lastly, we say that the class is \AP""closed under variable marking""
if \emph{one of} the three following properties holds: 
\begin{description}
	\itemAP[\intro*\axiomVarMarkingLoop] for any $\gamma \in \classCRPQ_{\A}$, if $y$ is a variable of $\gamma$,
		if $a\not\in\A$, then $\gamma' \defeq \gamma \land y \atom{a} y$
		is in $\classCRPQ_{\A\sqcup\{a\}}$, or
	\itemAP[\intro*\axiomVarMarkingOut] for any $\gamma \in \classCRPQ_{\A}$, if $y$ is a variable of $\gamma$,	
		if $a\not\in\A$, then $\gamma' \defeq \gamma \land y \atom{a} y'$
		is in $\classCRPQ_{\A\sqcup\{a\}}$,
		where $y'$ is a new variable not occurring in $\gamma$, or
	\itemAP[\intro*\axiomVarMarkingIn] for any $\gamma \in \classCRPQ_{\A}$, if $y$ is a variable of $\gamma$,	
		if $a\not\in\A$, then $\gamma' \defeq \gamma \land y' \atom{a} y$
		is in $\classCRPQ_{\A\sqcup\{a\}}$,
		where $y'$ is a new variable not occurring in $\gamma$.
\end{description}

We will sometimes write $\gamma \in \classCRPQ$ to mean that $\gamma\in\classCRPQ_{\A}$
for some alphabet $\A$.

\begin{fact}
	Any "class@@CRPQ" defined by restricting the class of languages allowed to
	label the "atoms" is both "closed under disjoint conjunction" and "closed under variable marking", assuming that languages of
	the form $\{a\}$ are allowed, where $a$ is a single letter.
\end{fact}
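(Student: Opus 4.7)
The plan is to unpack definitions. A \kl{class@CRPQ} defined by restricting the allowed languages amounts to fixing a collection $\mathcal{L}$ of regular languages---closed under alphabetic renaming and, by hypothesis, containing every singleton $\{a\}$---and declaring that $\classCRPQ_{\A}$ consists of the \kl{Boolean CRPQs} over $\A$ whose every \kl{atom} is labelled by a language from $\mathcal{L}$. Both conclusions will then follow by inspecting which atoms, and which labels, each operation produces.

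For \kl{closed under disjoint conjunction}, I would observe that for $\gamma \in \classCRPQ_{\A}$ and $\delta \in \classCRPQ_{\B}$ the atoms of $\gamma \disconj \delta$ are exactly the (variable-renamed) atoms of $\gamma$ together with those of $\delta$; no new atom is created, and renaming variables does not touch the language labels. Hence every atom of $\gamma \disconj \delta$ is labelled by a language in $\mathcal{L}$, yielding $\gamma \disconj \delta \in \classCRPQ_{\A \cup \B}$, as required.

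For \kl{closed under variable marking} it suffices to verify any one of the three axioms, and I would take $\axiomVarMarkingLoop$: given $\gamma \in \classCRPQ_{\A}$, a variable $y$ of $\gamma$, and a fresh letter $a \notin \A$, the query $\gamma \land y \atom{a} y$ introduces a single new atom labelled by the singleton $\{a\}$, which lies in $\mathcal{L}$ by the singleton assumption, while all pre-existing atoms of $\gamma$ remain labelled by languages in $\mathcal{L}$ (viewed over the extended alphabet $\A \sqcup \{a\}$). Therefore $\gamma \land y \atom{a} y \in \classCRPQ_{\A \sqcup \{a\}}$. There is no real obstacle here: the only thing one needs to notice is that both operations leave every atom label inside $\mathcal{L}$, using the singleton hypothesis exactly once to account for the one fresh atom introduced by variable marking.
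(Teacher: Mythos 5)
Your proof is correct: the paper states this as a \emph{Fact} without proof, and your argument---unpacking the definitions to observe that disjoint conjunction introduces no new atom labels while variable marking introduces exactly one atom labelled by a singleton $\{a\}$, covered by the hypothesis---is precisely the routine verification the authors leave implicit. Nothing is missing.
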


\begin{theorem}
	\AP\label{thm:reduction-containment-to-variable-minimization}
	For any "class of CRPQs closed under disjoint conjunction" and
	"closed under variable marking"
	$\classCRPQ$, there is a polynomial-time reduction
	from the "containment problem" for "Boolean" queries of $\classCRPQ$ to the "CRPQ" 
	"minimization problem" restricted to queries of $\classCRPQ$. 
	The same bound applies if we add the constraint that the target "CRPQ" must also belong to $\classCRPQ$.
\end{theorem}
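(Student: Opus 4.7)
I would reduce $\gamma_1 \contained^? \gamma_2$ (for Boolean $\gamma_1, \gamma_2 \in \classCRPQ_\A$) to the CRPQ minimization problem by constructing a single query $\delta \in \classCRPQ$ and a threshold $k$ such that $\gamma_1 \contained \gamma_2$ iff $\delta$ admits an equivalent CRPQ with at most $k$ atoms. Illustrating with $\axiomVarMarkingLoop$ (the axioms $\axiomVarMarkingOut$ and $\axiomVarMarkingIn$ allow analogous constructions using fresh outgoing/incoming edges), I would attach to every variable $v$ of $\gamma_i$ a distinct fresh self-loop $v \atom{\#_v^{(i)}} v$, with all letters $\#_v^{(i)}$ pairwise distinct and disjoint from $\A$, obtaining marked queries $\tilde\gamma_1, \tilde\gamma_2 \in \classCRPQ$. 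Then I would set $\delta \defeq \tilde\gamma_1 \disconj \tilde\gamma_2 \in \classCRPQ$, and choose $k$ of the order of $\nbatoms{\tilde\gamma_1} + \nbvar{\gamma_2}$; both constructions are polynomial and stay inside $\classCRPQ$ by the two closure assumptions.

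For the forward direction, if $\gamma_1 \contained \gamma_2$ then every model of $\tilde\gamma_1$ already satisfies $\gamma_2$, so $\delta$ is semantically equivalent to the CRPQ obtained from $\tilde\gamma_1$ by placing all $\tilde\gamma_2$-markers as fresh self-loops on a single existing variable; this equivalent query has $k$ atoms and lies in $\classCRPQ$ by iterated variable marking. For the converse direction, I would invoke the Semantical Structure theorem (\Cref{thm:structure-theorem}): assuming $\gamma_1 \not\contained \gamma_2$, pick an expansion $\xi_1 \in \Exp(\gamma_1)$ such that no expansion of $\gamma_2$ has a homomorphism into $\xi_1$, and any $\xi_2 \in \Exp(\gamma_2)$; augmenting with the markers yields an expansion $\xi$ of $\delta$ whose two components are rigidified by the pairwise-distinct fresh self-loops, so that $\xi$ is "hom-minimal" and $\core(\xi)$ is the disjoint union of the cores of the two augmented components. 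The segment graph of $\core(\xi)$ then has strictly more than $k$ segments, and by the Semantical Structure theorem it must be a minor of the underlying graph of any CRPQ equivalent to $\delta$, forcing such a CRPQ to have more than $k$ atoms.

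The main obstacle I anticipate is the asymmetric case where $\gamma_2 \contained \gamma_1$ while $\gamma_1 \not\contained \gamma_2$: then $\delta$ collapses semantically to a CRPQ of size roughly $\nbatoms{\gamma_2}$ plus the $\tilde\gamma_1$-markers, which could slip below $k$ if $\gamma_2$ happens to be smaller than $\gamma_1$, giving a spurious YES. Handling this cleanly is the crux of the proof---one either calibrates $k$ and the marker scheme so that the Semantical Structure counting strictly excludes this regime (leveraging that the distinct fresh markers at each variable of $\tilde\gamma_2$ force any equivalent CRPQ to carry at least $\nbvar{\gamma_2}$ loop-atoms and to encode the rigid shape of $\tilde\gamma_2$'s core), or enriches the marker layer on the $\tilde\gamma_2$ side until its own minimum equivalent provably exceeds $k$. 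The strengthened version of the theorem, where the target CRPQ must also belong to $\classCRPQ$, follows immediately because every CRPQ produced in the forward direction is obtained from $\gamma_1$ by iterated variable marking, and the lower bound in the converse direction is independent of the target.
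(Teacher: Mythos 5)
Your reduction breaks in the forward direction, and the culprit is marking the variables of $\gamma_2$. If $\gamma_1 \contained \gamma_2$, every model of $\tilde\gamma_1$ satisfies $\gamma_2$, but it need not satisfy $\tilde\gamma_2$: the latter demands a \emph{single} homomorphism from $\gamma_2$ that simultaneously realises the atoms of $\gamma_2$ and sends each variable $v$ to a node carrying the fresh loop $\#_v^{(2)}$, and nothing in $\tilde\gamma_1$ --- nor in your proposed small equivalent, which piles all the $\#_v^{(2)}$-loops onto one variable of $\tilde\gamma_1$ --- guarantees that the nodes witnessing $\gamma_2$ are the marked ones. Concretely, take $\gamma_1 = \gamma_2 = (u \atom{a} v)$ as Boolean queries: a canonical database of your candidate query places both loops $\#_u^{(2)}$ and $\#_v^{(2)}$ on the source of the $a$-edge, and that database does not satisfy $\tilde\gamma_2$ (there is no $a$-self-loop on the marked node), so the claimed equivalence fails. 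Worse, since the $\#^{(2)}$-letters are fresh, the conjunct $\tilde\gamma_2$ is rigid and forces any query equivalent to $\delta$ to retain roughly $\nbatoms{\gamma_2}+\nbvar{\gamma_2}$ atoms for it even when $\gamma_1 \contained \gamma_2$; with your choice of $k$ the instance then answers ``no'' on both sides and does not decide containment. The obstacle you flag (the asymmetric case $\gamma_2 \contained \gamma_1$) is not where the difficulty lies.

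The paper's proof marks only $\gamma_1$: it sets $\delta_1 \defeq \gamma_1 \land \bigwedge_{x} x \atom{\marking_x} x$ and reduces to $\langle \delta_1 \disconj \gamma_2,\; \nbvar{\delta_1}\rangle$, where the budget counts \emph{variables} rather than atoms --- this theorem is the variable-minimization reduction, the atom-counting analogue being \Cref{lem:reduction-containment-to-minimization}, which requires the heavier canonization machinery precisely because the naive atom count does not work. The forward direction is then immediate: $\gamma_1 \contained \gamma_2$ gives $\delta_1 \contained \gamma_2$, hence $\delta_1 \disconj \gamma_2 \semequiv \delta_1$, a query with exactly $\nbvar{\delta_1}$ variables. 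For the converse, the pairwise-distinct loops $\marking_x$ force any equivalent query $\zeta$ with at most $\nbvar{\delta_1}$ variables to carry exactly one marker per variable (non-degeneracy of $\gamma_1$ rules out two markers on the same variable), so $\vertex{\zeta}$ is identified with $\vertex{\gamma_1}$; a chase through the homomorphisms $D'_1 \oplus G'_2 \homto Z \homto D_1 \oplus G_2$ then extracts $G'_2 \homto G_1$, witnessing $\gamma_1 \contained \gamma_2$. No appeal to \Cref{thm:structure-theorem} is made --- and indeed that theorem lower-bounds segments and atoms, not variables, so it would not directly serve the variable-counting statement being proved here.
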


Say that a "CRPQ" is \AP""degenerate@@CRPQ"" if it contains an atom labelled the language $\{\varepsilon\}$.
Equivalently, it is "non-degenerate@@CRPQ" if it has at least one "canonical database"
which is "non-degenerate@@db".

\begin{fact}
	\AP\label{fact:produce-non-degenerate}
	One can turn a "degenerate CRPQ" into a "non-degenerate@@CRPQ" one by
	iteratively identifying variables adjacent to an atom $\atom{\{\varepsilon\}}$.
	This can be implemented in polynomial time.
\end{fact}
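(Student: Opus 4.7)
The plan is to describe the explicit rewriting procedure and verify its correctness and complexity. Given a "CRPQ" $\gamma$, while there exists an "atom" of the form $x \atom{\{\varepsilon\}} y$ in $\gamma$, do the following: if $x = y$, simply remove the atom; otherwise, identify $x$ and $y$ by picking one of them, say $z$, and replacing every occurrence of the other by $z$ throughout $\atoms(\gamma)$ and in the tuple of "output variables"; then remove the "atom" $z \atom{\{\varepsilon\}} z$. Let $\gamma'$ denote the resulting "CRPQ".

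The first step is to check equivalence $\gamma \semequiv \gamma'$. For $\gamma' \contained \gamma$, any map $f\colon \vars(\gamma') \to \vertex{G}$ witnessing satisfaction of $\gamma'$ on $G$ extends to a witness for $\gamma$ by setting $f(x) \defeq f(z) \defeq f(y)$; the freshly introduced "atom" $x \atom{\{\varepsilon\}} y$ is satisfied because the empty path from $f(z)$ to itself is labelled by $\varepsilon$. Conversely, for $\gamma \contained \gamma'$, any $f$ witnessing $\gamma$ must send $x$ and $y$ to nodes connected by an $\varepsilon$-labelled path, which forces $f(x) = f(y)$; this induces a well-defined map on $\vars(\gamma')$ that satisfies each remaining atom. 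Output variables behave correctly because identifying them does not change the answer set once the $\varepsilon$-atom is imposed.

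The second step is complexity: each iteration strictly decreases $\nbatoms{\gamma}$, so the rewriting terminates in at most $\nbatoms{\gamma}$ rounds. Each round scans the list of atoms to find an $\{\varepsilon\}$-labelled one and performs a global rename, both in linear time. Hence the whole procedure runs in polynomial time. When it halts, no atom is labelled $\{\varepsilon\}$, so the result is "non-degenerate@CRPQ". The ``equivalently'' part of the surrounding statement then follows because any "expansion" of a "non-degenerate CRPQ" that replaces each atom by a shortest word in its language produces a "non-degenerate@db" "canonical database".

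There is no real obstacle; the only mild subtlety is ensuring that the identification is done consistently when one of $x,y$ is an "output variable" (in which case $z$ should be chosen as that output variable, or if both are output variables appearing at different positions, we let $z$ inherit both positions, producing a tuple with a repeated entry, which is permitted by our conventions on "output variables").
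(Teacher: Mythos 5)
Your proof is correct and is precisely the routine argument the paper leaves implicit (the Fact is stated without proof): the observation that any satisfying assignment is forced to identify the endpoints of an $\{\varepsilon\}$-labelled atom gives equivalence in both directions, and since each round deletes one atom the procedure terminates in polynomial time. The only (inessential) slip is in your closing remark about the ``equivalently'' clause of the surrounding definition: to obtain a non-degenerate expansion one should pick a shortest \emph{nonempty} word in each language, since an atom labelled by, say, $\{\varepsilon, a\}$ does not make the query degenerate even though its shortest word is $\varepsilon$.
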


\begin{proof}[Proof of \Cref{thm:reduction-containment-to-variable-minimization}]
	We assume for now that $\classCRPQ$ satisfies
	the axiom $\axiomVarMarkingLoop$.
	Given an instance $\gamma_1() \contained^{?} \gamma_2()$ of the "containment problem" for "Boolean" queries of $\classCRPQ$,
	we assume "wlog" that $\gamma_1$ is "non-degenerate@@CRPQ" using \Cref{fact:produce-non-degenerate},
	and we reduce it to the instance $\langle \delta_1 \disconj \gamma_2, \nbvar{\delta_1} \rangle$,
	where $\delta_1$ is defined as:
	\[
		\delta_1() \defeq \gamma_1 \land \bigwedge_{x \in \vertex{\gamma_1}} x \atom{\marking_x} x
	\]
	where $\marking_x$ is a fresh letter for each $x \in \vertex{\gamma_1}$.
	The reduction works clearly in logarithmic-space,
	and clearly $\delta_1 \disconj \gamma_2 \in \classCRPQ$ since
	$\classCRPQ$ is "closed under disjoint conjunction" and $\axiomVarMarkingLoop$.
	Moreover, for it to be correct
	we need to show that $\gamma_1 \contained \gamma_2$ "iff" $\delta_1 \disconj \gamma_2$
	is "equivalent" to a "CRPQ" with at most $\nbvar{\delta_1}$ variables.

	\begin{claim}
		\AP\label{claim:reduction-containment-to-variable-minimization-1}
		If $\gamma_1 \contained \gamma_2$ then $\delta_1 \disconj \gamma_2 \semequiv \delta_1$.
	\end{claim}

	\begin{proof}
		Indeed, $\gamma_1 \contained \gamma_2$ implies $\delta_1 \contained \gamma_1 \contained \gamma_2$ and so $\delta_1 \disconj \gamma_2 \semequiv \delta_1$.
	\end{proof}

	Actually this property is an ``if and only if''. For the converse, we will prove a stronger statement.

	\begin{claim}
		\AP\label{claim:reduction-containment-to-variable-minimization-2}
		If $\delta_1 \disconj \gamma_2$ is "equivalent" to a "CRPQ"
		with at most $\nbvar{\delta_1}$ variables, then $\gamma_1 \contained \gamma_2$.
	\end{claim}

	\begin{proof}
		Let $\zeta$ be a "CRPQ" with at most $\nbvar{\delta_1}$ variables
		that is equivalent to $\delta_1 \disconj \gamma_2$.
		
		We claim first that for each $x \in \vertex{\zeta}$ there is a
		unique variable in $\zeta$ with a $\marking_x$-self-loop.
		Indeed, consider any "canonical database" $Z$ of $\zeta$:
		since $\zeta \contained \delta_1 \disconj \gamma_2$, there exists
		a "canonical database" $D_1$ of $\delta_1$ and $G_2$ of $\gamma_2$ "st"
		$D_1 \oplus G_2 \homto Z$ where $\oplus$ denotes the disjoint union.
		Since $D_1$ contains a $\marking_x$-self loop for each $x\in \vertex{\gamma_1}$,
		so does $Z$. Since this property holds for every $Z$, it follows that
		$\zeta$ must have a self-loop atom labelled by the singleton language $\{\marking_x\}$
		for each $x\in \vertex{\gamma_1}$.

		Now observe that no variable of $\zeta$ can be labelled by two $\marking_x$-self-loops
		with $x\in\vertex{\gamma_1}$. Indeed, $\gamma_1$ is "non-degenerate@@CRPQ", and so $\delta_1$
		is also "non-degenerate@@CRPQ", and so there
		exists a "canonical database" $D_1$ of $\delta_1$ which is "non-degenerate@@db".
		Then, pick any canonical database $G_2$ of $\gamma_2$.
		$D_1 \oplus G_2$ is a "canonical database" of $\delta_1 \disconj \gamma_2$, which is 
		equivalent to $\zeta$, so there is an "evaluation map"
		from $\zeta$ to $D_1 \oplus G_2$. If a variable of $\zeta$ had both a
		$\marking_x$- and a $\marking_y$-self-loop for $x \neq y \in \vertex{\gamma_1}$,
		then so would either $D_1$ or $G_2$. $G_2$ contains no such letters, and so it would have
		to be $D_1$. This contradicts the definition of $D_1$.
		Hence, no variable of $\zeta$ can be labelled by two $\marking_x$-self-loops.
		Together with the previous paragraph and the fact that 
		$\zeta$ has at most $\nbvar{\delta_1} = \nbvar{\gamma_1}$ variables, it follows that
		we can assume "wlog"---up to renaming the variables of $\zeta$---that
		$\vertex{\zeta} = \vertex{\gamma_1}$ and for each $x \in \vertex{\gamma_1}$,
		$x \atom{\marking_x} x$ is an "atom" of $\zeta$. Moreover, this is the only self-loop in
		$\zeta$ labelled by $\{\marking_x\}$, and for any self-loop atom $x \atom{L} x$
		we cannot have $\marking_y \in L$ for any $y \neq x \in \vertex{\gamma_1}$. 
		
		We are now ready to prove that $\gamma_1 \contained \gamma_2$.
		Let $G_1$ be a "canonical database" of $\gamma_1$, and let $D_1$ be the associated
		"canonical database" of $\delta_1$---it is obtained by adding an $\marking_x$-self-loop
		on every $x \in \vertex{\gamma_1}$.
		Pick any canonical database $G_2$ of $\gamma_2$. Since $\delta_1 \disconj \gamma_2 \contained \zeta$, there exists a "canonical database" $Z$ of $\zeta$
		"st" $Z \homto D_1 \oplus G_2$.
		But then, since $\zeta \contained \delta_1 \disconj \gamma_2$, there exists
		$D'_1$ and $G'_2$, which are "canonical databases" of $\delta_1$ and $\gamma_2$,
		respectively, "st"
		\[
			D'_1 \oplus G'_2 \homto
			Z \homto
			D_1 \oplus G_2.
		\]
		Restrict this homomorphism to $G'_2$: we obtain 
		\[
			G'_2 \homto
			Z \homto
			D_1 \oplus G_2.
		\]
		Now note that, because of the previous paragraph,
		the "homomorphism" $Z \homto D_1 \oplus G_2$ must map $x \in \vertex{Z}$  
		to $x \in \vertex{D_1}$---because of the $\marking_x$-self-loop.
		Since $D_1 \oplus G_2$ is a disjoint union, it follows that image of this
		"homomorphism" is actually included in $D_1$, and so obtain a "homomorphism"
		\[
			G'_2 \homto Z \homto D_1.
		\]
		Now of course $\marking_x$-self-loop will occur in the image of any "homomorphism"
		$Z \homto D_1$. However, in the composition $G'_2 \homto Z \homto D_1$,
		since $G'_2$ does not use any letter of the form $\marking_x$, $x\in \vertex{\gamma_1}$,
		we conclude we actually get a "homomorphism"
		\[
			G'_2 \homto G_1,
		\]
		which concludes the proof that $\gamma_1 \contained \gamma_2$.
	\end{proof}
	
	\Cref{claim:reduction-containment-to-variable-minimization-1,claim:reduction-containment-to-variable-minimization-2} imply that
	$\gamma_1 \contained \gamma_2$ "iff" $\delta_1 \disconj \gamma_2$ is "equivalent" to a "CRPQ"
	with at most $\nbvar{\delta_1}$ variables, which concludes the reduction under the assumption
	that $\classCRPQ$ satisfies $\axiomVarMarkingLoop$.

	To conclude, note that
	if $\classCRPQ$ satisfies either $\axiomVarMarkingOut$ or $\axiomVarMarkingIn$ then exactly the same proof works, except that the definition of $\delta_1$
	should be changed: variables will be marked using outgoing and incoming edges, respectively.
	Lastly, since $\delta_1 \in \classCRPQ$, then we have as a by-product of our proof
	that $\delta_1 \disconj \gamma_2$ is equivalent to a "CRPQ" with at most $k$ atoms
	"iff" it is equivalent to a "CRPQ" of $\classCRPQ$ with at most $k$ atoms.
	It follows that this reduction also works it we add the constraint that
	$\delta$ must be in $\classCRPQ$.
\end{proof}

\end{toappendix}

\begin{toappendix}
	\subsection{Minimization is Harder than Containment}
\end{toappendix}
\subsection{Minimization is Harder than Containment}
We show that, under some technical conditions, the containment problem can be reduced to the "minimization problem". This allows to transfer known lower bounds from the "containment problem" of "CRPQ" classes. Due to space constraints we only state the key definitions and lemmas.
\changes{We first introduce an intermediary technical property called ``canonization'',
which ensures the feasibility of the reduction.}

\paragraph*{Canonization.}
We say that an "expansion" $\anexpansion$ of a "CRPQ" $\gamma$ is \AP""non-degenerate@@db""
if no "atom refinement" in $\anexpansion$ was obtained using the empty word.
\begin{toappendix}
	\begin{fact}
		\AP\label{fact:nb-seg-expansion}
		If $\anexpansion$ is a "non-degenerate expansion" of $\gamma$ and
		$\gamma$ is "fully contracted", then $\nbseg{\anexpansion} = \nbatoms{\gamma}$.
	\end{fact}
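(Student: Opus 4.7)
The approach is to establish a bijection between the atoms of $\gamma$ and the segments of $\anexpansion$. I would decompose the plan into three observations.

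First, I would unpack ``fully contracted'': by definition $\nbseg{\gamma} = \nbatoms{\gamma}$, and since segments partition the atom set (\Cref{fact:partition-into-segments}), each segment of $\gamma$ consists of a single atom. Consequently, no internal variable of $\gamma$ can sit in the middle of an internal path of length $\geq 2$, which forces its unique in-edge and unique out-edge to coincide. Hence every internal variable $x$ of $\gamma$ is the sole endpoint of a self-loop $x \atom{L} x$ with no other incident edges; equivalently, the endpoints of any non-loop atom of $\gamma$ are external.

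Second, I would look at how expansion affects degrees. Each atom $x \atom{L} y$ of $\gamma$ is replaced in $\anexpansion$ by a path $x \atom{a_1} v_1 \atom{a_2} \cdots \atom{a_n} y$ with $n \geq 1$ by non-degeneracy, the $v_i$'s being fresh variables. Each fresh $v_i$ has in- and out-degree $1$ in $\anexpansion$ and is thus internal. Meanwhile, each original variable of $\gamma$ keeps the same in- and out-degree in $\anexpansion$ (each atom contributes the same amount whether viewed in $\gamma$ or via its expansion path), hence the same external/internal status.

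Third, I would verify that each such expanded path is exactly one segment of $\anexpansion$. For a non-loop atom $x \atom{L} y$, both endpoints are external in $\gamma$ (hence in $\anexpansion$) by the first observation, so $x \atom{a_1} v_1 \cdots \atom{a_n} y$ is a maximal internal path ending at two external variables. For a self-loop atom $x \atom{L} x$ the expansion is a cyclic path around $x$, which is a cyclic segment regardless of whether $x$ is external or an isolated internal variable of $\gamma$. Since fresh variables are not shared between different atom expansions and distinct atoms of $\gamma$ do not share internal endpoints (by the first step), the resulting segments are pairwise atom-disjoint and in bijection with $\atoms(\gamma)$, yielding $\nbseg{\anexpansion} = \nbatoms{\gamma}$. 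The only subtle point, and the one I would highlight, is that fully-contracted-ness is precisely what prevents two atoms of $\gamma$ from being glued together at an internal variable in $\anexpansion$ and thereby collapsing into a single larger segment.
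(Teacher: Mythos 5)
Your proof is correct; the paper states this fact without proof, and your argument---full contraction forces every internal variable of $\gamma$ to carry only a self-loop, so the atom expansions in a non-degenerate expansion are pairwise atom-disjoint maximal internal paths, i.e.\ exactly the segments of $\anexpansion$---is the natural justification one would supply. You also correctly isolate the two points that need checking: fresh intermediate variables are internal (so an atom expansion cannot split into several segments) and endpoints of non-loop atoms are external (so two atom expansions cannot be glued into one segment).
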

\end{toappendix}
A \AP""class of (Boolean) CRPQs"" is a function \AP$\intro*\classCRPQ$ mapping an alphabet $\A$ 
to a set $\classCRPQ_{\A}$ of "Boolean CRPQs", which is closed under variable renaming and alphabetic
renaming of the languages.
Given a "class of CRPQs" $\classCRPQ$, the \AP""$\classCRPQ$-canonization problem""
is the functional problem taking an alphabet $\A$
and two "Boolean CRPQs" $\langle \gamma_1,\gamma_2\rangle$ in $\classCRPQ_{\A}$,
and outputting an alphabet \AP$\intro*\alphabetmarking$, two other "Boolean CRPQs" $\langle \delta_1, \delta_2 \rangle$, in $\classCRPQ_{\A\sqcup \alphabetmarking}$,
such that:
\begin{description}
	\itemAP[\intro*\axiomCanonMonotonicity{}:] \phantomintro*\axiomsCanon
		$\gamma_1 \contained \gamma_2$ "iff" $\delta_1 \contained \delta_2$,
	\itemAP[\intro*\axiomCanonCore{}:]
		there exists a "non-degenerate@@cdb" 
		$D_1 \in \Exp(\delta_1)$ "st", for every $D'_1 \in \Exp(\delta_1)$ and $f: D'_1 \homto D_1$ we have (i) $D'_1$ is "non-degenerate@@cdb", (ii) $f$ is "strong onto", and (iii)  $f(x)=x$ for every $x\in \vars(\delta_1)$,
	\itemAP[\intro*\axiomCanonNonRed{}:] for each $D_1 \in \Exp(\delta_1)$,
		for each $x,y \in \vars(\delta_1)$, there cannot be an "atom refinement" in $D_1$
		from $x$ to $y$ and another path from $x$ to $y$ in $D_1$, disjoint from the "atom refinement" that share the same label,
	\itemAP[\intro*\axiomCanonContracted{}:] $\delta_1$ is "fully contracted",
	\itemAP[\intro*\axiomCanonContainment{}:] $\gamma_2 \contained \delta_2$, and 
	\itemAP[\intro*\axiomCanonMarking{}:] each connected component of $\delta_1$ must contain at least one "atom"
	labelled by a language $L$ "st" every word of $L$ must contain at least one letter from $\alphabetmarking$.
\end{description}

The \reintro{$\classCRPQ$-strong canonization problem} is defined similarly, except that we replace `there exists' with `for all' in \axiomCanonCore{}---see \axiomStrongCanonCore{} 
for a formal definition.

\begin{toappendix}
The \AP""$\classCRPQ$-strong canonization problem"" is defined similarly to the "$\classCRPQ$-canonization problem", except that $\axiomCanonCore{}$ is replaced by the axiom
\begin{description}
	\itemAP[\intro*\axiomStrongCanonCore{}:] %
	for every "non-degenerate@@cdb" 
		$D_1 \in \Exp(\delta_1)$, every $D'_1 \in \Exp(\delta_1)$, and $f: D'_1 \homto D_1$ we have (i) $D'_1$ is "non-degenerate@@cdb", (ii) $f$ is "strong onto", and (iii)  $f(x)=x$ for every $x\in \vars(\delta_1)$,
\end{description}
\end{toappendix}

We show that assuming we can solve the "$\classCRPQ$-canonization problem" ("resp" "$\classCRPQ$-strong canonization problem") this problem, then the "CRPQ"
(resp. "UCRPQ") "minimization problem"
restricted to "CRPQs" of $\classCRPQ$ ("resp" to "UCRPQs" whose "disjuncts" are all in $\classCRPQ$) is harder than the "containment problem" over $\classCRPQ$.

\begin{toappendix}
In the following statement, a \AP""$\classCRPQ$-canonization oracle"" (resp. \AP""$\classCRPQ$-strong canonization oracle"") is an oracle to any algorithm solving the "$\classCRPQ$-canonization problem" (resp. "$\classCRPQ$-strong canonization problem").
\end{toappendix}

\begin{lemmarep}
	\AP\label{lem:reduction-containment-to-minimization}
	For any class $\classCRPQ$ of "CRPQs" closed under "disjoint conjunction",
	there is a polynomial-time algorithm
	using a "$\classCRPQ$-canonization oracle" ("resp" "$\classCRPQ$-strong canonization oracle")
	from the "containment problem" for "Boolean" queries of $\classCRPQ$ to the
	"CRPQ" (resp. "UCRPQ") "minimization problem" restricted to "CRPQs" in $\classCRPQ$
	(resp. "UCRPQs" whose "disjuncts" are all in $\classCRPQ$).
	The reduction also applies under the restriction that the target query must also be in
	$\classCRPQ$.
\end{lemmarep}

\begin{proof}
	\proofcase{Minimization in the class of "CRPQs".}
	Let $\gamma_1 \contained^{?} \gamma_2$ be an instance of the "containment problem" for "Boolean" queries of $\classCRPQ$.
	We apply the "$\classCRPQ$-canonization oracle" to obtain 
	a pair $\langle \delta_1, \delta_2 \rangle$ as in the axioms $\axiomsCanon$.
	We then map the instance $\gamma_1 \contained^{?} \gamma_2$ to 
	$\langle \delta_1 \disconj \gamma_2, \nbatoms{\delta_1} \rangle$.

	The reduction works in logarithmic space with a "$\classCRPQ$-canonization oracle",
	and clearly $\delta_1 \disconj \delta_2 \in \classCRPQ$ since
	both $\delta_1$ and $\delta_2$ are in $\classCRPQ$ and $\classCRPQ$ is "closed under disjoint conjunction".
	We need to show that $\gamma_1 \contained \gamma_2$ "iff" $\delta_1 \disconj \gamma_2$
	is "equivalent" to a "CRPQ" with at most $\nbatoms{\delta_1}$ "atoms".

	\begin{claim}
		\AP\label{claim:reduction-containment-to-minimization-1}
		If $\gamma_1 \contained \gamma_2$, then $\delta_1 \disconj \delta_2 \semequiv \delta_1$
		and so $\delta_1 \disconj \delta_2$ is "equivalent" to a "CRPQ" with at most $\nbatoms{\delta_1}$ "atoms".
	\end{claim}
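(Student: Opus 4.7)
The plan is short: the claim is a direct consequence of axiom $\axiomCanonMonotonicity$ combined with how "disjoint conjunction" behaves on "Boolean" queries. First, I would apply $\axiomCanonMonotonicity$ to the hypothesis $\gamma_1 \contained \gamma_2$ to conclude $\delta_1 \contained \delta_2$; this is exactly the content of that axiom.

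Next, I would argue that $\delta_1 \disconj \delta_2 \semequiv \delta_1$. Since both queries are "Boolean", every variable is existentially quantified, and the renaming built into "disjoint conjunction" is immaterial: a "graph database" $G$ "satisfies" $\delta_1 \disconj \delta_2$ if and only if $G$ "satisfies" $\delta_1$ and $G$ "satisfies" $\delta_2$. The containment $\delta_1 \contained \delta_2$ obtained in the first step then forces any $G$ satisfying $\delta_1$ to already satisfy $\delta_2$, so the semantics of $\delta_1 \disconj \delta_2$ collapses to that of $\delta_1$, as required.

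The second half of the claim is then immediate: $\delta_1$ is itself a "CRPQ" with exactly $\nbatoms{\delta_1}$ "atoms", and therefore $\delta_1 \disconj \delta_2$ is "equivalent" to a "CRPQ" of size at most $\nbatoms{\delta_1}$. There is no real obstacle in this direction of the reduction: the axiom $\axiomCanonMonotonicity$ is tailor-made for it, and the remaining, more delicate canonization axioms ($\axiomCanonCore$, $\axiomCanonNonRed$, $\axiomCanonContracted$, and $\axiomCanonMarking$) will only be needed for the converse claim, which must rule out the existence of a small equivalent CRPQ whenever $\gamma_1 \not\contained \gamma_2$.
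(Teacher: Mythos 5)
Your proposal is correct and follows exactly the paper's route: the paper disposes of this claim with the single line ``This follows from \axiomCanonMonotonicity{}'', and your argument merely spells out the two elementary steps implicit there (obtaining $\delta_1 \contained \delta_2$ from the axiom, and noting that a "disjoint conjunction" of "Boolean" queries collapses to the larger conjunct under containment). Nothing is missing.
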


	This follows from $\axiomCanonMonotonicity$.
	The converse hold for the same reason, but we actually need a stronger property.

	\begin{claim}
		\AP\label{claim:reduction-containment-to-minimization-2}
		If $\delta_1 \disconj \delta_2$ is "equivalent" to a "CRPQ"
		with at most $\nbatoms{\delta_1}$ "atoms", then $\gamma_1 \contained \gamma_2$.
	\end{claim}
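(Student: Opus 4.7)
By $\axiomCanonMonotonicity$, it suffices to show $\delta_1 \contained \delta_2$; combined with $\axiomCanonContainment$ (which gives $\gamma_2 \contained \delta_2$) and the trivial $\zeta \contained \delta_1 \disconj \gamma_2 \contained \gamma_2$, this further reduces to proving $\delta_1 \contained \zeta$. The overall plan is to mimic the approach of \Cref{claim:reduction-containment-to-variable-minimization-2}, using the marking atoms provided by $\axiomCanonMarking$ to pin down the shape of $\zeta$ in terms of $\delta_1$, with the atom bound $\nbatoms{\zeta} \leq \nbatoms{\delta_1}$ forcing the correspondence to be tight via the Semantical Structure theorem (\Cref{thm:structure-theorem}).

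I would first fix the rigid expansion $D_1^* \in \Exp(\delta_1)$ supplied by $\axiomCanonCore$ and chase expansions through both directions of $\zeta \semequiv \delta_1 \disconj \gamma_2$: for any $G_2^* \in \Exp(\gamma_2)$ one obtains $D_1' \oplus G_2' \homto Z \homto D_1^* \oplus G_2^*$ with $Z \in \Exp(\zeta)$ and $D_1' \in \Exp(\delta_1)$, $G_2' \in \Exp(\gamma_2)$. Since $\gamma_2 \in \classCRPQ_\A$ uses no letter from $\alphabetmarking$ while each connected component of $\delta_1$ carries a marking-saturated atom (by $\axiomCanonMarking$), the image of $D_1'$ must lie entirely in $D_1^*$, giving $D_1' \homto D_1^*$, which by $\axiomCanonCore$ is strong onto and fixes $\vars(\delta_1)$. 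Together with $\axiomCanonContracted$, this should make $D_1^* \oplus G_2^*$ (for a suitable $G_2^*$) hom-minimal as an expansion of $\delta_1 \disconj \gamma_2$, so $\underlying{\delta_1} \cong \seggraph(D_1^*)$ is a minor of $\underlying{\zeta}$ by \Cref{thm:structure-theorem}. The bound $\nbatoms{\zeta} \leq \nbatoms{\delta_1}$ then forces $\underlying{\zeta}$ to be essentially a subdivision of $\underlying{\delta_1}$ with atoms in bijective correspondence.

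Second, using $\axiomCanonNonRed$ together with this tight structural correspondence, I would argue that for every atom $u \atom{L} v$ of $\zeta$ and every word $w \in L$, the word $w$ must contain the marking letter of the corresponding atom of $\delta_1$: otherwise, we could build an expansion of $\zeta$ that refutes the hom $D_1' \homto Z$ required by $\zeta \contained \delta_1 \disconj \gamma_2$, because the relevant marking edge in $D_1'$ would have nowhere to map. Granted this, for any $D_1 \in \Exp(\delta_1)$ and any $G_2^* \in \Exp(\gamma_2)$, applying $\delta_1 \disconj \gamma_2 \contained \zeta$ yields $Z \in \Exp(\zeta)$ with $Z \homto D_1 \oplus G_2^*$, and the marking letters forced on every atom of $Z$ imply that the image must lie in $D_1$. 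Hence $Z \homto D_1$, giving $\delta_1 \contained \zeta$ and closing the reduction. I expect the main obstacle to be this second stage: establishing the marking-saturation property requires delicately combining the atom bound with $\axiomCanonContracted$, $\axiomCanonNonRed$, and the rigidity of $\axiomCanonCore$ --- the latter strengthened to $\axiomStrongCanonCore$ for the UCRPQ variant so that the same argument applies uniformly across all expansions of each disjunct of the target UCRPQ.
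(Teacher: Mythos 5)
Your high-level plan---markings pin the shape of $\zeta$ to that of $\delta_1$, and the budget $\nbatoms{\zeta}\leq\nbatoms{\delta_1}$ forces the correspondence to be tight---is the right one, but both stages have genuine gaps. The serious one is the appeal to \Cref{thm:structure-theorem}: it requires a hom-minimal expansion of $\delta_1\disconj\gamma_2$, and no ``suitable $G_2^*$'' need exist, because $\gamma_2$ is an arbitrary query of the class and can preclude hom-minimality altogether (e.g.\ for $\gamma_2()=x\atom{a^+}x$, replacing the $a$-cycle of any expansion by one of twice the length yields a homomorphically smaller but not hom-equivalent expansion, so \emph{no} expansion of the disjoint conjunction is hom-minimal). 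Even where one existed, the theorem would only give you a minor of $\seggraph(\core(D_1^*\oplus G_2^*))$, i.e.\ of the core of the \emph{whole} disjoint union---whose $G_2^*$-part you cannot control---rather than of $\seggraph(D_1^*)$ alone. The paper's proof deliberately avoids \Cref{thm:structure-theorem}: it splits $\zeta$ into the components $\zeta_+$ whose languages can produce a marking letter and the remainder $\zeta_-$, uses $\axiomCanonMarking$ to force $D_1'$ to land inside an expansion $Z_+$ of $\zeta_+$, and then uses the rigidity of $\axiomCanonCore$ (strong onto, fixing $\vars(\delta_1)$) together with $\axiomCanonNonRed$ and \Cref{coro:embedding-segments} to show that $D_1'$ \emph{embeds} into $Z_+$, whence $\nbatoms{\zeta_+}\geq\nbseg{Z_+}\geq\nbatoms{\delta_1}$ by \Cref{prop:seggraph-of-expansion}; the atom budget then forces $\zeta_-$ to be empty. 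The role you want hom-minimality to play is played instead by the bespoke rigidity axioms, which apply to the $\delta_1$-part only---this is precisely why the canonization axioms are formulated as they are.

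Your second stage is also off: the claim that \emph{every} word of \emph{every} language of $\zeta$ contains the marking letter of a corresponding $\delta_1$-atom is false, since $\axiomCanonMarking$ guarantees only one marked atom per connected component of $\delta_1$, and when $\gamma_1\contained\gamma_2$ the query $\zeta$ can simply be $\delta_1$ itself, most of whose atoms carry no marking. What is actually needed---and what the paper extracts from $\zeta\semequiv\zeta_+$---is the far weaker fact that every connected component of $\zeta$ contains some marked atom, so that any expansion $Z\homto D_1\oplus G_2$ must have image inside $D_1$; restricting the composite $D_1'\oplus D_2'\homto Z\homto D_1$ to $D_2'$ then yields $\delta_1\contained\delta_2$ directly, which is cleaner than your detour through $\delta_1\contained\zeta$ (that detour is sound, just unnecessary). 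Your closing remark that the UCRPQ variant needs $\axiomStrongCanonCore$ so the argument applies to every relevant disjunct does match the paper.
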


	We write $\zeta$ as $\zeta_+ \disconj \zeta_-$ where $\zeta_+$ is
	the "disjoint conjunction" of all connected components of $\zeta$ containing
	an "atom" whose language contains a word containing a `$\alphabetmarking$'-letter,
	and $\zeta_-$ is the "disjoint conjunction" of all other
	components. We want to show that $\zeta_-$ is actually empty.

	Let $D_1$ be a "canonical database" of $\delta_1$ as in \axiomCanonCore{}.
	Then pick any "canonical database" $G_2$ of $\gamma_2$.
	By $\axiomCanonContainment$, there exists $D_2 \cdb \delta_2$ "st" $D_2 \homto G_2$.
	Then $D_1 \oplus D_2 \cdb \delta_1 \disconj \delta_2$, so from $\delta_1 \disconj \delta_2 \semequiv \zeta$
	it follows that there exists $Z_+ \cdb \zeta_+$, $Z_- \cdb \zeta_-$, $D'_1 \cdb \delta_1$ and $D'_2 \cdb \delta_2$
	such that we have "homomorphisms"
	\[
		D'_1 \oplus D'_2
		\xrightarrow{f}
		Z_+ \oplus Z_-
		\xrightarrow{g}
		D_1 \oplus D_2
		\xrightarrow{h}
		D_1 \oplus G_2.
	\]
	By $\axiomCanonMarking$, every connected component of $D'_1$ must contain at least one edge labelled
	by a letter of $\alphabetmarking$, and so the "homomorphism" $f_{\restriction{D'_1}}\colon
	D'_1 \homto Z_+ \oplus Z_-$ is actually a "homomorphism" from $D'_1$ to $Z_+$.
	Note then that $(h \circ g)_{\restriction{Z_+}}$ maps $Z_+$ onto $D_1 \oplus G_2$ but
	since $G_2$ contains no letter $\alphabetmarking$, the image of this "homomorphism" is included in $D_1$.
	Overall, we have "homomorphisms"
	\[D'_1 \xrightarrow{i} Z_+ \xrightarrow{j} D_1.\] 
	By \axiomCanonCore{}, $j\circ i$ be "strong onto" and for every $x\in \vars(\delta_1)$,
	$j(i(x)) = x$.
	
	We claim that for each $x$ "external" in $D'_1$, then $i(x)$ is "external".
	First, since $x \in \vars(\delta_1)$ is "external", then $x \in \vars(\delta_1)$
	and so $j(i(x)) = x$. Then $j \circ i$ is "strong onto" and so $j$ be also be
	"strong onto". It follows that the in-degree (resp. out-degree) of $i(x)$
	is lower bounded by the in-degree (resp. out-degree) $j(i(x)) = x$.
	So, if $x$ has in-degree or out-degree at least 2, so does $i(x)$.
	Moreover, if $x$ has in-degree (resp. out-degree 0), then 
	so must $i(x)$ because otherwise, $j(i(x)) = x$ should also have an incoming edge.
	Overall, by letting $i[D'_1]$ be the image of $D'_1$ by $i$,
	we get that the natural embedding $i[D'_1] \homto Z_+$ satisfies the assumption
	of \Cref{coro:embedding-segments} and so $\nbseg{i[D'_1]} \leq \nbseg{Z_+}$.
	
	Now observe that $i\colon D'_1 \to i[D'_1]$ is injective on $\vars(\delta_1)$
	because $j(i(x))$ for any $x\in \vars(\delta_1)$.
	Moreover, by \axiomCanonNonRed{}, $i$ cannot identity an "atom" with another path of "atoms" and so $D'_1$ is actually isomorphic to $i[D'_1]$, from which we
	get $\nbseg{i[D'_1]} = \nbatoms{\delta_1}$
	and so $\nbseg{Z_+} \geq \nbatoms{\delta_1}$.
	By \Cref{prop:seggraph-of-expansion} $\nbatoms{\zeta_+} \geq \nbseg{Z_+}$, and so $\zeta_+$ has at least $\nbatoms{\delta_1}$
	"atoms", but since we assumed that $\zeta$ has at most $\nbatoms{\delta_1}$ "atoms",
	it follows that $\zeta_-$ is trivial and $\zeta \semequiv \zeta_+$.
	
	We are now ready to prove that $\delta_1 \contained \delta_2$.
	Let $D_1 \cdb \delta_1$. Pick any $G_2 \cdb \gamma_2$.
	By $\axiomCanonContainment$, there exists $D_2 \cdb \delta_2$ "st" $D_2 \homto G_2$.
	Then since $\delta_1 \disconj \delta_2 \semequiv \zeta_+$,
	there exists $Z_+ \cdb \zeta_+$, $D'_1 \cdb \delta_1$ and $D'_2 \cdb \delta_2$ "st"
	\[
		D'_1 \oplus D'_2
		\homto
		Z_+
		\homto
		D_1 \oplus D_2
		\homto
		D_1 \oplus G_2.
	\]
	Because of $\alphabetmarking$, the "homomorphism" $Z_+ \homto D_1 \oplus G_2$ must in fact be
	a homomorphism $Z_+ \homto D_1$, and so by composition we obtain a "homomorphism"
	$D'_1 \oplus D'_2 \homto D_1$, which can be restricted to $D'_2$, yielding $D'_2 \homto D_1$.
	Hence, $\delta_1 \contained \delta_2$, and so by $\axiomCanonMonotonicity$, $\gamma_1 \contained \gamma_2$.

	Putting \Cref{claim:reduction-containment-to-minimization-1,claim:reduction-containment-to-minimization-2}
	together shows that the reduction is correct. We now prove that this reduction also works for the other
	variations of the problem.

	\proofcase{Minimization in the class of "UCRPQs".}
	If we allow $\zeta$ to be a "UCRPQ", then \Cref{claim:reduction-containment-to-minimization-1} still holds,
	and we need to adapt \Cref{claim:reduction-containment-to-minimization-2}. Assume that this ``small''
	"UCRPQ" is of the form $\zeta_1 \lor \zeta_2 \lor \cdots \lor \zeta_k$ where each $\zeta_i$ has
	at most $\nbatoms{\delta_1}$ "atoms".
	We say that a "disjunct" $\zeta_i$ is \AP""relevant@@prooflowerbound"" when it has
	at least one "canonical database" $Z_i$ appearing in a pattern of the form
	\[
		D'_1 \oplus D'_2
		\homto
		Z_i
		\homto
		D_1 \oplus D_2
		\homto
		D_1 \oplus G_2.
	\]
	for some $D_1,D'_1 \cdb \delta_1$, $D_2,D'_2 \cdb \delta_2$ and $G_2 \cdb \gamma_2$.
	Using $\axiomStrongCanonCore$ on $D_1$,
	the same proof as in the case of "CRPQs" apply. We can then conclude that "wlog"
	$\zeta_i \semequiv (\zeta_i)_+$ for all "relevant@@prooflowerbound" "disjunct".
	The proof of $\delta_1 \contained \delta_2$---and hence $\gamma_1 \contained \gamma_2$---then goes through as before,
	which concludes the proof.

	\proofcase{If $\zeta$ is restricted to be in $\classCRPQ$.}
	Then \Cref{claim:reduction-containment-to-minimization-2} still holds.
	To adapt \Cref{claim:reduction-containment-to-minimization-1}, it suffices to remark that
	$\delta_1 \in \classCRPQ$.
\end{proof}

\begin{proof}[Proof sketch]
	\changes{
		The construction reduces some restriction of the "containment problem"
		to some variant of the "minimization problem". The main idea is,
		given an instance $\gamma_1 \contained^? \gamma_2$, to build "CRPQs"
		$\delta_1$ and $\delta_2$ with some desirable properties "st" the following are equivalent: (i)
		$\gamma_1 \contained \gamma_2$, (ii) $\delta_1 \contained \delta_2$,
		(iii) $\delta_1 \disconj \delta_2 \semequiv \delta_1$,
		where $\reintro*\disconj$ denotes the \reintro{disjoint conjunction} ("ie", the conjunction of "atoms" of both queries making sure that variables are disjoint), and
		(iv) $\delta_1 \disconj \delta_2$ is equivalent to a
		"CRPQ" whose size is at most the size of $\delta_1$.
		Of course, (ii) $\Leftrightarrow$ (iii) always holds, as well as (iii) $\Rightarrow$ (iv).}

	\changes{All the difficulty lies in guaranteeing the converse property: (iv) $\Rightarrow$ (iii).
		We will use the constructions $\gamma_i \mapsto \delta_i$, given
		by the "canonization problem", to enforce it
		while respecting (i) $\Leftrightarrow$ (ii). 
		The main idea of this approach is to add some `marking' (with fresh alphabet symbols) of either the variables or the atoms of $\gamma_1$ in $\delta_1$, ensuring that
		$\delta_1$ has some strong structure implying that---loosely speaking---any query equivalent to $\delta_1 \disconj \delta_2$ must contain $\delta_1$ as a subquery.
		Using the assumption that $\delta_1 \disconj \delta_2$ is equivalent to a
		"CRPQ" whose size is at most the size of $\delta_1$, we conclude that in fact $\delta_1 \disconj \delta_2 \semequiv \delta_1$, "ie" $\delta_1 \contained \delta_2$ and so $\gamma_1 \contained \gamma_2$. 
	}
\end{proof}

\begin{toappendix}
\begin{lemma}
	\AP\label{lem:canonization-CRPQs}
	The "strong canonization problem" can be solved in non-deterministic logarithmic space for the class of all
	"CRPQs", or more generally for all "classes of CRPQs" defined by restricting the
	underlying multigraph class, provided that this class is "closed under disjoint union".
\end{lemma}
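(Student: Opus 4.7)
\begin{proofsketch}
The plan is to build $\delta_1$ from $\gamma_1$ by enriching each "atom" language with a unique pair of marker letters, making the "expansions" of $\delta_1$ structurally rigid, and to build $\delta_2$ from $\gamma_2$ by enlarging each atom language to its marker-inverse image.

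First I would preprocess $\gamma_1$: apply \Cref{fact:produce-non-degenerate} to eliminate atoms whose language contains $\varepsilon$, and then \Cref{fact:produce-fully-contracted} to make the result "fully contracted"; both run in logarithmic space. This immediately secures \axiomCanonContracted{}, and ensures that no atom language of $\delta_1$ will contain $\varepsilon$, giving part~(i) of \axiomStrongCanonCore{}. Introduce then a marking alphabet $\alphabetmarking = \{\marking_i^{\textup{pre}}, \marking_i^{\textup{post}} \mid i \in \atoms(\gamma_1)\}$, define $\delta_1$ as $\gamma_1$ where each atom $x_i \atom{L_i} y_i$ is replaced by $x_i \atom{\marking_i^{\textup{pre}} \cdot L_i \cdot \marking_i^{\textup{post}}} y_i$, and define $\delta_2$ as $\gamma_2$ where each atom $x \atom{L} y$ is replaced by $x \atom{\pi^{-1}(L)} y$, with $\pi \colon (\A \sqcup \alphabetmarking)^* \to \A^*$ the marker-erasing projection. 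An NFA for $\pi^{-1}(L)$ is obtained from one for $L$ by adding marker self-loops at every state, so the construction stays in logarithmic space; the "underlying graphs" of $\delta_1$ and $\delta_2$ are those of the preprocessed $\gamma_1$ and $\gamma_2$, which lie in $\+C$ under the stated closure assumptions (for the class of all CRPQs, and for classes closed under disjoint union and under the elementary graph operations used in \Cref{fact:produce-fully-contracted}).

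The uniqueness of the marker letters handles most axioms at once. For \axiomStrongCanonCore{}~(iii), $\marking_i^{\textup{pre}}$ (resp.\ $\marking_i^{\textup{post}}$) occurs in any $D_1 \in \Exp(\delta_1)$ uniquely as an outgoing edge of $x_i$ (resp.\ incoming edge of $y_i$), so any "homomorphism" $f \colon D'_1 \homto D_1$ is forced to fix every variable of $\delta_1$. For~(ii), once endpoints are fixed, the atom path of $D'_1$ can only map through the unique $\marking_i^{\textup{pre}}$-to-$\marking_i^{\textup{post}}$ path in $D_1$, and the uniqueness of the outgoing letter edge at each internal vertex along this path forces $f$ to be a letter-by-letter isomorphism, hence "strong onto". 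Axioms \axiomCanonNonRed{} and \axiomCanonMarking{} follow from marker uniqueness, and $\gamma_2 \contained \delta_2$ holds because $L \subseteq \pi^{-1}(L)$.

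The main obstacle is \axiomCanonMonotonicity{}. The key structural observation is that in any $D_1 \in \Exp(\delta_1)$, every internal vertex of an atom expansion has a unique outgoing edge; consequently any walk in $D_1$ is a concatenation of (possibly partial at the extremities) complete atom traversals, each labelled $\marking_c^{\textup{pre}} u_c \marking_c^{\textup{post}}$. In the forward direction, from $D_1 \in \Exp(\delta_1)$ one reads off the associated $G_1 \in \Exp(\gamma_1)$ by stripping markers and merging endpoints; a witness $G_2 \homto G_1$ from $\gamma_1 \contained \gamma_2$ is then lifted to $D_2 \homto D_1$ by inserting the appropriate marker pair at every atom boundary crossed, each resulting atom word belonging to $\pi^{-1}(L)$ by construction. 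Conversely, a homomorphism $D_2 \homto D_1$ produces walks in $D_1$; projecting their labels through $\pi$ yields walks in $G_1$, because merging endpoints across marker edges is compatible with the projection of walk vertices, and this defines $G_2 \homto G_1$ with atom words in $L$.
\end{proofsketch}
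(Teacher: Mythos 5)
Your proposal is correct and follows essentially the same route as the paper's proof: you wrap each atom language of $\gamma_1$ with a unique pair of fresh pre/post marker letters to obtain $\delta_1$, take the marker-erasing inverse image of each language of $\gamma_2$ to obtain $\delta_2$, and derive \axiomStrongCanonCore{} and the remaining axioms from marker uniqueness, with \axiomCanonMonotonicity{} proved by inserting/stripping markers at atom boundaries exactly as in the paper. The only cosmetic difference is your explicit non-degeneracy preprocessing, which is not needed since the markers already force every atom language of $\delta_1$ to exclude $\varepsilon$.
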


\begin{proof}
	Given a pair $\langle\gamma_1, \gamma_2\rangle$ of "Boolean" queries
	we assume "wlog" that $\gamma_1$ is "fully contracted" using
	\Cref{fact:produce-fully-contracted}---which works in non-deterministic logarithmic space---,
	and we reduce it to the pair $\langle \delta_1, \delta_2 \rangle$,
	where $\delta_1$ is defined as
	\[
		\delta_1() \defeq \bigwedge_{\alpha = x \atom{L} y \in \gamma_1} x \atom{\triangleright_\alpha L \triangleleft_\alpha} y
	\]
	where $\triangleright_\alpha$ and $\triangleleft_\alpha$ are fresh letters for each "atom" $\alpha$ of $\gamma_1$,
	and 
	\[
		\delta_2() \defeq \bigwedge_{x \atom{L} y \in \gamma_2} x \atom{\erasingmorphism[-1]{\alphabetmarking}[L]} y,
	\]
	where $\alpha_1, \hdots, \alpha_k$ are all the "atoms" of $\gamma_1$,
	$\alphabetmarking \defeq \{\triangleright_{\alpha_1},\triangleleft_{\alpha_1}, \hdots,
	\triangleright_{\alpha_k}, \triangleleft_{\alpha_k}\}$ and
	\AP$\intro*\erasingmorphism{\alphabetmarking}\colon (\A\cup \alphabetmarking)^* \to \A^*$
	is the monoid morphism that maps letters of $\A$ to themselves and 
	letters of $\alphabetmarking$ to the empty word.

	In other words, $\delta_1$ is similar to $\gamma_1$ except that it must read the special
	symbol $\triangleright_{\alpha}$ before satisfying "atom" $\alpha \in \gamma_1$,
	and read symbol $\triangleleft_{\alpha}$ after.
	On other hand, $\delta_2$ is obtained from $\gamma_2$ by relaxing the constraints:
	instead of having to read a path labelled by some language $L$,
	we now must read a path such that, when we ignore these new symbols $\triangleright_{\alpha}$
	and $\triangleleft_{\alpha}$, then it belongs to $L$.

	We now need to prove that properties \axiomsCanon{} and \axiomStrongCanonCore{} hold.
	\begin{claim}
		\AP\label{claim:canonization-multigraph-monotonic-l-to-r}
		If $\gamma_1 \contained \gamma_2$ then $\delta_1 \disconj \delta_2 \semequiv \delta_1$.
	\end{claim}
		
	Showing that $\delta_1 \disconj \delta_2 \semequiv \delta_1$
	amounts to showing $\delta_1 \contained \delta_2$.
	Let $D_1$ be a "canonical database" of $\delta_1$.
	Consider the "canonical database" of $\gamma_1$ obtained
	by removing every edge of the form $x \atom{\triangleright_\alpha} y$
	or $x \atom{\triangleleft_\alpha} y$, and merging variables $x$ and $y$.
	Since $\gamma_1 \contained \gamma_2$, there exists a "canonical database"
	$G_2$ of $\gamma_2$ and a homomorphism $f\colon G_2 \to G_1$.
	We then define a "canonical database" $D_2$ of $\delta_2$ together
	with a homomorphism $g\colon D_2 \to D_1$ as follows:
	given an "atom refinement"
	\[
		x_0 \atom{b_1} x_1 \atom{b_2} \cdots \atom{b_n} x_n
		\text{ in } G_2.
	\]
	we look at its image
	\[
		f(x_0) \atom{b_1} f(x_1) \atom{b_2} \cdots \atom{b_n} f(x_n)
		\text{ in } G_1.
	\]
	Now some $f(x_i)$'s might be variables of $\gamma_1$ and hence
	this might not a path in $G_1$.
	We let $i_1 < \hdots < i_k$ denote the indices $i$ "st" $x_i \in \vertex{\gamma_1}$, so
	that we can split the path in $G_1$ into multiples "atom refinements" 
	of "atoms" of $\gamma_1$:
	\begin{align*}
		\underbrace{f(x_0) \atom{b_1} f(x_1) \atom{b_2} \cdots \atom{b_{i_1}}}_{
			\text{end of an "atom refinement" of $\alpha_0$}
		}
		f(x_{i_1})
		\underbrace{\atom{b_{i_1+1}} f(x_{i_{1}+1}) \atom{b_{i_1+2}} \cdots \atom{b_{i_2}}}_{
			\text{"atom refinement" of $\alpha_1$}
		}
		f(x_{i_2})
		\atom{b_{i_2+1}} f(x_{i_{2}+1}) \atom{b_{i_2+2}} \cdots
		\\[1em]
		\cdots \atom{b_{i_k}}
		f(x_{i_k})
		\underbrace{\atom{b_{i_k+1}} f(x_{i_{k}+1}) \atom{b_{i_k+2}} \cdots \atom{b_n} f(x_n)}_{
			\text{beginning of an "atom refinement" of $\alpha_k$}
		}
	\end{align*}
	in $G_1$.
	For $i \in\lBrack 1,k\rBrack$, we let $f(x_{i_j})^r$ (resp. $f(x_{i_j})^l$) denote the unique variable of
	$D_1$ "st" there is an edge from $f(x_{i_j})$ to $f(x_{i_j})^r$
	labelled by $\triangleright_{\alpha_i}$ (resp. from $f(x_{i_j})^l$ to $f(x_{i_j})$
	labelled by $\triangleleft_{\alpha_{i-1}}$), we obtain a path
	\begin{align*}
		\underbrace{f(x_0) \atom{b_1} f(x_1) \atom{b_2} \cdots \atom{b_{i_1}}}_{
			\text{end of an "atom refinement" of $\alpha_0$}
		}
		f(x_{i_1})^l \atom{\triangleleft_{\alpha_{0}}} f(x_{i_1}) \atom{\triangleright_{\alpha_{1}}} f(x_{i_1})^r
		\\[1em]
		\underbrace{\atom{b_{i_1+1}} f(x_{i_{1}+1}) \atom{b_{i_1+2}} \cdots \atom{b_{i_2}}}_{
			\text{"atom refinement" of $\alpha_1$}
		}
		f(x_{i_2})^l 
		\atom{\triangleleft_{\alpha_{1}}} f(x_{i_2}) \atom{\triangleright_{\alpha_{1}}} f(x_{i_2})^r 
		\atom{b_{i_2+1}} f(x_{i_{2}+1}) \atom{b_{i_2+2}} \cdots
		\\[1em] 
		\cdots \atom{b_{i_k}}
		f(x_{i_k})^l \atom{\triangleleft_{\alpha_{k-1}}} f(x_{i_k}) \atom{\triangleright_{\alpha_{k}}} f(x_{i_k})^r
		\underbrace{\atom{b_{i_k+1}} f(x_{i_{k}+1}) \atom{b_{i_k+2}} \cdots \atom{b_n} f(x_n)}_{
			\text{beginning of an "atom refinement" of $\alpha_k$}
		}
	\end{align*}
	in $D_1$. Hence, we build $D_2$ by replacing each "atom refinement"
	$
		x_0 \atom{b_1} x_1 \atom{b_2} \cdots \atom{b_n} x_n
		\text{ in } G_2
	$
	by
	\begin{align*}
		x_0 \atom{b_1} x_1 \atom{b_2} \cdots \atom{b_{i_1}}
		x_{i_1}^l \atom{\triangleleft_{\alpha_0}} x_{i_1} \atom{\triangleright_{\alpha_{1}}} x_{i_1}'
		\atom{b_{i_1+1}} x_{i_{1}+1} \atom{b_{i_1+2}} \cdots \atom{b_{i_2}}
		x_{i_2}^l \atom{\triangleleft_{\alpha_1}} x_{i_2} \atom{\triangleright_{\alpha_{2}}} x_{i_2}'\\
		\atom{b_{i_2+1}} x_{i_{2}+1} \atom{b_{i_2+2}} \cdots \atom{b_{i_k}}
		x_{i_k}^l \atom{\triangleleft_{\alpha_{k-1}}} x_{i_k} \atom{\triangleright_{\alpha_{k}}} x_{i_k}'
		\atom{b_{i_k+1}} x_{i_{k}+1} \atom{b_{i_k+2}} \cdots \atom{b_n} x_n,
	\end{align*}
	where $x_{i_1}^l, x_{i_1}^r, \hdots, x_{i_k}^l, x_{i_k}^r$ are new fresh variables.
	By construction, $D_2$ comes equipped with a "homomorphism" 
	$g\colon D_2 \to D_1$ which sends $x_i$ to $f(x_i)$,
	$x_{i_j}^l$ to $f(x_{i_j})^l$ and $x_{i_j}^r$ to $f(x_{i_j})^r$. Since $D_2$ is---by
	construction---a "canonical database" of $\delta_2$, this concludes the proof that 
	$\delta_1 \contained \delta_2$.

	\begin{claim}
		\AP\label{claim:canonization-multigraph-monotonic-r-to-l}
		If $\delta_1 \contained \delta_2$ then $\gamma_1 \contained \gamma_2$.
	\end{claim}

	The construction is dual to \Cref{claim:canonization-multigraph-monotonic-l-to-r} and left to the
	reader. Both claims yield $\axiomCanonMonotonicity$.

	We now show that \axiomStrongCanonCore{} holds: pick a "canonical database" $D_1 \cdb \delta_1$.
	For any $D'_1 \cdb \delta_1$, if $D'_1 \homto D_1$, then because of the letters in $\alphabetmarking$,
	it follows that for each "atom" $\alpha_i$ of $\gamma_1$, the "atom refinement" of
	$\alpha_i$ in $D'_1$ must be sent bijectively on the "atom refinement" of $\alpha_i$ in $D_1$,
	and so they are equal. It follows that the homomorphism $D'_1 \to D_1$ must actually be the identity,
	and hence $D_1$ is "maximal".
	The same argument applied to $D'_1 \defeq D_1$ shows that the only "homomorphism" from $D_1$ itself
	is the identity, and so in particular $D'_1$ is a "core".
	Lastly, because of the letters of $\alphabetmarking$, no atom of $\delta_1$ contains the empty word,
	and so in particular $D_1$ must be "non-degenerate@@db". Hence, \axiomStrongCanonCore{} holds.

	Since $\gamma_1$ is "fully contracted", so is $\delta_1$, which proves \axiomCanonContracted{}.
	For any language $L$, we have $L \subseteq \erasingmorphism[-1]{\alphabetmarking}[L]$, and so we have \axiomCanonContainment{}.
	Finally, \axiomCanonNonRed{} and \axiomCanonMarking{} trivially hold.
	
	Together with that fact that
	$\langle \gamma_1,\gamma_2 \rangle \mapsto \langle \delta_1, \delta_2\rangle$ preserves
	the underlying multigraphs, this shows that this is a solution to "strong canonization problem"
	for any "class of CRPQs" defined by restring the underlying class of multigraphs.
	Note also that an NFA for $\erasingmorphism[-1]{\alphabetmarking}[L]$ can be obtained
	from an NFA for $L$ by adding on every state a self-loop labelled by every possible letter of $\alphabetmarking$
	and hence, this algorithm can be implemented in logarithmic space.
\end{proof}

	Note however that if $L$ is a "simple regular expression", then
	$\erasingmorphism[-1]{\alphabetmarking}[L]$ does not need to be.
	Hence, the construction above does not work for "CRPQs" over "simple regular expressions".

\begin{lemma}
	\AP\label{lem:canonization-SREs}
	The "strong canonization problem" can be solved in polynomial time
	for the "class of CRPQs" over "simple regular expressions".
\end{lemma}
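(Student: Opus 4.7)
The plan is to adapt the construction of \Cref{lem:canonization-CRPQs} to preserve the "simple regular expression" property. The construction of $\delta_1$ transfers directly: for each "atom" $\alpha = x \atom{L_\alpha} y$ of $\gamma_1$, set the corresponding "atom" of $\delta_1$ to $x \atom{\triangleright_\alpha L_\alpha \triangleleft_\alpha} y$. Whenever $L_\alpha$ is an "SRE", the expression $\triangleright_\alpha L_\alpha \triangleleft_\alpha$ is also an "SRE", being the concatenation of two singleton blocks (each a degenerate letter disjunction) with $L_\alpha$. The obstacle is the definition of $\delta_2$, whose construction in \Cref{lem:canonization-CRPQs} uses the preimage language $\erasingmorphism[-1]{\alphabetmarking}[L]$, which is in general not an "SRE" even when $L$ is.

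To circumvent this, I would first preprocess both $\gamma_1$ and $\gamma_2$ into a normal form in which every "atom"'s language is a single "SRE" block of the form $a^+$ or $a_1 + \cdots + a_k$, by introducing intermediate variables to split concatenated "SRE" "atoms". This preserves "semantic equivalence" and runs in polynomial time. For each "atom" $\beta = x \atom{L_\beta} y$ of the normalized $\gamma_2$, I would then construct $\delta_2$'s corresponding "atom" as an "SRE"-safe over-approximation: when $L_\beta = a_1 + \cdots + a_k$ is a disjunction block, we extend it to include the marker letters from $\alphabetmarking$ that can appear at the corresponding position in "expansions" of $\delta_1$; when $L_\beta = a^+$ is a repetition block, we decompose the "atom" into a short chain of sub-"atoms" using intermediate variables, whose labels are designed to absorb exactly the marker patterns that may arise between $\gamma_1$-"atoms" in $\delta_1$'s "expansions".

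The hard part will be proving the monotonicity axiom $\axiomCanonMonotonicity$, since the "SRE" restriction severely limits the expressive power of $\delta_2$ compared to the original $\erasingmorphism[-1]{\alphabetmarking}$ construction. My plan is to give an explicit characterization of the labels of paths between $\gamma_1$-variables in any "expansion" of $\delta_1$---these have the precise marker-interleaved shape $\triangleright_{\alpha_1} w_{\alpha_1} \triangleleft_{\alpha_1} \cdots \triangleright_{\alpha_k} w_{\alpha_k} \triangleleft_{\alpha_k}$ with $w_{\alpha_i} \in L_{\alpha_i}$---and to show that these are exactly what the chain structure of $\delta_2$ accepts. The remaining axioms should follow as in \Cref{lem:canonization-CRPQs}: the uniqueness of the markers $\triangleright_\alpha, \triangleleft_\alpha$ yields $\axiomStrongCanonCore$ and $\axiomCanonNonRed$; the preprocessing to normal form ensures $\axiomCanonContracted$; the presence of markers in every "atom" of $\delta_1$ ensures $\axiomCanonMarking$; and the over-approximating design of $\delta_2$ ensures $\axiomCanonContainment$.
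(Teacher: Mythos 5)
There is a genuine gap in your construction of $\delta_2$. Keeping the bracket letters $\triangleright_\alpha,\triangleleft_\alpha$ around each atom of $\gamma_1$ forces $\delta_2$ to accept \emph{unboundedly many} marker insertions, which no SRE (nor any fixed chain of SRE-labelled atoms) can accept. Concretely, take $\gamma_1() = x \atom{a} x$ and $\gamma_2() = y \atom{a^+} z$, so that $\gamma_1 \contained \gamma_2$. The unique expansion of your $\delta_1$ is a cycle labelled $\triangleright_\alpha a \triangleleft_\alpha$, and the image of the atom expansion $y \atom{a^n} z$ of $\gamma_2$ lifts to a path labelled $(\triangleright_\alpha a \triangleleft_\alpha)^n$ in that database. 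For \axiomCanonMonotonicity{} you need $\delta_1 \contained \delta_2$, hence the atom (or chain of atoms) of $\delta_2$ replacing $y \atom{a^+} z$ must accept $(\triangleright_\alpha a \triangleleft_\alpha)^n$ for every $n$; but a chain of $j$ SRE-labelled atoms accepts only a concatenation of boundedly many blocks, each a finite letter-disjunction or $b^+$ for a single letter $b$, and such a language cannot contain $(\triangleright_\alpha a \triangleleft_\alpha)^n$ for all $n$. The same obstruction arises whenever the homomorphism witnessing $\gamma_1 \contained \gamma_2$ sends an $a^+$-atom expansion across arbitrarily many occurrences of variables of $\gamma_1$: the number of marker interleavings is not bounded by any function of the two queries, so no short chain of sub-atoms can absorb them. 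Your over-approximation is sound only for single-letter disjunction blocks, where the image is a single edge and no marker is ever inserted inside it.

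The paper's proof avoids this entirely by marking \emph{variables} rather than atoms: $\delta_1$ is $\gamma_1$ (after preprocessing) together with a fresh self-loop $x \atom{\marking_x} x$ on each variable, and $\delta_2 \defeq \gamma_2$ is left unchanged, so the $\erasingmorphism[-1]{\alphabetmarking}$ issue never appears and \axiomCanonMonotonicity{} is immediate (a canonical database of $\delta_2 = \gamma_2$ contains no marking letter, so any homomorphism into a canonical database of $\delta_1$ factors through the underlying canonical database of $\gamma_1$). The price is that \axiomStrongCanonCore{} and \axiomCanonNonRed{} no longer follow from bracket uniqueness; the paper instead first removes all locally redundant atoms of $\gamma_1$ and normalizes every atom to a single block $a$ or $a^+$, then argues strong-ontoness by a case analysis on the block type, using crucially that an SRE never contains $\varepsilon$. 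Switching to variable marking is the missing idea needed to make your argument go through.
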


	Given a "CRPQ" $\gamma$, we say that an "atom" $x\atom{L} y$ is \AP""locally redundant"" if
	there exists a path of "atoms" $z_0 \atom{L_1} z_1 \atom{L_2} \cdots \atom{L_n} z_n$
	in which $x\atom{L} y$ does not occur, and with $z_0 = x$ and $z_n = x$
	where $L_1 L_2\cdots L_n \subseteq L$.

\begin{proof}[Proof of \Cref{lem:canonization-SREs}]
	Fix a pair $\langle \gamma_1, \gamma_2 \rangle$ of "CRPQs".
	From $\gamma_1$, we start by picking a "locally redundant atom" (if any), and remove it.
	We iterate this process, until we get a "CRPQ" with no "locally redundant atom" $\gamma'_1$.
	By construction, it is "equivalent" to $\gamma_1$.\footnote{Note however that in general $\gamma'_1$ cannot be
	obtained by only keeping all atoms of $\gamma_1$ which are not "locally redundant": for instance, if $\gamma_1() =
	x \atom{L} y \land x \atom{L} y$, then all atoms are "locally redundant".
	Instead, we need to remove such atoms one after the other.}
	Moreover, $\gamma'_1$ can be computed in polynomial time.
	We then refine in $\gamma'_1$ each atom so that each "atom" is either labelled by $a$ or
	$a^+$ for some $a\in \A$.

	We then define $\langle \delta_1, \delta_2 \rangle$, where
	$\delta_2 \defeq \gamma_2$ and
	\[
		\delta_1 = \Bigl(
			\bigwedge_{x \atom{L} y \in \gamma'_1} x \atom{L} y
		\Bigr)
		\land 
		\Bigl(
			\bigwedge_{x \in \vertex{\gamma'_1}} x \atom{\marking_x} x
		\Bigr),
	\]
	and we let $\alphabetmarking \defeq \{\marking_x \mid x \in \vertex{\gamma'_1}\}$.

	Next, we show that \axiomCanonMonotonicity{} holds:
	if $\gamma_1 \contained \gamma_2$ then $\delta_1 \contained \gamma'_1 \semequiv \gamma_1 \contained \gamma_2 = \delta_2$,
	and dually if $\delta_1 \contained \delta_2$ then let $G_1 \cdb \gamma'_1$, and let $D_1$ be the associated "canonical database".
	Since $\delta_1 \contained \delta_2$, there exists $D_2 \cdb \delta_2$ "st" $D_2 \homto D_1$ but since $D_2$ contains no
	letter from $\alphabetmarking$, we actually get a "homomorphism" $D_2 \homto G_1$, and so $\gamma'_1 \contained \delta_2$
	"ie" $\gamma_1 \contained \gamma_2$.

	For \axiomStrongCanonCore{}, by definition of "simple regular expressions",
	no language labelling an "atom" of $\delta_1$ contains the empty word, and hence
	every "canonical database" of $\delta_1$ is "non-degenerate@@db".
	Then, let $f\colon D'_1 \to D_1$ be a "homomorphism" between "canonical databases"
	of $\delta_1$. Because of the letters of $\alphabetmarking$,
	$f$ must send $x \in \vertex{\delta_1} \subseteq \vertex{D_1}$
	onto $x \in \vertex{D'_1}$. We then claim that $f$ is "strong onto".
	Let $\alpha \defeq x \atom{L} y$ be an "atom" of $\delta_1$.
	We consider its "atom refinement" in $D_1$, and we want to show that it is
	\emph{included in} the image of the "atom refinement" of $\alpha$ in $D'_1$:
	\begin{itemize}
		\item if $L = \{\marking_x\}$, this is trivial;
		\item if $L = \{a\}$ for some letter $a$, then since
			$\alpha$ is not "locally redundant" in $\gamma'_1$,
			there are no other $a$-edge from $x$ to $y$ in $D_1$ (or $D'_1$),
			and so the unique $a$-edge from $x$ to $y$ in $D'_1$
			must be sent on the unique $a$-edge from $x$ to $y$ in $D_1$;
		\item if $L = a^+$ for some letter $a$, then the "atom refinement" of $\alpha$ in
			$D'_1$, say $x \atom{a^k} y$ ($k \geq 1$) is sent via $f$ on a path
			from $x$ to $y$ in $D_1$. If the "atom refinement" of $\alpha$ in $D_1$
			is included in this path, we are done; otherwise, when lifting
			this path to $\delta_1$, we would obtain a path of "atoms"
			$x \atom{L_1} \cdots \atom{L_n} y$ "st" $a^k \in L_1\cdots L_n$.
			By definition of "simple regular expressions", all $L_i$'s must
			be either $a$ or $a^+$, and hence in all cases $L_1 \cdots L_n \subseteq a^+$,
			contradicting that $\alpha$ is not "locally redundant" in $\gamma'_1$. 
	\end{itemize}
	Thus, we have \axiomStrongCanonCore{}.

	Similarly, \axiomCanonNonRed{} holds because all "atoms" of $\gamma'_1$ are labelled
	by $a$ or $a^+$ and we removed "locally redundant atoms".
	Thanks to the self-loops, $\delta_1$ is "fully contracted" and so \axiomCanonContracted{} holds.
	Moreover, \axiomCanonContainment{} holds trivially since $\gamma_2 = \delta_2$,
	and so does \axiomCanonMarking{} by definition of $\delta_1$.
\end{proof}
\end{toappendix}

Motivated by \Cref{lem:reduction-containment-to-minimization}
we show in the appendix that several reasonable classes admit
a polynomial-time algorithm for the "strong canonization problem"---see \Cref{lem:canonization-CRPQs,lem:canonization-SREs}.

\begin{corollaryrep}
	\AP\label{coro:lowerbounds}
	The "CRPQ" and "UCRPQ" "minimization problems" are:
	\begin{enumerate}
		\item \AP\label{expspace-h:pw1}"ExpSpace"-hard, even
		if restricted to queries of path-width at most 1,
		\item\AP\label{pspace:forest} "PSpace"-hard when restricted to "forest-shaped CRPQs",\footnote{By \AP""forest-shaped CRPQs"" we mean queries whose "underlying graph" has no undirected cycle.}
		\item\AP\label{pip2:crpqsre} "PiP2"-hard when restricted to "CRPQs" over "simple regular expressions".
	\end{enumerate}
	All hardness results are under polynomial-time reductions.
\end{corollaryrep}
\begin{proof}
	From \Cref{lem:reduction-containment-to-minimization,lem:canonization-CRPQs,lem:canonization-SREs} we can derive the stated hardness results when combined with known hardness results for the "containment problem":
	\Cref{expspace-h:pw1} follows from the "ExpSpace" lower bound of \cite[Lemma 8]{figueira_containment_2020} (or its strengthening \Cref{prop:variation-figueira}).
	\Cref{pspace:forest} follows from the trivial "PSpace" lower bound from regular language containment which is also the lower bound for one-atom "CRPQs".
	\Cref{pip2:crpqsre} follows from the known "PiP2"-lower bound for "CRPQ(SRE)" queries implied by \cite[Theorem 4.2]{FigueiraGKMNT20}.
\end{proof}

\section{Discussion}

Several open problems are left by our work, more prominently, the complexity gap for "minimization" of "CRPQs". Below we discuss further avenues for future research.

\paragraph{Variable minimization}
\AP\label{sec:discussion}
\AP\label{sec:varmin}
Another approach for an algorithm for query answering of a "(U)CRPQs" $\gamma$ on a "graph database" $G$ is by first guessing a variable assignment $f: \vars(\gamma) \to \vertex{G}$ and then checking, for each "atom" $x \atom{L} y$ of $\gamma$, that there is a path in $G$ from $f(x)$ to $f(y)$ with label in $L$. 
This implementation approach  privileges minimizing the number of \emph{variables} as opposed to the number of \emph{atoms} of a "(U)CRPQ", and gives rise to the corresponding \reintro{variable-minimization problem(s)}. From a practical perspective, as already mentioned in the Introduction, systems commonly evaluate CRPQs via join algorithms. Recent \emph{worst-case optimal} joins algorithms work by ordering the variables and assigning potential values to these, and hence the number of variables may also be a relevant parameter in these cases~\cite{cucumides-icdt23,milleniumDB24}. 
\decisionproblem{""Variable-minimization problem"" for "CRPQs" ("resp" for "UCRPQ")}
{A finite alphabet $\A$, a  "CRPQ" ("resp" "UCRPQ") $\gamma$ over $\A$ and $k \in \N$.}
{Does there exists a "CRPQ" ("resp" "UCRPQ") $\delta$ over $\A$ with at most $k$ variables 
("resp" whose every "CRPQ" has at most $k$ variables) "st" $\gamma \semequiv \delta$?}
As before, a "(U)CRPQ" is \AP""variable minimal"" if there is no "equivalent" "(U)CRPQ" smaller in the number of variables.
It is worth observing that for "conjunctive queries" (and for "tree patterns") minimizing the number of variables or minimizing the number of atoms is equivalent: a query is minimal in the number of variables "iff" it is minimal in the number of atoms.
However, for "CRPQs" and "UCRPQs" it is not: $\gamma(x,y) = x \atom{a} y \land x \atom{a+b} y$ is "variable minimal", but it is not (atom) "minimal" since it is "equivalent" to $\gamma'(x,y) = x \atom{a} y$. We further conjecture that there are (atom) "minimal" "CRPQs" which are not "variable minimal".
\begin{conjecture}
  There exist (atom) "minimal" "CRPQs" which are not "variable minimal".
\end{conjecture}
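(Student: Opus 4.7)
The strategy is to construct an explicit witness pair $(\gamma, \delta)$ of equivalent "CRPQs" where $\gamma$ is "minimal" in atoms but $\delta$ has strictly fewer variables. Since every "minimal" "CRPQ" is "fully contracted" by \Cref{fact:produce-fully-contracted}, every variable of $\gamma$ must be "external", so any candidate $\delta$ with fewer variables necessarily collapses external variables of $\gamma$, typically at the cost of extra atoms such as self-loops or parallel edges.

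For the construction of $\gamma$, I would adapt the ``parity'' construction used in the proof of \Cref{prop:unionsmatter}, which already exhibits a "CRPQ" that is atom-minimal among "CRPQs" but becomes smaller as a "UCRPQ". A similar parity/case-analysis phenomenon is a natural candidate source of variable-compression: several external variables may each serve as endpoints of a distinct semantic case (say, even vs.\ odd path lengths), yet the overall semantics can potentially be captured by fewer vertices equipped with a suitable self-loop language encoding both cases. The difficulty is to tune the regular expressions so that no single atom is "redundant" --- ensuring atom-minimality --- while the projection onto fewer vertices still preserves the branch-level semantics.

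Atom-minimality of $\gamma$ would be certified using \Cref{coro:strong-min}, by exhibiting a "hom-minimal" "expansion" $\anexpansion$ of $\gamma$ whose "core" has "segment graph" with exactly $\nbatoms{\gamma}$ edges. Then \Cref{thm:structure-theorem} forces any equivalent "UCRPQ" to contain this segment graph as a "minor" of its "underlying graph", so in particular it must have at least $\nbatoms{\gamma}$ atoms. Equivalence $\gamma \semequiv \delta$ would then be verified directly on "expansions" via \Cref{prop:cont-char-exp-st}.

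The main obstacle is producing such a pair $(\gamma, \delta)$: merging external variables of $\gamma$ generically strengthens the query, since distinct variables need not map to equal vertices in satisfying assignments. The language choices in $\gamma$ must therefore ensure that, in every "expansion" of $\gamma$, the intermediate vertices admit a canonical projection onto fewer vertices on which the self-loops of $\delta$ can be realized, without introducing any shortcut that would simultaneously allow atoms of $\gamma$ itself to be removed or contracted. Should a direct construction prove elusive, an alternative route is to derive a refinement of \Cref{thm:structure-theorem} tracking also the number of vertices required in the "minor", and either use it to refute the conjecture outright or to drive a reduction-based existence argument from a problem exhibiting a known gap between atom- and variable-count measures.
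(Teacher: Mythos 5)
This statement is left as an open \emph{conjecture} in the paper: the authors provide no proof, and to the best of the paper's knowledge no witness pair is known. Your proposal, by its own admission, does not close this gap either --- you describe a plausible plan of attack (adapt the parity construction of \Cref{prop:unionsmatter}, certify atom-minimality via "strong minimality" through \Cref{coro:strong-min} and \Cref{thm:structure-theorem}, verify equivalence on "expansions" via \Cref{prop:cont-char-exp-st}) but you never exhibit the pair $(\gamma,\delta)$, and you explicitly flag that ``the main obstacle is producing such a pair.'' That obstacle \emph{is} the conjecture; everything else in your write-up is standard machinery that the paper already supplies. So what you have is a research programme, not a proof.

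Two smaller technical remarks. First, your claim that a "fully contracted" query has only "external" variables is not quite right: a "cyclic segment" consisting of a single self-loop atom $x \atom{L} x$ on a non-output $x$ is fully contracted yet $x$ is "internal"; this does not affect your plan but should be stated carefully. Second, and more substantively, \Cref{thm:structure-theorem} as stated only controls the \emph{edge} count of the "minor" (hence the number of "atoms"), not the vertex count, so even a successful instantiation of your strategy would certify atom-minimality of $\gamma$ but would give you no lower bound on the number of variables of queries equivalent to $\gamma$ --- which is fine for this conjecture (you only need an upper bound there, via an explicit $\delta$), but your proposed ``refinement of \Cref{thm:structure-theorem} tracking also the number of vertices'' would face the genuine difficulty that "contracting internal variables" changes vertex counts in ways the current minor-based argument does not see. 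Until a concrete $(\gamma,\delta)$ is produced and both the atom-minimality of $\gamma$ and the equivalence $\gamma \semequiv \delta$ are verified, the conjecture remains open.
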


By adaptations of the algorithms of \Cref{sec:upperCRPQ,sec:upperUCRPQ} we can derive some upper bounds, which are likely to be sub-optimal.
\begin{theorem}
  The "variable-minimization problem" for "CRPQs" is in "4ExpSpace" and for "UCRPQs" in "2ExpSpace". Both problems are "ExpSpace"-hard.
\end{theorem}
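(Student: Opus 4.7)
The proof has three parts: a 4ExpSpace upper bound for CRPQ variable-minimization, a 2ExpSpace upper bound for UCRPQ variable-minimization, and the ExpSpace lower bounds.

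For the CRPQ upper bound, the plan is to adapt the enumeration algorithm of \Cref{thm:2expspace-min-crpqs}. The main ingredient will be a strengthened version of \Cref{lemma:crpq-size-bound}: if $\gamma \semequiv \alpha$ and $\alpha$ has $k$ variables, then there exists $\beta$ equivalent to $\gamma$ with the same underlying multigraph as $\alpha$ and NFAs of size bounded by a double-exponential in $\size{\gamma}$. Unlike the atom case, a multigraph on $k$ vertices may have arbitrarily many edges; we therefore additionally bound the number of essentially distinct atoms between each pair of variables by the number of distinct NFAs of the bounded size---triple-exponential in $\size{\gamma}$---since two atoms with the same language are redundant. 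This yields $\size{\beta}$ triple-exponential and at most quadruple-exponentially many candidate $\beta$'s. Enumerating candidates and testing each equivalence $\gamma \semequiv \beta$ via \cite[Proposition~3.11]{FM2023semantic} fits within 4ExpSpace.

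For the UCRPQ upper bound, the plan is to adapt the maximal under-approximation approach of \Cref{coro:upperbound-ucrpqs}, using the graph class $\+C_k$ of directed multigraphs with at most $k$ vertices. This class is closed under minors, but unlike the atom case of \Cref{lemma:approximation-for-finclass} it is not finite, since the number of edges is unbounded. To obtain a tractable representation of $\AppInf{\Gamma}{\+C_k}$, the contraction-length bound in the proof of \Cref{lemma:approximation-for-finclass} goes through unchanged, while the refinement-length bound must be adapted by combining it with the NFA-bounding technique of \Cref{lemma:crpq-size-bound}: we restrict attention to queries $\alpha \in \CRPQ[\+C_k]$ whose underlying multigraphs have exponentially many edges and whose NFAs are of exponential size, which suffices because any equivalent UCRPQ $\Delta$ may be rewritten (per disjunct) into this form via the UCRPQ version of \Cref{lemma:crpq-size-bound}. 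The resulting UCRPQ representation has doubly-exponentially many disjuncts, each of exponential size with exponentially many atoms, and the equivalence test against $\Gamma$ then fits within 2ExpSpace by \cite[Proposition~3.11]{FM2023semantic}. The principal technical obstacle lies here: calibrating the bounds on edges per graph and NFA sizes per atom so that the final approximation meets the 2ExpSpace budget, since a naive analysis yields only 3ExpSpace or worse.

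For the lower bounds, ExpSpace-hardness of CRPQ variable-minimization is established by \Cref{thm:variable-minimization-lowerbound} via a reduction from the hard containment instances of \Cref{prop:variation-figueira}. For UCRPQs, essentially the same reduction applies: every canonical database of the constructed $\delta$ carries the $\marking$-self-loop, so every disjunct of a candidate small equivalent UCRPQ must itself realize this self-loop (at its unique variable), reducing the case analysis of \Cref{thm:variable-minimization-lowerbound} to a disjunct-by-disjunct argument that mirrors the CRPQ proof.
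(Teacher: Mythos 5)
Your CRPQ upper bound and your treatment of the lower bounds essentially coincide with the paper's argument: the paper likewise observes that the double-exponential NFA bound of \Cref{lemma:crpq-size-bound} leaves only triple-exponentially many distinct atom labels, hence triple-exponentially many non-redundant parallel edges on $k$ vertices and quadruple-exponentially many candidate queries, each testable against $\gamma$ in 4ExpSpace via \cite[Proposition 3.11]{FM2023semantic}; for hardness it also just invokes \Cref{thm:variable-minimization-lowerbound}.

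The gap is in the UCRPQ upper bound, precisely at the point you flag as the ``principal technical obstacle.'' You propose to tame the infinite class $\+C_k$ by appealing to a UCRPQ version of \Cref{lemma:crpq-size-bound} with exponential-size NFAs, and you concede you cannot calibrate the bounds below 3ExpSpace. That lemma is the wrong tool: it only yields \emph{double}-exponential NFAs, so the number of distinct atom labels it controls is triple-exponential, and the resulting count of parallel edges and candidate disjuncts overshoots 2ExpSpace exactly as you fear. The paper does not use \Cref{lemma:crpq-size-bound} here at all. Instead it reuses the internal structure of the approximation built in \Cref{lemma:approximation-for-finclass}: every atom of a disjunct of $\App{\Gamma}{\+C_k}{m}$ (for the $m$ of that lemma) is labelled by a concatenation of boundedly many---polynomially many in the parameters of the construction---"sublanguages" of $\Gamma$ or copies of $\A^*$. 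Hence there are only exponentially many distinct atom labels, so between any two of the $k$ vertices one never needs more than exponentially many parallel edges, and $\AppInf{\Gamma}{\+C_k}$ is already equivalent to $\AppInf{\Gamma}{\+C'_k}$ where $\+C'_k$ is the \emph{finite} minor-closed subclass of $\+C_k$ with at most exponentially many parallel edges. \Cref{lemma:approximation-for-finclass} then applies to $\+C'_k$ directly, producing doubly-exponentially many disjuncts, each of exponential size with exponentially many atoms and small NFAs, and the containment test of \cite[Proposition 3.11]{FM2023semantic} lands in 2ExpSpace. To repair your proof, replace the appeal to the NFA-bounding technique by this counting of concatenations of sublanguages; no analogue of \Cref{lemma:crpq-size-bound} for unions is needed.
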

\begin{proof}
  For the case of "CRPQ" "variable-minimization problem", it suffices to observe, in the proof of \Cref{lemma:crpq-size-bound}, that since each NFA of the proof has size double-exponential, there there cannot be more than a triply-exponential number of distinct NFA, and hence that the underlying multigraph of queries to be considered has a triply-exponential number of edges. Thus, there are `only' a quadruply-exponential number of such triply-exponential queries. For each such query we test if it is equivalent to the original query in "4ExpSpace".

  For the case of "UCRPQ" "variable-minimization problem", let $\+C_k$ be the (infinite) set of multigraphs having at most $k$ vertices. Via the same argument as in the proof of \Cref{lemma:approximation-for-finclass}, 
  we obtain that each $\CRPQ[\+C_k]$ in the union of "CRPQs" expressing
  $\AppInf{\Gamma}{\+C_k}=\App{\Gamma}{\+C_k}{O(\nbatoms{\Gamma}\cdot r_{\Gamma}\nbatoms{\+C})}$
  has atoms consisting in a concatenation of
  at most $O(\nbatoms{\Gamma}\cdot\nbatoms{\+C})$ "sublanguages" of $\Gamma$. Hence, there cannot be more than $2^{O(\nbatoms{\Gamma}\cdot\nbatoms{\+C})}$ distinct atoms between two variables, and $\App{\Gamma}{\+C_k}{O(\nbatoms{\gamma}\cdot r_{\Gamma}\cdot \nbatoms{\+C})} \equiv \App{\Gamma}{\+C'_k}{O(\nbatoms{\Gamma}\cdot r_{\Gamma}\nbatoms{\+C})}$ for $C'_k$ being the finite subclass of $\+C_k$ having graphs with no more than $2^{O(\nbatoms{\Gamma}\cdot\nbatoms{\+C})}$ parallel edges.
  Hence, there is a double-exponential number of exponential queries to test for "equivalence" with $\Gamma$, which yields a "2ExpSpace" upper bound.

  The lower bounds follows by a similar idea as \Cref{thm:minimization-lowerbound} and can be found in \Cref{thm:variable-minimization-lowerbound} of the Appendix.
\end{proof}

\medskip

\paragraph{Tree patterns}
We believe that the techniques of \Cref{sec:upperUCRPQ} should also yield a method to compute "maximal under-approximations" for unions of tree patterns, as well as a "PiP2" upper bound for the minimization problem of unions of tree patterns, contrasting with
the "SigmaP2"-completeness of minimization of tree patterns 
proven by Czerwiński, Martens, Niewerth \& Parys \cite[Theorem 3.1]{min-tree-patterns}.
We leave the details for a future long version of this article.

\begin{toappendix}
\subsection{Tree patterns}
\AP\label{sec:apdx-tree-patterns}
A \AP""tree pattern""---see "eg" \cite[\S 2.2]{min-tree-patterns}---over node variables
$\A$ is a directed tree, whose nodes have a label from $\A\sqcup\{*\}$, and whose
edges are partition into simple edges and transitive edges.

\begin{figure}
	\centering
	\begin{minipage}{.45\linewidth}
		\centering
		\includegraphics{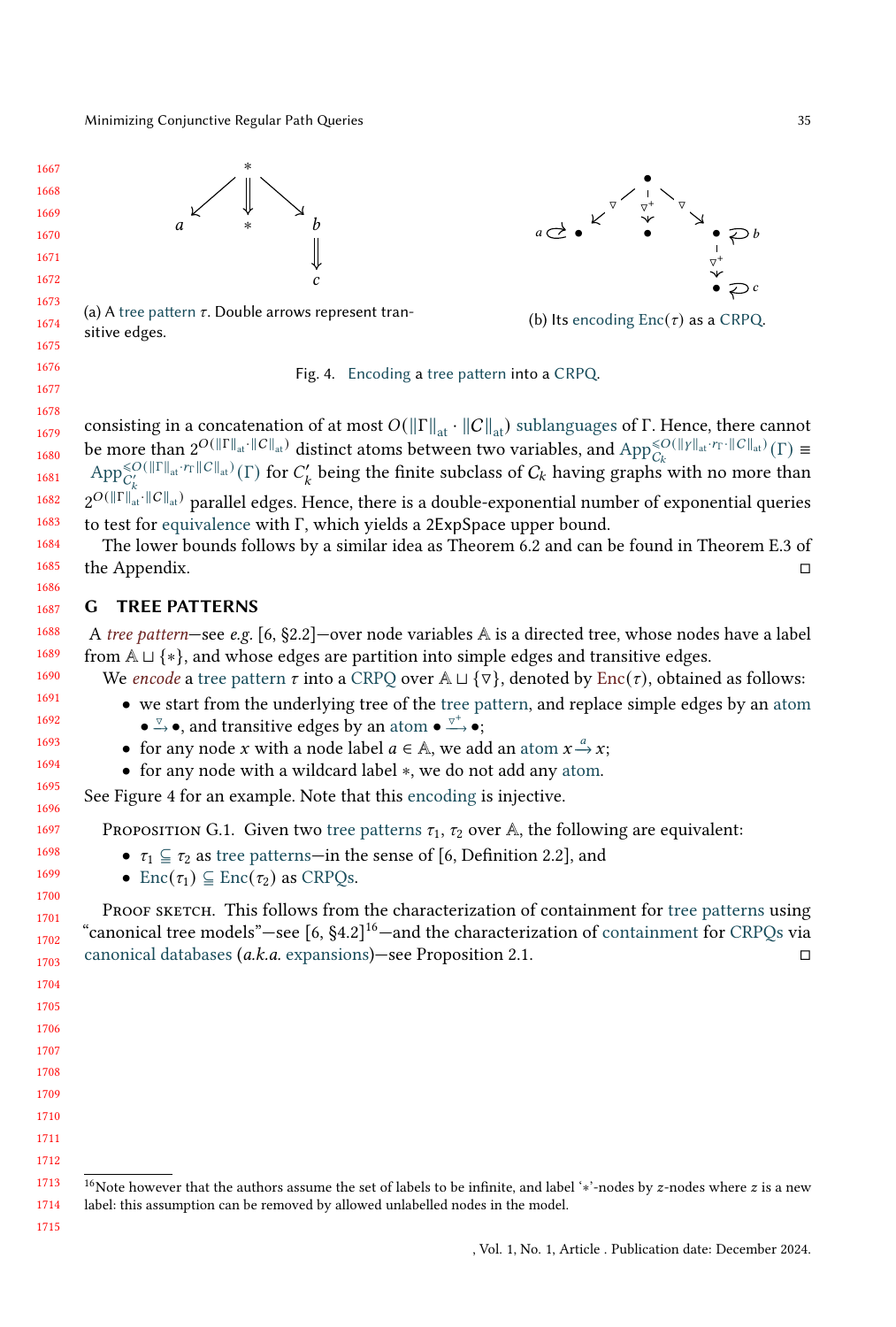}
		\subcaption{A "tree pattern" $\tau$. Double arrows represent transitive edges.}
	\end{minipage}
	\hfill
	\begin{minipage}{.45\linewidth}
		\centering
                 \includegraphics{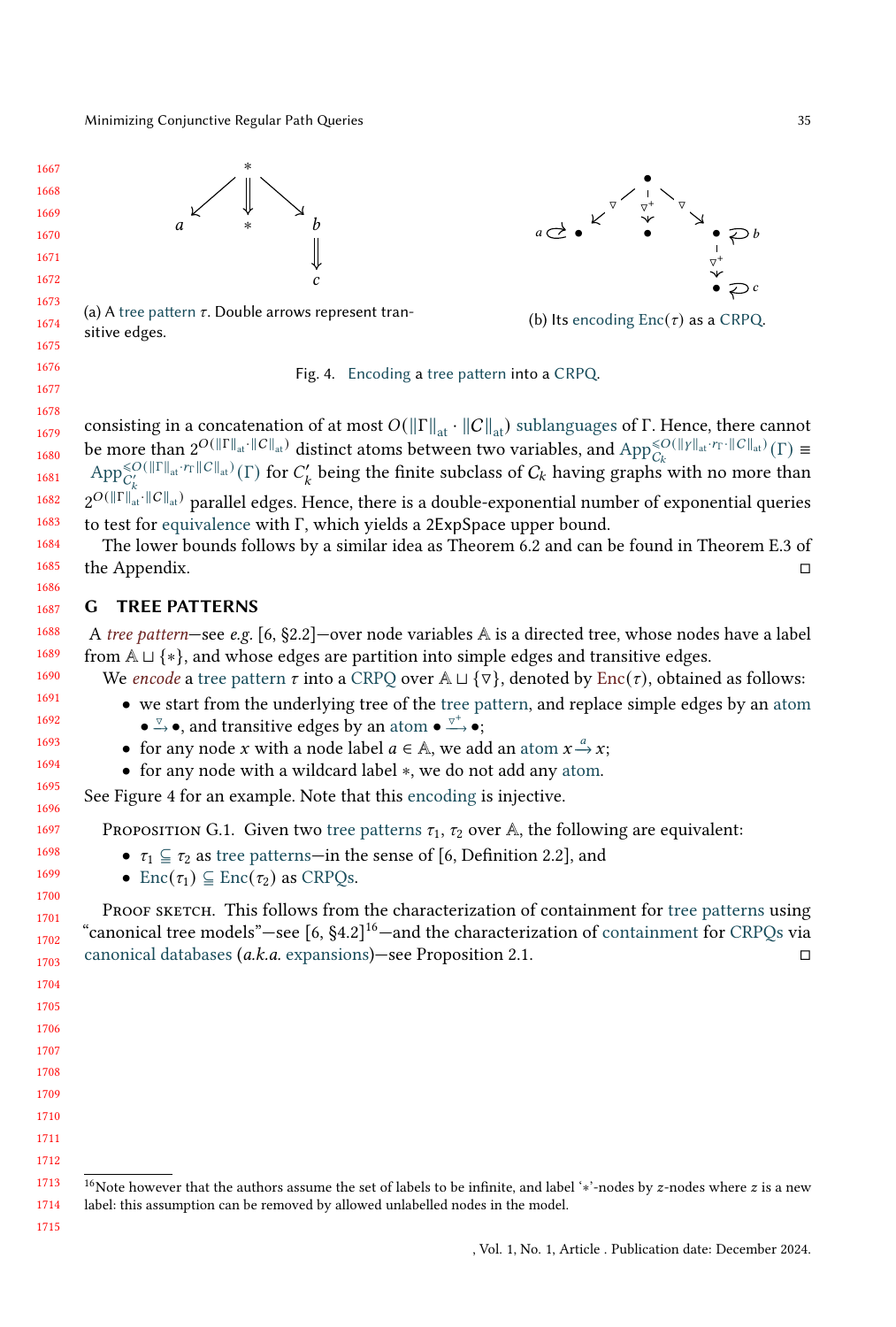}
		\subcaption{Its "encoding" $\Encoding(\tau)$ as a "CRPQ".}
	\end{minipage}
	\caption{
		\AP\label{fig:encoding-tree-pattern}
		"Encoding" a "tree pattern" into a "CRPQ".
	}
\end{figure}

We \AP""encode"" a "tree pattern" $\tau$ into a "CRPQ" over $\A\sqcup\{\marking\}$, denoted
by \AP$\intro*\Encoding(\tau)$, obtained as follows:
\begin{itemize}
	\item we start from the underlying tree of the "tree pattern",
		and replace simple edges by an "atom" $\qvar \atom{\marking} \qvar$,
		and transitive edges by an "atom" $\qvar \atom{\marking^+} \qvar$;
	\item for any node $x$ with a node label $a \in \A$, we add an "atom" $x \atom{a} x$;
	\item for any node with a wildcard label $*$, we do not add any "atom".
\end{itemize}
See \Cref{fig:encoding-tree-pattern} for an example. Note that this "encoding" is injective.

\begin{proposition}
	Given two "tree patterns" $\tau_1$, $\tau_2$ over $\A$, the following are equivalent:
	\begin{itemize}
		\item $\tau_1 \contained \tau_2$ as "tree patterns"---in the sense of
			\cite[Definition 2.2]{min-tree-patterns}, and
		\item $\Encoding(\tau_1) \contained \Encoding(\tau_2)$ as "CRPQs".
	\end{itemize}
\end{proposition}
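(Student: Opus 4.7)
The plan is to use characterizations of both containment relations in terms of evaluation maps and to establish a tight structural correspondence between them. The key observation is that for any graph database $G$ (not necessarily tree-shaped), an "evaluation map" $f\colon \vars(\Encoding(\tau)) \to \vertex{G}$ of $\Encoding(\tau)$ into $G$ is exactly the same data as a tree-pattern evaluation of $\tau$ in the "data-tree view" of $G$, in which $a$-self-loops play the role of labels, $\marking$-edges play the role of child edges, and $\marking$-paths play the role of descendant relations. This bijection holds because the atoms of $\Encoding(\tau)$ align one-to-one with the tree-pattern satisfaction conditions: label atoms $x \atom{a} x$ correspond to label conditions, $x \atom{\marking} y$ atoms to child conditions, and $x \atom{\marking^+} y$ atoms to descendant conditions.

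For the forward implication, I would invoke the standard homomorphism characterization of tree-pattern containment, yielding a tree-pattern homomorphism $h\colon \tau_2 \to \tau_1$ preserving labels, child edges, and descendant relations. Given any database $G$ satisfying $\Encoding(\tau_1)$ with witnessing evaluation map $f$, the composition $f \circ h$ is an evaluation map of $\Encoding(\tau_2)$ into $G$: each $\marking$-atom of $\Encoding(\tau_2)$ is matched because $h$ sends child edges of $\tau_2$ to child edges of $\tau_1$ and $f$ then gives a single $\marking$-edge in $G$, each $\marking^+$-atom is matched by concatenating the $\marking$-edges and $\marking$-paths in $G$ arising from the chain of child/descendant edges in $\tau_1$ along the image of the corresponding descendant edge under $h$, and labels are preserved by $h$. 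For the reverse direction, take any data tree $T$ satisfying $\tau_1$ and build the graph database $G_T$ by the same recipe as $\Encoding$: the vertex set of $T$, with tree edges becoming $\marking$-edges and each $a$-labeled node receiving an $a$-self-loop. A tree-pattern evaluation of $\tau_1$ on $T$ transfers immediately to an evaluation map of $\Encoding(\tau_1)$ on $G_T$, so $G_T$ satisfies $\Encoding(\tau_1)$; by the CRPQ containment assumption $G_T$ satisfies $\Encoding(\tau_2)$, and the witnessing evaluation map, by the structural correspondence above, reinterprets as a tree-pattern evaluation of $\tau_2$ on $T$, showing that $T$ satisfies $\tau_2$.

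The main subtlety to handle is reconciling the precise tree-pattern semantics used in \cite{min-tree-patterns} with CRPQ evaluation, in particular whether tree-pattern matching requires root-to-root mapping. If so, one must verify that any evaluation map of $\Encoding(\tau_1)$ on $G_T$ is forced to send the root of $\tau_1$ to the root of $T$; this typically follows structurally from the root of $G_T$ being the unique vertex with no incoming $\marking$-edge (and the root of $\tau_1$ having no incoming atom in $\Encoding(\tau_1)$), or can be made explicit by enriching the encoding with a root-marking self-loop on a fresh letter. A similar inspection is required for the child-vs-descendant convention of the tree-pattern homomorphism, but this is automatic from the direct structural correspondence once the two formalisms are aligned.
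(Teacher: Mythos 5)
Your reverse implication (from CRPQ containment back to tree-pattern containment) is sound and is in the spirit of the paper's argument, which runs \emph{both} directions through canonical structures: canonical tree models of $\tau_1$ on the tree-pattern side and expansions (canonical databases) of $\Encoding(\tau_1)$ on the CRPQ side, observing that these correspond to one another and that satisfaction of $\tau_2$, respectively existence of an evaluation map of $\Encoding(\tau_2)$, transfers along the correspondence.

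The forward implication, however, has a genuine gap: you derive from $\tau_1 \contained \tau_2$ a tree-pattern homomorphism $h\colon \tau_2 \to \tau_1$, but no such homomorphism need exist. For tree patterns with both wildcards and descendant edges --- precisely the class of \cite{min-tree-patterns} that the encoding is designed for, since wildcard nodes receive no label atom and transitive edges become $\marking^+$ atoms --- the homomorphism criterion is sound but not complete: there are containments $\tau_1 \contained \tau_2$ witnessed by no homomorphism from $\tau_2$ to $\tau_1$ (Miklau--Suciu). This incompleteness is the very reason tree-pattern minimization is $\Sigma_2^p$-hard and non-redundancy differs from minimality, facts this paper invokes elsewhere; if your premise held, tree patterns would behave like CQs. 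The repair is to argue the forward direction the same way you argue the reverse one: by \Cref{prop:cont-char-exp-st} it suffices to exhibit an evaluation map of $\Encoding(\tau_2)$ into each canonical database $G$ of $\Encoding(\tau_1)$; such a $G$ is exactly the graph-database rendering of a canonical tree model $T$ of $\tau_1$ (transitive edges expanded into $\marking$-paths, with intermediate and wildcard nodes unlabelled), the hypothesis $\tau_1 \contained \tau_2$ yields an embedding of $\tau_2$ into $T$, and that embedding reads back as the desired evaluation map. Your root-mapping caveat deserves the check you describe, but it is orthogonal to this issue.
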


\begin{proof}[Proof sketch]
	This follows from the characterization of containment for "tree patterns"
	using ``canonical tree models''---see \cite[\S4.2]{min-tree-patterns}\footnote{Note
	however that the authors assume the set of labels to be infinite, and label `$*$'-nodes
	by $z$-nodes where $z$ is a new label: this assumption can be removed by
	allowed unlabelled nodes in the model.}---and the characterization of "containment" for "CRPQs"
	via "canonical databases" ("aka" "expansions")---see \Cref{prop:cont-char-exp-st}.
\end{proof}

\end{toappendix}

We do not fully understand the relation between tree pattern minimization and "CRPQ minimization", 
we conjecture that if a tree pattern is minimal among tree patterns, then its "encoding" as a 
"CRPQ"---see \Cref{sec:apdx-tree-patterns} for a definition---should also be minimal,
up to "contracting internal variables", among "CRPQs" but we have failed so far to prove this.

\begin{acks}
  Diego Figueira is partially supported by ANR AI Chair INTENDED, grant ANR-19-CHIA-0014. Miguel Romero is funded by the National Center for Artificial
Intelligence CENIA FB210017, Basal ANID.
\end{acks}

\bibliographystyle{ACM-Reference-Format}
\bibliography{long,biblio}

\end{document}